\documentclass[11pt]{article}

\usepackage[utf8]{inputenc}
\usepackage[T1]{fontenc}

\usepackage[a4paper,margin=1in]{geometry}

\usepackage{comment}
\usepackage{xcolor}
\usepackage{hyperref}

\hypersetup{%
  colorlinks=true,
  linkbordercolor=red,
  citecolor=green!70!black,
}

\usepackage{authblk} 
\usepackage{titlesec} 
 

\usepackage{amsthm, amsmath, amssymb}
\usepackage{float}

\usepackage{enumitem}
\usepackage{needspace}
\usepackage{thmtools}
\usepackage{thm-restate}

\usepackage[linesnumbered,algoruled,boxed,lined]{algorithm2e}
\usepackage{complexity} 

\usepackage{needspace}
\usepackage{graphicx}
\usepackage{cleveref} 
\usepackage{soul}

\usepackage{tikz}
\usetikzlibrary{positioning,backgrounds,patterns,calc}

\usetikzlibrary{patterns}

\usepackage{subcaption}
\captionsetup{compatibility=false}

\newcommand{\SoPP}{\textsc{Isometric Path Partition}\xspace}

\newcommand{\setrep}{\texttt{set-rep}}

\newcommand{\ETH}{{\textsf{ETH}}}

\newcommand{\diam}{\textsf{diam}}
\newcommand{\tw}{\textsf{tw}}

\newcommand{\pw}{\textsf{pw}}

\newcommand{\calP}{\mathcal{P}}
\newcommand{\calQ}{\mathcal{Q}}
\newcommand{\calS}{\mathcal{S}}

\newcommand{\true}{\texttt{True}}
\newcommand{\false}{\texttt{False}}

\newcommand{\yes}{\textsc{Yes}}

\def\th{\text{th}} 

\makeatletter
\newcommand{\customlabel}[2]{%
\protected@write \@auxout {}{\string \newlabel {#1}{{#2}{}}}}
\makeatother


\newcommand{\defproblem}[3]{
  \vspace{1mm}
\noindent\fbox{
  \begin{minipage}{0.96\textwidth}
  \begin{tabular*}{\textwidth}{@{\extracolsep{\fill}}lr} #1 \\ \end{tabular*}
  {\bf{Input:}} #2  \\
  {\bf{Question:}} #3
  \end{minipage}
  }
  \vspace{1mm}
}

\newcommand{\defproblemout}[3]{
  \vspace{1mm}
\noindent\fbox{
  \begin{minipage}{0.96\textwidth}
  \begin{tabular*}{\textwidth}{@{\extracolsep{\fill}}lr} #1 \\ \end{tabular*}
  {\bf{Input:}} #2  \\
  {\bf{Output:}} #3
  \end{minipage}
  }
  \vspace{1mm}
}


\newenvironment{claimproof}[1]{\begin{proof}#1}{\end{proof}}

\def\cqedsymbol{\ifmmode$\lrcorner$\else{\unskip\nobreak\hfil
\penalty50\hskip1em\null\nobreak\hfil$\lrcorner$
\parfillskip=0pt\finalhyphendemerits=0\endgraf}\fi} 
\newcommand{\claimqedhere}{\renewcommand{\qed}{\cqedsymbol}}

\newtheorem{theorem}{Theorem}[section]
\newtheorem{lemma}[theorem]{Lemma}
\newtheorem{corollary}[theorem]{Corollary}

\newtheorem{definition}[theorem]{Definition}
\newtheorem{claim}[theorem]{Claim}
\newtheorem{proposition}[theorem]{Proposition}
\newtheorem{reduction rule}[theorem]{Reduction Rule}
\newtheorem{marking-scheme}{Marking Scheme}


\setlength{\marginparwidth}{.8in}
\usepackage[textsize=small, color=black!50, backgroundcolor=brown!50!white]{todonotes}



\title{Parameterized Complexity of Isometric\\ Path Partition: Treewidth and Diameter}


\author[1]{Dibyayan Chakraborty} 
\author[2]{Oscar Defrain} 
\author[3]{Florent Foucaud} 


\author[4]{\authorcr Mathieu Mari} 
\author[5]{Prafullkumar Tale} 

\affil[1]{School of Computer Science, University of Leeds, UK.}
\affil[2]{Aix-Marseille Université, CNRS, LIS, Marseille, France.}
\affil[3]{Université Clermont Auvergne, CNRS, Mines Saint-Étienne,\authorcr Clermont Auvergne INP, LIMOS, 63000 Clermont-Ferrand, France}
\affil[4]{LIRMM, Université de Montpellier, CNRS, Montpellier, France.}
\affil[5]{Indian Institute of Science Education and Research Pune, Pune, India}

\date{August 2025} 

\begin{document}

\maketitle

\begin{abstract}

In the \SoPP problem, the input is a graph $G$ with $n$ vertices
and an integer $k$, and the objective is to determine
whether the vertices of $G$ can be partitioned
into $k$ vertex-disjoint shortest paths.
We investigate the parameterized complexity of the problem
when parameterized by the treewidth ($\tw$)
of the input graph, arguably one of the most widely studied
parameters.
Courcelle's theorem [Information \& Computation, $1990$] shows
that graph problems that are expressible as MSO formulas
of constant size admit \FPT\ algorithms parameterized by the treewidth
of the input graph.
This encompasses many natural graph problems.
However, many metric-based graph problems,
where the solution is defined using some metric-based property of the graph
(often the distance) are not expressible as MSO formulas of constant size.
These types of problems, \SoPP being one of them, require individual attention and often
draw the boundary for the success story of parameterization by treewidth.

We prove that \SoPP is $\W[1]$-hard when parameterized by treewidth (in fact, even pathwidth ($\pw$)), answering the question by Dumas et al.~[SIDMA, 2024],
Fernau et al.~[CIAC, 2023], and confirming the aforementioned tendency.
We complement this hardness result by
designing a tailored dynamic programming algorithm running in $n^{O(\tw)}$ time.
This dynamic programming approach also results in
an algorithm running in time $\smash{\diam^{O(\tw^2)} \cdot n^{O(1)}}$, where $\diam$ is the diameter of the graph.
It is known that \SoPP remains \NP-hard on graphs of diameter~$2$; hence, the combination of both parameters is necessary to obtain a tractable algorithm.
Note that the dependency on treewidth is unusually high, as most problems admit algorithms running in time $2^{O(\tw)}\cdot n^{O(1)}$ or
$2^{O(\tw \log (\tw))}\cdot n^{O(1)}$.
However, we rule out the possibility of a significantly faster algorithm by
proving that \SoPP
does not admit an algorithm running in time $\smash{\diam^{o(\pw^2/(\log^3(\pw)))} \cdot n^{O(1)}}$, unless the {\textsf{Randomized-ETH}} fails.

\vskip5pt\noindent{}{\bf Keywords:} \SoPP, parameterized complexity, parameterized reductions, treewidth, diameter, Randomized ETH.
\end{abstract}

\newpage
\tableofcontents

\newcommand{\myparagraph}[1]{\paragraph{#1}}

\section{Introduction}\label{sec:intro}

In this paper, we investigate the parameterized complexity of a metric-based optimization problem known as \SoPP, that deals with partitioning the vertex set of an input graph into a given number of isometric (i.e., shortest) paths.

\myparagraph{Metric-based optimization problems.} The main subject of metric graph theory is the investigation and
characterization of graph classes and graph problems, where the graphs are equipped with a metric~\cite{DBLP:journals/tcs/ChalopinCCJ24,bandelt2008metric}.
It is a central topic in mathematics and computer science
with far-reaching applications such as in group theory~\cite{gromov1987,agol2013virtual},
matroid theory \cite{DBLP:journals/jct/BandeltCK18},
learning theory \cite{DBLP:journals/siamdm/ChalopinCIRV23,DBLP:journals/jcss/ChalopinCMW22,DBLP:journals/siamdm/ChepoiKP22,DBLP:conf/colt/ChalopinCIR24}, and
computational biology~\cite{bandelt1992split}.
One of the most natural metrics related to graphs
is the shortest-path distance between two vertices.
On the algorithmic side, many problems related to network monitoring,
transportation networks, information retrieval, or
computational learning can often be formulated as
problems on graphs in which the objective is to find vertices
that satisfy specified distance-related properties.
We use ``metric-based optimization problems'' as an umbrella 
term for such problems. 
This includes many important and classic graph problems, such as
\textsc{Single Source Shortest Paths},
\textsc{{Distance $d$}-Dominating Set} (also called \textsc{$(k,d)$-Center}),
\textsc{{Distance} $d$-Independent Set} (also called \textsc{$d$-Scattered Set}),
\textsc{Metric Dimension}, \textsc{Geodetic Set}, \textsc{Isometric Path Cover}, etc.

Some of these problems have been cornerstones in the development of
classic as well as parameterized algorithms and complexity
\cite{BelmonteFGR17,KLP19,LM21,KLP22,chakraborty2025distance,DBLP:conf/icalp/FoucaudGK0IST24,BDM23}, as they behave quite differently from their more ``local'' (neighborhood-based) counterparts such as \textsc{Vertex Cover}, \textsc{Independent Set} or \textsc{Dominating Set}.
{In} parameterized analysis, we associate each instance
$I$ with a parameter $\ell$, and are interested in an algorithm
with running time $f(\ell) \cdot |I|^{O(1)}$
for some computable function $f$.
Parameterized problems that admit such an algorithm are called
\emph{fixed parameter tractable} (\FPT) parameterized by $\ell$.
On the other hand, $\W[1]$-hardness categorizes problems that
are unlikely to have \FPT\ algorithms. A parameterized problem is in \XP~if it admits an algorithm running in time $|I|^{f(\ell)}$ for some computable function~$f$.

\myparagraph{Limitations of Treewidth.} A large class of problems admits \FPT\ algorithms when parameterized by
the treewidth, a parameter that quantifies
tree-likeness of the graph.
Courcelle's celebrated theorem~\cite{DBLP:journals/iandc/Courcelle90}
states that the class of graph problems expressible in Monadic Second-Order (MSO) Logic of constant size is fixed-parameter tractable (\FPT) when
parameterized by the treewidth of the graph.
We refer readers to \cite[Chapter 7]{cygan2015parameterized}
for further details.

Although one can express many of the graph properties using MSO
formulas of constant size,
there is no such formula to encode the following: given a subset of vertices and two specified vertices $s,t$, 
does this subset form an isometric path (i.e., a shortest path) between $s$ and $t$ \cite{denis2023stack}. 
This hinders the application of Courcelle's theorem to metric-based optimization problems.

\sloppy Consider the example of \textsc{Dominating Set} and its generalization
\textsc{{Distance $d$}-Dominating Set}.
The objectives of these problems are to find a subset of vertices $S$
such that any vertex in $V(G)\setminus S$ is at distance at most~1
and at most~$d$, respectively, from at least one vertex in $S$.
As the 
distance requirement in the first problem is upper bounded by a constant, it is expressible as an MSO formula of constant size,
resulting in an \FPT\ algorithm parameterized by treewidth.
However, this is not the case for the latter problem.
In fact, if $d$ is part of the input, it is known that \textsc{Distance $d$-Dominating Set} is \W[1]-hard when parameterized
by treewidth~\cite{DBLP:conf/iwpec/BorradaileL16}.
There are similar results for \textsc{Independent Set} and its
generalization \textsc{Distance $d$-Independent Set}~\cite{KLP22}.
 Similarly, the metric-based optimization problems \textsc{Geodetic Set} and \textsc{Metric Dimension} are even \NP-hard when
the treewidth of the graph is a constant~\cite{LM21,DBLP:journals/corr/abs-2504-17862}.
Hence, metric-based optimization problems require individual
attention and often draw the boundary for the success story of
parameterization by treewidth stemming from Courcelle's theorem. 

\myparagraph{\SoPP.} Isometric (i.e., shortest) paths in graphs and vertex-partitioning are among the most fundamental constructs in the area of graph algorithms. In this article, we consider an interesting metric-based optimization problem known as \SoPP, whose objective is to partition the vertex set of a graph into a prescribed number of isometric paths. Formally it is defined as follows.

\medskip

\defproblemout{\SoPP}
{A graph $G$ and an integer $k$.}{Is there a partition of the vertex set of $G$ into $k$ sets, each of them forming an isometric path in $G$?}

\medskip

Algorithmic aspects of \SoPP received increasing attention in recent years~\cite{pmanuelisometric,DBLP:conf/mfcs/ChakrabortyCFV23,PPP}. (We discuss the related literature in detail later.)
It is also related to other (non-metric based) path problems such as the celebrated \textsc{Hamiltonian Path} (and its generalization \textsc{Path Partition}) or \textsc{Disjoint Paths}, which are fundamental and have numerous applications~\cite{DBLP:journals/dam/AndreattaM95,manuel2018revisiting,DBLP:journals/jct/RobertsonS95b}.
  
\myparagraph{Our results.} As our first result, we show that the problem is \XP\ parameterized
by treewidth.

\begin{restatable}{theorem}{restatethmIPPdpxptw}
    \label{thm:IPP:DP-xp-tw}
    \SoPP admits an algorithm running in time $n^{O(\tw)}$, where $\tw$ is the treewidth of $G$ and $n$ denotes its number of vertices.
\end{restatable}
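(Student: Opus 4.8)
The plan is to solve the problem by dynamic programming over a tree decomposition of $G$ of width $O(\tw)$; such a decomposition can be computed in time $2^{O(\tw)}n \le n^{O(\tw)}$, so this preprocessing is absorbed by the final running time, and we may also begin by computing the whole distance matrix $\big(d_G(u,v)\big)_{u,v\in V(G)}$ via a BFS from every vertex. The conceptual engine of the algorithm is the following elementary fact about geodesics: a path $Q$ from $s$ to $t$ in $G$ is isometric if and only if $|Q|=d_G(s,t)$; hence every subpath of an isometric path is isometric, and if $Q$ is the concatenation of paths $Q_1,\dots,Q_r$ (the last vertex of $Q_i$ being the first vertex of $Q_{i+1}$) then $Q$ is isometric if and only if each $Q_i$ is isometric \emph{and} $\sum_{i}|Q_i| = d_G(s,t)$. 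This reduces a single global isometry test to a sequence of purely local tests, each comparing a length against one entry of the distance matrix, to be performed whenever the algorithm glues two pieces together -- provided it always knows the two endpoints of every piece it maintains.

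For the DP, fix a nice tree decomposition with introduce-vertex, introduce-edge, forget and join nodes, and for a node $t$ let $G_t$ be the subgraph formed by the vertices and edges introduced in the subtree of $t$, with bag $X_t$. A partial solution at $t$ is a partition of $V(G_t)$ into vertex-disjoint isometric paths, the \emph{fragments}, each destined to become a subpath of one of the final $k$ paths. Since $X_t$ separates $V(G_t)\setminus X_t$ from the rest of $G$, and since (in a nice decomposition with edge nodes) all edges at a vertex are introduced before it is forgotten, any fragment endpoint lying in $V(G_t)\setminus X_t$ already has all of its $G$-neighbours and incident edges present in $G_t$, so it can never be extended and is therefore a genuine endpoint of its final path. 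Consequently a fragment that is not yet a complete path has all of its ``open'' ends in $X_t$, and each open end occupies a distinct bag vertex; in particular at most $\tw+1$ fragments are still ``active''. The DP state at $t$ records: (i) for each $v\in X_t$, how many of its path-neighbours have been fixed (a value in $\{0,1,2\}$) and whether $v$ is designated internal or a final endpoint; (ii) the assignment of each bag vertex to an active fragment (or to a fragment already completed); and (iii) for each active fragment, its length (an integer in $[0,n-1]$) and, when its non-open end is a final endpoint that has already been forgotten, the identity of that forgotten vertex. An active fragment carries at most one such remembered vertex, so a state consists of $2^{O(\tw\log\tw)}$ combinatorial choices together with $O(\tw)$ integers in $[0,n]$ and $O(\tw)$ vertices of $G$; hence there are $n^{O(\tw)}$ states per node.

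The transitions are standard bookkeeping, with one extra check. At an introduce-vertex node we guess the role of the new vertex: a new length-$0$ fragment, a new final endpoint, or a new internal vertex awaiting its neighbours. At an introduce-edge node $uv$ we optionally declare $uv$ a path-edge, which is permitted only if neither of $u,v$ already has two fixed path-neighbours and if $u,v$ lie in different fragments (so that no cycle is formed); when this declaration merges two fragments into one we immediately test, using the distance matrix and the fact above, that the merged fragment is isometric between its two endpoints -- both of which are known, being bag vertices, designated final endpoints, or remembered forgotten vertices -- and we discard the state if the test fails. A fragment both of whose endpoints have become final and have received all their incident path-edges is recorded as one more completed path and removed. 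At a forget node we may forget $v$ only once all path-edges at $v$ have been introduced; if $v$ was a final endpoint, its identity is moved into the ``remembered forgotten endpoint'' slot of its fragment (and if this completes the fragment we run the isometry test and count it). At a join node we combine a state of each child that agree on (i) and (ii), summing path-degrees, fusing the fragments that meet at a shared bag vertex, adding their lengths, forbidding cycles, and re-running the isometry test after each fusion. With each state we store the minimum number of already-completed paths consistent with it; the answer to the instance is the minimum, over all states at the root leaving no active fragment, of this quantity, and $G$ admits a partition into exactly $k$ isometric paths if and only if this minimum is at most $k$ (for $p\le k\le n$ one refines an optimal partition into $p$ isometric paths by repeatedly splitting a path with at least two vertices, which stays isometric by the fact above).

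The step I expect to be the crux -- and the reason \SoPP escapes the generic treewidth meta-theorems -- is handling the \emph{global} nature of isometry: whether a fragment sits inside an isometric path of $G$ depends on distances throughout $G$, not merely within $G_t$. The two devices above neutralise this: the concatenation fact turns the global test into local ones performed at each gluing, and \emph{remembering the forgotten final endpoints} -- of which only $O(\tw)$ are ever simultaneously relevant, which is precisely what keeps the state count at $n^{O(\tw)}$ rather than exploding -- guarantees that the two endpoints of every fragment are always available for that local test. The remaining work is the usual care required by any path-partition DP: tracking path-degrees and fragment identities so as to forbid vertices of path-degree three and to forbid cycles, and correctly fusing fragments at join and introduce-edge nodes. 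Combining the polynomial-time distance precomputation, the $O(\tw\cdot n)$ nodes of the decomposition, the $n^{O(\tw)}$ states per node, and the fact that a join transition costs $n^{O(\tw)}\cdot n^{O(\tw)} = n^{O(\tw)}$, the total running time is $n^{O(\tw)}$.
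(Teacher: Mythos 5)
Your proposal is correct and follows essentially the same route as the paper: a bottom-up dynamic program over a (nice) tree decomposition, where the $n^{O(\tw)}$ blow-up comes precisely from remembering $O(\tw)$ identities of path endpoints that may lie anywhere in $G$, and isometry of the eventual paths is enforced by comparing lengths against precomputed pairwise distances. The paper packages this via \emph{signatures} $(\R,\beta,\top,\bot,\sigma,\tau)$ in which the two terminal vertices $\sigma(i),\tau(i)$ of every color-$i$ path are guessed up front and a global \emph{coherence} condition (existence of a realizing isometric $\sigma(i)$–$\tau(i)$ path consistent with the regular/top/bottom edges) is imposed on states; you instead track connected \emph{fragments}, remember only those endpoints that have already been forgotten, and verify isometry lazily at each introduce-edge and join by checking that the merged fragment's length equals the distance between its two known endpoints. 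The two formalisms carry the same $n^{O(\tw)}$ amount of information per state (at most one guessed vertex per active piece, of which there are at most $\tw+1$), and both are correct; your version trades the paper's stronger per-state invariant for a slightly simpler, purely local test, and it uses introduce-edge nodes where the paper works with induced subgraphs $G_t$, but these are cosmetic differences. One small redundancy worth noting: since every fragment is maintained isometric, its length equals the distance between its two (known) endpoints, so storing the length explicitly is unnecessary.
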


We note that \Cref{thm:IPP:DP-xp-tw} improves upon results
from~\cite{dumas2024graphs} and \cite{PPP}. 
Indeed, the authors from~\cite{dumas2024graphs}
showed that in
a \yes-instance, the pathwidth (and thus treewidth) is upper-bounded by
an exponential function of the solution size of \SoPP. They used this
fact combined with Courcelle's theorem to obtain an \XP\ algorithm
for the parameter solution size.
A different method is used in~\cite{PPP} to obtain another 
\XP\ algorithm for solution size. 
Using the aforementioned upper bound from~\cite{dumas2024graphs}, \Cref{thm:IPP:DP-xp-tw}
implies these results.

The next natural question is whether the above \XP\ algorithm can be
improved to an \FPT\ algorithm?
Recall that closely related ``path problems'' like
\textsc{Hamiltonian Path} and \textsc{Path Partition} are both
\FPT\ parameterized by treewidth~\cite{DBLP:journals/talg/CyganNPPRW22,PCcaldam}. 
We show that this is unlikely to be the case for \SoPP.

\begin{restatable}{theorem}{restateIPPtwhard}
\label{thm:IPP:w1-tw-hard}
    \SoPP is \emph{\W[1]-hard} when parameterized by the pathwidth, and hence, the treewidth of the input graph.
\end{restatable}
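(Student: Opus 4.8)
The plan is to design a parameterized reduction from \textsc{Multicolored Clique} --- which is \W[1]-hard parameterized by the number $k$ of colour classes --- to \SoPP, producing a graph $G$ whose pathwidth is $O(k)$, i.e.\ bounded by a function of the parameter alone. Write the source instance as a graph $H$ with vertex set partitioned into colour classes $V_1,\dots,V_k$, each of size $n$, the task being to pick one vertex per class so that the chosen vertices are pairwise adjacent. I would build, in polynomial time, a pair $(G,k')$ --- where $k'$ is polynomial in $n$ and $k$, and is \emph{not} the parameter, so we are free to use a large number of paths --- such that $V(G)$ admits a partition into $k'$ isometric paths if and only if $H$ contains a multicoloured clique.

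The graph $G$ would be assembled from three ingredients arranged linearly along a backbone, so as to keep its pathwidth under control. First, a ``rigid'' bulk whose role is to absorb almost the entire budget of $k'$ paths in an essentially unique way: this is realised by subdivided pendant structures hung along the backbone and ending in degree-$1$ vertices, which in any partition into exactly $k'$ isometric paths are forced to be endpoints, thereby forcing the paths that cover them. Second, for each colour $i$ a \emph{selection gadget} $S_i$, traversed by one distinguished path, whose internal distances are chosen so that it has exactly $|V_i|$ shortest routes, the $\ell$-th of them encoding ``select the $\ell$-th vertex of $V_i$'', while no unintended route through $S_i$ is a shortest path and every leftover vertex of $S_i$ remains coverable by the rigid paths whichever route is taken. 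Third, for each pair $\{i,j\}$ a \emph{verification gadget} $C_{ij}$ that can be fully covered only when the routes chosen in $S_i$ and $S_j$ correspond to adjacent vertices of $H$; the choice made at $S_i$ is transmitted to every $C_{ij}$ along a ``wire'' carrying a value of $[n]$, and the $\binom{k}{2}$ verification gadgets are visited one pair at a time along the backbone, so that any left-to-right separator meets only $O(1)$ local gadget vertices, the $O(1)$ interface vertices of each of the $k$ wires, and a constant number of global attachment vertices --- yielding pathwidth $O(k)$.

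Correctness would then be checked in the usual two directions: from a multicoloured clique of $H$ one reads off the route to take in each $S_i$, verifies that all gadgets $C_{ij}$ become coverable, and exhibits the full $k'$-path partition; conversely, the rigidity of the bulk of $G$ pins down the structure of any $k'$-path partition, the selection gadgets then extract one vertex per colour, and coverage of each $C_{ij}$ certifies that these vertices are pairwise adjacent, hence form a clique. The step I expect to be the main obstacle is \emph{enforcing the isometric (shortest-path) constraints inside a graph of bounded pathwidth}: small pathwidth severely restricts the metric structure available, so, unlike in typical hardness proofs for distance-based problems on general graphs, one cannot freely place many vertices at prescribed pairwise distances. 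The selection and verification gadgets have to simultaneously make the intended routes genuine shortest paths, rule out every unintended shortcut, and keep \emph{all} of their vertices coverable by the partition, all the while each wire must transmit a value of $[n]$ using only $O(1)$ pathwidth --- which will likely force a long (but low-pathwidth) unary-style encoding and hence a graph of large diameter. Making all of these requirements hold at once, with bags of size $O(k)$, is the technical heart of the argument.
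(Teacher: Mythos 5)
Your proposal correctly identifies the source problem (\textsc{Multicolored Clique}), the overall template (per-colour selection gadgets, per-pair adjacency verification, a rigid bulk of pendant structures that pins down almost all of the budget), and even the right enabling trick (degree-one vertices forcing path endpoints, unary/distance-based encoding of the selected index). This matches the paper's strategy at the level of intent. However, what you have written is a plan rather than a proof: you explicitly defer the construction of the selection gadgets, the verification gadgets, and the wires, and you yourself flag that making the intended routes isometric, excluding all shortcuts, and keeping every vertex coverable "with bags of size $O(k)$" is "the technical heart of the argument." That heart is exactly what is missing, and it is where all of the difficulty of the theorem lies.

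Concretely, the paper resolves these difficulties with machinery that your sketch does not supply and that is not routine to reinvent. (1) Selection is not done by a small gadget with $|V_i|$ shortest routes; it is done by a \emph{semi-grid} $\Gamma_i$ with $k-1$ rows and $O(n)$ columns, where grid cherries (Lemma~\ref{lem:cherry-lemma}) force every non-cherry path to be horizontal or (almost) vertical, and the selected vertex is encoded as the unique \emph{crest} column left uncovered by horizontal paths. (2) Verification is not done by one gadget per pair $\{i,j\}$ comparing two values of $[n]$; it is done by one gadget per \emph{edge} of $G$, consisting of four cables whose lengths are tuned so that, via twin-cherries and crossing cherries, the vertex $x_0$ is exactly equidistant from the crest column and its padding --- this is what makes "extendable" paths stop one step short of the crest (Claim~\ref{claim:tw-w1:extendable-property}) and "non-extendable" paths unextendable at all (Claim~\ref{claim:tw-w1:non-extendable-property}). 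Consistency across the $k-1$ pairs involving colour $i$ is then enforced not by wires but by a budget argument: only one extra vertical path per semi-grid is affordable, so all selected edges incident to $V_i$ must leave the \emph{same} crest column uncovered. Without a concrete replacement for this distance arithmetic and the accompanying counting (Lemmas~\ref{lemma:tw-w1:hard-backward} and~\ref{lemma:tw-w1:hard-forward}), the reduction cannot be verified. Finally, your claimed pathwidth bound of $O(k)$ is stronger than the $O(k^2)$ the paper actually establishes (Lemma~\ref{lemma:tw-w1:hard-pathwidth}); either bound suffices for \W[1]-hardness, but the $O(k)$ claim rests on the unconstructed constant-width wires, so it is currently unsupported.
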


\Cref{thm:IPP:w1-tw-hard} answers open questions
from~\cite{dumas2024graphs} and~\cite{PPP}. 
Moreover, \Cref{thm:IPP:DP-xp-tw} and \Cref{thm:IPP:w1-tw-hard}
establish that \SoPP belongs to the list of problems that are \XP\ but
\W[1]-hard for treewidth. We refer
to~\cite{DBLP:journals/siamdm/BelmonteKLMO22} for a discussion about
such problems. It appears that certain common problem features
yielding this behavior can be listed, for example, problems involving
weights, lists, or iterative processes. Another kind of such feature
is the fact of being metric-based, such as \textsc{Metric
Dimension}~\cite{LM21}, \textsc{Geodetic Set}~\cite{KK22} and
\textsc{Distance-$d$ Dominating/Independent Set}~\cite{KLP22,KLP19}. Our
result confirms this trend and draws an interesting distinction with
the related (path-based but not metric-based) \textsc{Path Partition},
which is \FPT\ for treewidth~\cite{PCcaldam}.

For metric-based problems, another relevant parameter is the diameter
of the graph, 
which is the maximum length of an isometric path. 
Unfortunately, \SoPP is \NP-hard even on (chordal) graphs of
diameter~2~\cite{foucaud2022}, thus using the diameter alone as the
parameter is not fruitful. 

As a third result, we show that \SoPP becomes FPT when parameterized by both treewidth and diameter. 
To obtain this result, we use a dynamic programming scheme
analogous to that of \Cref{thm:IPP:DP-xp-tw}, but manage to reduce the number of states by storing more succinct information about the distances of the vertices to the bags of the decomposition. 

\begin{restatable}{theorem}{restateIPPdpdiamtw}
\label{thm:IPP:DP-fpt-diam-tw}
    \SoPP admits an algorithm running in time $\diam^{O(\tw^2)}\cdot n^{O(1)}$ where $\diam$ is the diameter of $G$, $\tw$ its treewidth, and $n$ its number of vertices.
\end{restatable}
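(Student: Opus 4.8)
The plan is to follow the same dynamic programming skeleton as in \Cref{thm:IPP:DP-xp-tw}, over a nice tree decomposition of width $\tw$, where the partial solution restricted to a bag $B$ consists of a collection of at most $|B|$ partial isometric paths, each represented by its endpoints currently present in $B$ (together with a pairing of these endpoints into paths, and flags indicating which paths are already ``complete'' in the subtree processed so far). The crucial difference is in how we certify that a candidate partial path $P$ with current endpoints $u$ and $v$ is really isometric, i.e.\ that $|V(P)\cap V_t| = d_G(s,t)+1$ where $s,t$ are its eventual real endpoints. In the $n^{O(\tw)}$ algorithm, this is done by remembering, for each vertex in the bag, its exact distance to the two path-endpoints; here we instead remember, for each vertex $x$ of the bag $B$ and each of the (at most $\tw+1$) partial paths, the distance from $x$ to the nearest endpoint of that path \emph{as a value in $\{0,1,\dots,\diam\}$}. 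Since distances in a connected graph are bounded by $\diam$ (and infinite/unreachable can be folded into a single extra symbol), each such distance takes one of $\diam+O(1)$ values, and there are $O(\tw^2)$ relevant (vertex, path) pairs per bag, giving $\diam^{O(\tw^2)}$ distance-profiles per bag; combined with the $\tw^{O(\tw)} = 2^{O(\tw\log\tw)}$ ways to partition the bag into endpoint-pairs and assign completion flags, the total number of states per node is $\diam^{O(\tw^2)}\cdot n^{O(1)}$, and the transitions at introduce/forget/join nodes are computable in the same time.

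The key technical point to make rigorous is \emph{why storing only these distance-to-bag values suffices to verify isometry}, given that the true endpoints $s,t$ of a path may have been forgotten long ago. The argument is the standard separator property of tree decompositions: any isometric path $P$ from $s$ to $t$ that is being ``assembled'' along the decomposition must cross every bag separating $s$ from $t$; in particular, at the node $t$ where we finally close $P$ (declaring its last real endpoint), all of $P$ lies in $V_t$, and the length of $P$ equals the number of its vertices minus one, which the DP has been tracking incrementally. To check that this equals $d_G(s,t)$, I would maintain, as part of the state, not the literal identities of $s$ and $t$ but the \emph{length so far} of each partial path as a counter in $\{0,\dots,\diam\}$ (again bounded by $\diam$ in any \yes-instance, since every solution path is isometric hence has length $<\diam$), and I would use the stored distance values on the bag to enforce that whenever two partial path-fragments get merged at a join node, or a fragment is extended by a newly introduced vertex, the resulting fragment is still a geodesic. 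Concretely, extending endpoint $u$ of a fragment of current length $\ell$ by a neighbour $w$ is legal only if every bag vertex $x$ records distance exactly $\ell+1$ (resp.\ is consistent with) to the opposite endpoint through $w$ — the details mirror the verification in \Cref{thm:IPP:DP-xp-tw}, just phrased over truncated distances.

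The main obstacle I anticipate is ensuring \emph{global} isometry rather than merely local geodesic-ness from the truncated information: a walk can be locally shortest-looking at every bag yet fail to be a global shortest path, because a shortcut might route entirely outside the current subtree. The fix, which I would spell out carefully, is to observe that it is enough to verify, for the two endpoints $s,t$ of each solution path, the single global equality ``$\text{number of vertices on the path} - 1 = d_G(s,t)$'', and that this can be reduced to local checks by exploiting that $d_G(s,t)$ is itself computable once both $s$ and $t$ belong to some common bag on the path between their forget-nodes — but since $s,t$ need not co-occur in a bag, we instead track, along the path of the decomposition, the quantity $d_G(x, s)$ for the relevant bag vertices $x$ via the recurrence $d_G(x,s) = \min_{y \in N(x)} (1 + d_G(y,s))$ restricted through the bag, truncated at $\diam$; propagating these truncated distance labels through the decomposition is exactly the extra bookkeeping that costs $\diam^{O(\tw^2)}$. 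I would also note the easy preprocessing step that if $\diam$ is large relative to $n$ this is subsumed by \Cref{thm:IPP:DP-xp-tw} (since $\diam \le n$, so $\diam^{O(\tw^2)} \le n^{O(\tw^2)}$ is never worse than needed), and conversely that the interesting regime is $\diam \ll n$; and that correctness in the ``no'' direction is automatic since the DP only ever accepts genuine partitions into isometric paths, while in the ``yes'' direction any optimal partition induces a consistent sequence of states. With these pieces, combining the state count $\diam^{O(\tw^2)}\cdot 2^{O(\tw\log\tw)} = \diam^{O(\tw^2)}$ with polynomial-time transitions per node and $O(n)$ nodes yields the claimed running time.
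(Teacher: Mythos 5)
Your central idea---replacing the explicitly stored endpoints $\sigma(i),\tau(i)$ (which cost $n^{O(\tw)}$ states) by the information contained in their distance profiles to the current bag, each entry being a value in $\{0,\dots,\diam\}$, yielding $\diam^{O(\tw^2)}$ profiles---is exactly the idea used in the paper, and your state count matches. However, the mechanism you propose for maintaining these profiles has a genuine gap. You anticipate, correctly, that the difficult point is certifying isometry once the actual endpoint $s=\sigma(i)$ has been forgotten, and you propose to propagate $d_G(x,s)$ via the recurrence $d_G(x,s)=\min_{y\in N(x)}(1+d_G(y,s))$ ``restricted through the bag.'' This does not work: when a vertex $x$ is introduced at a node $t$ with child bag $X_{t'}$, its shortest path to the forgotten endpoint $s$ need not enter $X_{t'}$ at a neighbor of $x$, so $\min_{y\in N(x)\cap X_{t'}}(1+d_G(y,s))$ can strictly overestimate $d_G(x,s)$; and taking the minimum over all of $N(x)$ requires $d_G(y,s)$ for $y\notin X_{t'}$, which is not in the state (and cannot be, since $s$'s identity is lost). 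So the claimed propagation does not recover the profile, and the ``no'' direction of correctness (the DP only accepting genuine IP-partitions) is not automatic.

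The paper avoids this issue by a cleaner device. For each node $t$, it partitions $V(G)$ into equivalence classes according to the pair (distance profile to $X_t$, side: $G_t-X_t$ vs.\ $G-G_t$), of which there are $\diam^{O(w)}$, picks one concrete \emph{representative} vertex per class, and restricts $\sigma,\tau$ to map to representatives. Because a representative is a concrete vertex, all distances needed to test the coherence of a signature can be read off from a single preprocessing BFS in $G$; no incremental propagation of distance labels through the decomposition is needed, and the compatibility checks between parent and child signatures are unchanged from the $n^{O(\tw)}$ algorithm. Two smaller remarks: storing only the distance to the ``nearest endpoint'' per (bag vertex, path) pair is lossy---the coherence test needs the distances to \emph{both} endpoints of each colored path---and the ``length so far'' counters you add are redundant, as the length of an assembled path is already determined by the signature.
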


We note that parametrization by both diameter and treewidth has been 
explored earlier in the context of other
metric-based problems~\cite{husfeldt:LIPIcs.IPEC.2016.16,DBLP:conf/icalp/FoucaudGK0IST24}. 

Note also that the dependency on treewidth is unusually high, as most natural problems that are \FPT\ for treewidth
admit algorithms running in time $2^{O(\tw)}\cdot n^{O(1)}$ or
$2^{O(\tw \log (\tw))}\cdot n^{O(1)}$.
We however show that an improved algorithm achieving these types of running time is highly unlikely.

\begin{restatable}{theorem}{restateIPPdiamtwhard}
\label{thm:IPP:eth-diam-tw-hard}
    Unless the \emph{\textsf{Randomized-ETH}} fails, \SoPP does not admit an algorithm running in time $\diam^{o(\pw^2/(\log^3(\pw)))} \cdot n^{O(1)}$.
\end{restatable}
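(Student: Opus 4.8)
The plan is to give a polynomial-time \emph{randomized} many-one reduction from $3$-SAT. Starting from a formula $\phi$ on $N$ variables, we first apply the sparsification lemma, so that $\phi$ has $M = O(N)$ clauses; under the randomized \ETH, $\phi$ then cannot be decided in randomized time $2^{o(N)}$. From $\phi$ I would produce an instance $(G,k)$ of \SoPP\ that is a \yes-instance if and only if $\phi$ is satisfiable, with $|V(G)| = \mathrm{poly}(N)$, $\diam(G) = \mathrm{poly}(N)$, and $\tw(G) = O(\sqrt{N}\log N)$ (in fact pathwidth $O(\sqrt N \log N)$). Granting such a reduction, a hypothetical \SoPP\ algorithm running in $\diam^{o(\tw^2/\log^3\tw)}\cdot n^{O(1)}$ time would decide $\phi$ in randomized time $N^{o(N/\log N)}\cdot N^{O(1)} = 2^{o(N)}$, since with $\tw = \Theta(\sqrt N\log N)$ and $\diam = \mathrm{poly}(N)$ one has $\tw^2/\log^3\tw = \Theta(N/\log N)$ and $\log\diam = \Theta(\log N)$; this contradicts the randomized \ETH. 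The factor $\log^3\tw$ in the exponent is exactly the slack needed to convert the ``ideal'' bound $\diam^{\Omega(\tw^2)}$, which would essentially match \Cref{thm:IPP:DP-fpt-diam-tw}, into something derivable from an $N$-variable lower bound once the reduction only achieves $\tw = \Theta(\sqrt N\log N)$ rather than $\Theta(\sqrt N)$.

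\textbf{The construction.}
The gadgetry is dictated by the bottleneck of the algorithm of \Cref{thm:IPP:DP-fpt-diam-tw}: a bag of size $\tw$ may have to store, for the at most $\tw$ isometric paths crossing it, all $\Theta(\tw^2)$ pairwise distances between their traces, each a value in $\{0,\dots,\diam\}$. To make this obstruction genuinely necessary, I would designate $t = \Theta(\sqrt N\log N)$ ``port'' vertices $p_1,\dots,p_t$ sitting inside a single small region $R$, attached to an otherwise (almost) path-like backbone, so that $G - R$ has treewidth $O(1)$ and hence $\tw(G)\le |R| + O(1) = O(\sqrt N\log N)$. A family of \emph{selection gadgets} hanging off $R$ would force every partition of $V(G)$ into isometric paths to realize, for each pair $\{p_i,p_j\}$, a distance $d(p_i,p_j)$ drawn from a prescribed set of $\mathrm{poly}(N)$ admissible values; the resulting tuple of $\binom{t}{2} = \Theta(N\log^2 N)$ values carries $\Theta(N\log^3 N)$ bits, which through a fixed, publicly known (redundant) encoding is interpreted as a truth assignment of the $N$ variables of $\phi$. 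For each clause $C$ I would then add a \emph{verifier path}: an isometric $s_C$--$t_C$ path that is forced to route through the relevant ports and can be realized as a shortest path if and only if the encoded assignment satisfies $C$. Finally $k$ is chosen so that $V(G)$ partitions into $k$ isometric paths precisely when every verifier path ``closes up'', i.e., when $\phi$ is satisfiable. Randomness enters in balancing the $N$ variables across the $\Theta(N\log^2 N)$ port pairs and in establishing (by a probabilistic argument that seems hard to derandomize) that the selection and verifier gadgets can be chosen to be simultaneously isometric-faithful and sufficiently sparse, which is why the conclusion is stated under the randomized \ETH.

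\textbf{Correctness, parameter bounds, and the main obstacle.}
Correctness would follow the customary two directions: from a satisfying assignment one builds the partition by setting every port distance according to the encoding and routing each verifier path as a shortest path; conversely, from any partition of $V(G)$ into $k$ isometric paths one argues that the selection gadgets force the port distances, decodes an assignment, and shows that each verifier path can be isometric only if its clause is satisfied. The parameter bounds are then verified directly: $\tw(G) = O(\sqrt N\log N)$ from the separator $R$, $\diam(G) = \mathrm{poly}(N)$ because each gadget path has polynomial length, and $|V(G)| = \mathrm{poly}(N)$ because there are $O(N)$ clauses and every gadget has polynomial size; substituting these into the assumed running time yields the claimed $2^{o(N)}$ algorithm for $3$-SAT. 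The main difficulty — and essentially the entire content of the proof — is the joint design and analysis of the selection and verification gadgets. Isometric paths are globally constrained objects: every internal vertex must lie on a shortest path between the two endpoints, so each local choice inside a gadget interacts with every other choice through distances that propagate across all of $G$. Forcing $\Theta(\tw^2)$ \emph{independent} port distances that faithfully encode a $3$-SAT instance, while simultaneously keeping the port region of size only $O(\sqrt N\log N)$ and the diameter polynomial, requires carefully layered distance gadgets; controlling the unavoidable overhead of this layering is precisely what produces the $\log^3\tw$ loss in the final bound, and closing that gap (to $\diam^{\Omega(\tw^2)}$ or $\diam^{\Omega(\tw^2/\mathrm{polylog})}$) is the natural open question left by this approach.
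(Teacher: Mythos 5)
There is a genuine gap here. Your parameter arithmetic at the top is the right one (treewidth $\Theta(\sqrt{\cdot}\,\log (\cdot))$ against $\log\diam=\Theta(\log(\cdot))$ does yield $2^{o(\cdot)}$ and explains the $\log^3\tw$ slack), but the entire construction — which you yourself call ``essentially the entire content of the proof'' — is never given, and the specific mechanism you propose does not withstand scrutiny. You want the selection gadgets to ``force every partition of $V(G)$ into isometric paths to realize, for each pair $\{p_i,p_j\}$, a distance $d(p_i,p_j)$ drawn from a prescribed set,'' but the distance between two fixed vertices of $G$ is determined by $G$ itself, not by the choice of IP-partition; a partition cannot ``choose'' $\binom{t}{2}$ distance values, so the $\Theta(N\log^3 N)$ bits you intend to extract from the port metric have no carrier. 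Relatedly, your invocation of randomness (``a probabilistic argument that seems hard to derandomize'' for balancing variables over port pairs) is not an argument, and a deterministic polynomial-time, polynomial-size reduction from $3$-SAT to \SoPP{} with treewidth $O(\sqrt{N}\log N)$ is precisely the hard object you would need to exhibit.

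The paper proceeds quite differently and sidesteps both obstacles. It reduces from \textsc{Sparse 3-SAT} (variables and clauses pre-partitioned into $\sqrt{n}$ balanced groups with a one-incidence-per-group-pair property), whose $2^{o(n)}$ lower bound under the randomized \ETH{} is already known (\Cref{prop:sparse-hard}) — this is the sole source of randomness, not the gadget construction. Crucially, the reduction is \emph{not} polynomial: it runs in time $2^{O(\sqrt{n})}$ and builds, for each variable group $V_i$, an assignment-selector cycle with $2^{|V_i|}$ vertices, one per partial assignment; the information is carried by \emph{which} cycle vertex launches the single long path that must traverse all clause layers $C'_1,\dots,C'_{\sqrt{n}}$, not by pairwise distances among a small port set. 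Clause satisfaction is then tested by ``distance modulator'' gadgets (central cliques of size $O(\log n)$ plus connector paths) that make an assignment vertex lie at distance $j+3$ from a clause vertex of $C'_j$ exactly when the assignment satisfies that clause, and at distance $j+2$ otherwise, so that only satisfying traversals are isometric; a leaf/cherry counting argument pins down all other paths. The diameter is $O(\sqrt{n})$ and the treewidth $O(\sqrt{n}\log n)$ (the $\log$ coming from the central cliques), which is where the $\log^3$ loss actually originates. To repair your write-up you would either have to reproduce a construction of this kind or supply concrete, analyzed selection and verifier gadgets realizing your distance-encoding scheme; as written, the theorem is not proved.
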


We remark that this type of lower bounds, i.e., forbidding running times
roughly of the form $2^{o(p^2)}$ for some parameter $p$, matched by
an algorithm of this running time, are relatively rare in the literature. 
We refer here to the only other such results known to
us~\cite{DBLP:conf/mfcs/Pilipczuk11,DBLP:journals/iandc/SauS21,DBLP:journals/toct/AgrawalLSZ19,DBLP:conf/isaac/ChakrabortyFMT24,DBLP:conf/colt/ChalopinCIR24,DBLP:journals/corr/abs-2405-01344}
which hold for the parameters pathwidth, vertex cover number, or
solution size.

\myparagraph{Related works.} 
\SoPP (under this name or the one of \textsc{Shortest Path Partition}) was {introduced as a natural variation of the related \textsc{Isometric Path Cover}, which is motivated by applications in the cops and robber game~\cite{FF01}}. \SoPP was studied
from the structural point of view for specific graph families~\cite{DBLP:journals/dm/FitzpatrickNHC01,pmanuelisometric,penev2025isometricpathpartitionnew} and shown
to be \NP-complete in~\cite{pmanuelisometric}. This holds even for
bipartite graphs of diameter~4~\cite{PPP}, chordal graphs of
diameter~2~\cite{foucaud2022} and split
graphs~\cite{chakraborty2024covering}. \SoPP is known to be
polynomial-time solvable on
trees~\cite{goodman1974hamiltonian,boesch1974covering,kundu1976linear,franzblau_raychaudhuri_2002},
cographs~\cite{chakraborty2024covering}, and chain
graphs~\cite{chakraborty2024covering}. It can also be solved
in polynomial time for any fixed number of solution paths by \XP\
algorithms, using two different methods:
see~\cite{dumas2024graphs} and~\cite{PPP},
respectively. \SoPP is also shown to be \FPT\ when parameterized by the neighborhood diversity of the input graph, and also when parameterized by the dual parameter $n-k$~\cite{PPP}. The variant of \SoPP for DAGs is \W[1]-hard for solution size~$k$~\cite{PPP}.

The related problem \textsc{Isometric Path Cover}, where the objective is to cover the vertex set of the input graph with (not necessarily disjoint) isometric paths, has been studied recently~\cite{foucaud2022,dumas2024graphs,DBLP:conf/mfcs/ChakrabortyCFV23} and is relevant in the context of machine learning~\cite{TG21}.

Another related problem is \textsc{Disjoint Shortest Paths}, where we are
given a set of terminal pairs and we need to find disjoint isometric
paths connecting the pairs, is also studied:
see~\cite{lochet2021polynomial,DBLP:journals/siamdm/BentertNRZ23} and
references therein. A global minimization variant called
\textsc{Shortest Disjoint Paths} (where only the sum of lengths of the
path needs to be minimized) is also
studied~\cite{DBLP:journals/siamcomp/BjorklundH19,DBLP:conf/soda/MariMS24}. These
two problems are variants of the celebrated \textsc{Disjoint Paths}
problem~\cite{DBLP:journals/jct/KawarabayashiKR12}, which is central
in the theory of graph minor
testing~\cite{DBLP:journals/jct/RobertsonS95b}.

When the paths are not required to be isometric, we have the general
\textsc{Path Partition}
problem~\cite{corneil2013ldfs,goodman1974hamiltonian} (also known
under the names of \textsc{Path Cover} and \textsc{Hamiltonian
Completion}), that generalizes \textsc{Hamiltonian Path}. This
problem is also important from a structural point of view,
see~\cite{berge1983path,hartman1988variations,PP-L21bis}. Sometimes
the version where the paths need to be induced (or chordless) is also
studied, see~\cite{DBLP:journals/tcs/PanC07,PPP}.

Practical applications of path partition problems
are numerous, for example, automatic
translation~\cite{DBLP:journals/ipl/LinCL06}, network
routing~\cite{DBLP:journals/winet/SrinivasM05}, program
testing~\cite{ntafos1979path} or parallel
programming~\cite{pinter1987mapping}, to name a few.
We refer to the
surveys~\cite{DBLP:journals/dam/AndreattaM95,manuel2018revisiting} for
more references on partitioning (and covering) problems with paths.

\myparagraph{Organization of the paper.}
We start with some preliminaries in \Cref{sec:prelim}. We
present our dynamic programming schemes, proving
\Cref{thm:IPP:DP-xp-tw} and \Cref{thm:IPP:DP-fpt-diam-tw}, in
\Cref{sec:DP}. 
These algorithms are completed by a discussion in Section~\ref{sec:kernel} showing that \SoPP{} does not admit a polynomial kernel when parameterized by $\diam+\pw$, unless
$\NP \subseteq \coNP/\poly$.
We prove \W[1]-hardness
for pathwidth, \Cref{thm:IPP:w1-tw-hard}, in
\Cref{sec:IPP:tw-w1-hardness}. We prove the \textsf{randomized
\ETH}-based lower bound, \Cref{thm:IPP:eth-diam-tw-hard}, in
\Cref{sec:IPP:eth-diam-tw}. 
Finally, we conclude in \Cref{sec:conclusion}.

\section{Preliminaries}
\label{sec:prelim}

A graph has vertex set $V(G)$ and edge set $E(G)$. We denote edge with endpoints $u, v$ as $(u, v)$. 
The \emph{length} of a path $P$ is the number of edges in $P$.  An \emph{isometric path} (or \emph{IP} for short) is a shortest path between its endpoints.
An \emph{IP-partition} of a graph $G$ is a partition of the vertex set into isometric paths. 
The \emph{size} of an IP-partition $\mathcal{P}$ of a graph $G$ is the cardinality of $\calP$. 
In the rest of the paper, we shall denote an isometric path $P$ by the natural ordering $(x_1, x_2, \dots, x_{\ell})$ of its vertices obtained by traversing $P$ from one endpoint to the other.
In addition, and for simplicity if no ambiguity arises, we may either refer to $P$ as a graph, or as a set of vertices, or as a set of edges.

For a graph $G$ and a set $X\subseteq E(G)$, the graph $G+X$ (resp.\ $G-X$) is the
graph obtained by adding (resp.\ removing) the edges in $X$ to (resp.\ from) $G$. 
For an induced subgraph $H$ of a graph $G$, and a set $X\subseteq V(G)$, $H+X$ (resp.\ $G-X$) is the 
subgraph of $G$ obtained by adding (resp.\ removing) the vertices in $X$ to (resp.\ from) $H$. 
When performing these operation, we ignore multiplicity, i.e., we keep a simple graph.
For a graph $G$ and a set $X\subseteq V(G)$, the graph $G[X]$ is the subgraph of $G$ induced by the vertices in $X$.

The notation of \emph{identifying} two vertices $u$ and $v$ in a graph $G$ is defined as follows:
The operation construct a simple graph $H$ by deleting vertices $u$ and $v$
from $G$, adding a new vertex $w$ and making it adjacent to
every vertex in $G$ that were adjacent to $u$ or $v$.

For a positive integer $q$, we denote {the} set $\{1, 2, \dots, q\}$ by $[q]$.

For sake of completeness we recall the well-known definitions of \emph{tree-decompositions}, 
\emph{treewidth} and \emph{nice tree-decompositions} below. 

\begin{definition}[\cite{cygan2015parameterized}]
A \emph{tree-decomposition} of a graph $G$ is a rooted tree $T$ 
where each node $v$ is associated to a subset $X_v$ of $V(G)$ called \emph{bag}, such that 
\begin{itemize}
\item The set of nodes of $T$ containing a given vertex of $G$ forms a nonempty connected subtree of $T$; and

\item Any two adjacent vertices of $G$ appear in a common node of $T$.

\end{itemize}
The \emph{width} of $T$ is the maximum cardinality of a bag minus one. The \emph{treewidth} of $G$ is the 
minimum integer $k$ such that $G$ has a tree-decomposition of width $k$.
\end{definition}

A \emph{path-decomposition} of a graph is a tree-decomposition $T$ where $T$ is a path. The \emph{pathwidth} of a graph $G$, denoted as $\pw(G)$, is the minimum integer $k$ such that $G$ has a path-decomposition of width $k$.

\begin{definition}[\cite{cygan2015parameterized}]
A \emph{nice tree-decomposition} of a graph $G$ is a rooted tree-decomposition such that each internal 
node has one or two children, with the following properties.
\begin{itemize}
\item Each node of $T$ belongs to one of the following types: \emph{introduce}, \emph{forget}, \emph{join} or \emph{leaf}.

\item A \emph{join node} $v$ has two children $v_1$ and $v_2$ such that $X_v = X_{v_1} = X_{v_2}$.

\item An \emph{introduce node} $v$ has one child $v_1$ such that $X_v \setminus \{x\} = X_{v_1}$, where $x \in X_v$.

\item A \emph{forget node} $v$ has one child $v_{1}$ such that $X_v  = X_{v_1} \setminus \{x\}$, where $x \in X_{v_1}$.

\item A \emph{leaf node} $v$ is a leaf of $T$ with $X_v=\emptyset$.

\item The tree $T$ is rooted at a node $r$ called root node with $X_r = \emptyset$.
\end{itemize}
\end{definition}

For a node $t$ in a nice tree-decomposition $T$ of a graph $G$, we let $G_t$ denote the subgraph of $G$ induced by the vertices in the union of bags of the nodes that belong to the subtree of $T$ rooted at $t$.

We state properties that will play a crucial role in simplifying the analysis of IP-partitions.
The first property is the following, which we will often use implicitly.
Let us recall that the length of a path denotes its number of edges, not vertices.

\begin{lemma}[Leaf lemma]\label{lem:leaf-lemma}
Let $G$ be a graph, and $D$ denote the set of vertices of $G$ of degree $1$. Let $v\in V(G)$ such that $N(v)\cap D=\{u\}$. Then, $G$ has an IP-partition with minimum cardinality containing a path with $u$ as an endpoint and of length at least~1. 
\end{lemma}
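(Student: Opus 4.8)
The plan is to prove the Leaf Lemma by a straightforward exchange argument. Take any IP-partition $\mathcal{P}$ of $G$ of minimum cardinality, and let $P \in \mathcal{P}$ be the path containing the degree-$1$ vertex $u$. Since $\deg(u) = 1$ with $N(u) = \{v\}$, the only possible neighbor of $u$ on any path is $v$; hence either $P$ is the single-vertex path $(u)$, or $P$ has $u$ as an endpoint with $v$ as the second vertex (so $P$ has length at least~$1$). In the latter case we are already done, so assume $P = (u)$.

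Now let $Q \in \mathcal{P}$ be the path containing $v$. I would argue that we can reroute so that $u$ is glued to the end of (a piece of) $Q$. Write $Q = (x_1, \dots, x_j = v, \dots, x_\ell)$. The key observation is that $u$ is at distance exactly~$1$ from $v$ and at distance exactly~$2$ from every other neighbor of $v$, and distance $\geq 2$ from everything else; in particular, for the sub-path $Q' = (x_1, \dots, x_j = v)$, the path $(x_1, \dots, x_j, u)$ obtained by appending $u$ is still isometric, because its length is $|Q'| + 1$ and $d(x_1, u) = d(x_1, v) + 1$ (any $x_1$–$u$ path must pass through $v$, as $v$ is a cut vertex separating $u$ from the rest of the graph — here we use $\deg(u)=1$). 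Replacing $Q$ by the two isometric paths $(x_1,\dots,x_j,u)$ and $(x_{j+1},\dots,x_\ell)$, and deleting the singleton $(u)$, yields a new partition $\mathcal{P}'$. Its cardinality is $|\mathcal{P}| - 1 + (\text{number of pieces of } Q \text{ we kept}) - 1$; since $Q$ is split into at most two parts (and exactly one if $v$ was already an endpoint of $Q$), we get $|\mathcal{P}'| \leq |\mathcal{P}|$, so by minimality $|\mathcal{P}'| = |\mathcal{P}|$ and all inequalities are tight. The resulting path $(x_1, \dots, x_j, u)$ has $u$ as an endpoint and length at least~$1$, as required.

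A small subtlety to handle carefully: if $v$ is itself an endpoint of $Q$, say $Q = (v = x_1, x_2, \dots, x_\ell)$, then appending $u$ gives $(u, v, x_2, \dots, x_\ell)$, which is isometric by the same cut-vertex argument and keeps the same number of paths — this is the clean case. If $v$ is internal to $Q$, the split above costs one extra path, exactly compensated by removing the singleton $(u)$. I also need the degenerate case where $u$ is the \emph{only} neighbour situation forces $v$ to have been on a length-$0$ path $(v)$; then $(v)$ is replaced by $(v,u)$ directly. In every case the count works out and minimality is preserved.

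The main (very minor) obstacle is just bookkeeping the path count across the three cases ($v$ an endpoint of $Q$, $v$ internal to $Q$, $Q$ a singleton), and verifying in each case that appending $u$ to the relevant sub-path of $Q$ keeps it isometric — which reduces entirely to the fact that $v$ is a cut vertex isolating $u$, so every shortest path from any $x_i$ to $u$ has the form (shortest $x_i$–$v$ path) followed by the edge $vu$. There is no real difficulty; the statement is essentially immediate once one notes that a pendant vertex can always be attached to the end of whatever isometric path currently contains its unique neighbour.
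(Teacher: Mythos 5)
Your proof is correct and follows essentially the same exchange argument as the paper: take a minimum IP-partition, observe that if $u$ is a singleton we can split the path $Q$ through $v$ into two isometric pieces with $u$ appended to the piece containing $v$, delete the singleton $(u)$, and conclude by minimality that the cardinality is preserved. One small inaccuracy in your side remark: when $v$ is an endpoint of $Q$, merging $(u)$ into $Q$ strictly \emph{decreases} the cardinality (you lose the singleton but $Q$ is replaced by a single piece), which contradicts minimality rather than ``keeping the same number of paths''---this is precisely how the paper rules out that case before performing the split; your inequality $|\mathcal{P}'|\leq|\mathcal{P}|$ combined with minimality reaches the same conclusion, so the argument still goes through.
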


\begin{proof}
Let $S$ be any IP-partition of $G$ with minimum cardinality that does not satisfy the lemma. Then $S$ contains the one-vertex path $(u)$. Let $Q$ be the path that contains $v$. Observe that $v$ cannot be an endpoint of $Q$, as otherwise we merge $(u)$ and $Q$ into one path which is still an isometric path. Otherwise, let $x$ be a vertex adjacent to $v$ on $Q$ and let $e_1$ and $e_2$ be the endpoints of $Q$, such that $x$ lies on the subpath of $Q$ between $e_1$ and $v$. We replace $Q$ and $\{u\}$ by $(u,v,\dots, e_2)$ and $(e_1,\dots, x)$. Since $Q$ is an isometric 
path, then these two paths are both isometric paths. Hence, we get another IP-partition $S'$ of $G$ with $|S|=|S'|$ that satisfies the property of the lemma. 
\end{proof}

In the remaining of the article, we denote by \emph{cherry} an induced path on three vertices with endpoints of degree~1.
We show that cherries may be assumed to be part of any optimal IP-partition. 

\begin{lemma}[Cherry lemma]\label{lem:cherry-lemma}
	Let $G$ be a graph, and $D$ denote the set of vertices of $G$ of degree~$1$. Let $v\in V(G)$ be a vertex such that $N(v)\cap D=\{u_1,u_2\}$. Then  $G$ has an IP-partition with minimum cardinality containing the path $(u_1,v,u_2)$. 
\end{lemma}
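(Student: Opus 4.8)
The plan is to mimic the exchange argument used in the Leaf Lemma, but now handling two pendant vertices $u_1,u_2$ attached to the same vertex $v$. First I would take an IP-partition $S$ of $G$ of minimum cardinality, and argue that among all such minimum IP-partitions we may pick one that is, in a precise sense, ``closest'' to containing the path $(u_1,v,u_2)$ — for instance, one maximizing the number of the vertices $u_1,v,u_2$ that lie together on one common path of $S$, and subject to that, behaving as nicely as possible. The goal is to show that if $S$ does not already contain $(u_1,v,u_2)$ as one of its paths, we can perform a local surgery producing another minimum IP-partition that is strictly closer to the target, contradicting the choice of $S$.

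Next I would do the case analysis on where the pendant vertices sit in $S$. Since $u_1$ and $u_2$ have degree~$1$ with sole neighbour $v$, each of $u_1,u_2$ in $S$ either forms the one-vertex path $(u_i)$, or appears as an endpoint of a path whose second vertex is $v$ (it cannot appear in the interior of any path, nor as an endpoint adjacent to anything other than $v$). The main situations are: (a) both $(u_1)$ and $(u_2)$ are singletons — then $v$ lies on some path $Q$; as in the Leaf Lemma, $v$ cannot be an endpoint of $Q$ of a path of length $\ge 1$ without an immediate merge, so splitting $Q$ at $v$ into two isometric sub-paths (one of which we extend by prepending $u_1$, making $(u_1,v,\dots)$, and then... ) lets us free $v$; a little care shows we can instead directly build the path $(u_1,v,u_2)$ and reattach the two remaining sub-paths of $Q-v$, each of which is isometric because $Q$ was. (b) $u_1$ is an endpoint of a path $Q = (u_1,v,\dots,e)$ and $(u_2)$ is a singleton — then we split $Q$ as $(u_1,v,u_2)$ together with the isometric sub-path $(\dots,e)$ obtained by deleting $u_1,v$ from $Q$. (c) symmetric to (b). (d) $u_1$ is an endpoint of $Q_1=(u_1,v,\dots)$ and $u_2$ is an endpoint of $Q_2=(u_2,v,\dots)$; but $Q_1$ and $Q_2$ both contain $v$, which is impossible in a partition, so in fact $Q_1=Q_2$, forcing this path to be exactly $(u_1,v,u_2)$ — nothing to do. In every reachable case the replacement paths are isometric (each is a subpath of an old isometric path, or the length-$2$ path $(u_1,v,u_2)$, which is isometric since $u_1,u_2$ have no common neighbour besides... indeed $d(u_1,u_2)=2$ as their only neighbour is $v$), and the total number of paths does not increase.

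The main obstacle I anticipate is case (a): after cutting $Q$ at $v$ we have two sub-paths $Q'$ and $Q''$ of $Q-v$, and we must check that forming $(u_1,v,u_2)$ and keeping $Q', Q''$ does not increase the count — i.e., that we genuinely absorb both singletons $(u_1)$ and $(u_2)$ while only splitting $Q$ once, so the net change is $+1$ (new path $(u_1,v,u_2)$) $-2$ (the two singletons) $+1$ (the extra piece from splitting $Q$) $=0$. One must also handle degenerate sub-cases where $v$ is an endpoint of $Q$ (so only one sub-path survives, and the bookkeeping is even more favourable) or where $Q=(v)$ itself is a singleton (then $(u_1),(v),(u_2)$ merge directly into $(u_1,v,u_2)$, dropping the count by $2$, which contradicts minimality only if we are not careful — but here we are producing a \emph{smaller} partition, so this situation simply cannot occur for a minimum $S$, or if it does we are done immediately). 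I would package these degeneracies by first observing that a minimum $S$ cannot contain $(v)$ as a singleton when $u_1$ or $u_2$ is also a singleton, which removes the worst degenerate branch up front; the remaining verification is then the routine ``$+1-2+1=0$'' accounting sketched above, together with the observation that isometricity is preserved because all new paths are subpaths of old ones or the canonical $2$-path through $v$.
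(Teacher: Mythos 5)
Your proposal is correct and follows essentially the same route as the paper: an exchange argument on a minimum IP-partition $S$, observing that each $u_i$ is either a singleton or the endpoint of a path whose second vertex is $v$, ruling out $v$ being an endpoint of its path by a merge with a singleton, and then performing the same local surgery (split the path through $v$, form $(u_1,v,u_2)$, keep the isometric subpaths) with the same $+1-2+1=0$ bookkeeping. The paper simply phrases it as ``WLOG $(u_1)\in S$'' followed by two sub-cases on $u_2$, which collapses your cases (a)--(d) into the same content.
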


\begin{proof}
	Let $S$ be any IP-partition of $G$ with minimum cardinality that does not satisfy the lemma and without loss of generality, let $S$ contains $(u_1)$. Let $Q$ be the path that contains $v$. Observe that $v$ cannot be an endpoint of $Q$, as otherwise we merge $(u_1)$ and $Q$ into one path which is still an isometric path of $S$. Otherwise, let $e_1$ and $e_2$ be the endpoints of $Q$. 
	
	Suppose $u_2\in \{e_1,e_2\}$. Without loss of generality, let $u_2=e_2$. Let $x\in V(Q)$ be the vertex which is distinct from $u_2$ and is adjacent to $v$. Then we replace $Q$ and $(u_1)$ with $(e_1,\ldots,x)$ and $(u_1,v,u_2)$ to get the desired IP-partition that has the same cardinality as $S$. 
	
	Otherwise, $(u_2)\in S$. Let $\{x_1,x_2\}=N(v)\cap V(Q)$ such that for $i\in \{1,2\}$, $x_i$ lies between $e_i$ and $v$ in $Q$. Now we replace $Q, (u_1), (u_2)$ with $(e_1,\ldots,x_1), (e_2,\ldots,x_2), (u_1,v,u_2)$ to get the desired IP-partition that has the same cardinality as $S$.  
\end{proof}

We derive the following as a corollary of the Cherry lemma.

\begin{lemma}[Twin-cherries lemma]\label{lem:twin-cherry-lemma}
	Suppose that $G$ contains a pair of cherries with an isometric path connecting their middle vertex, and that every other vertex in $G$ is only connected to this subgraph via the middle vertices of the cherries. Then $G$ has an IP-partition with minimum cardinality containing the two cherries as well as the internal part of the path connecting their middle vertex.
\end{lemma}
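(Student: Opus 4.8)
The plan is to obtain the statement as a two-step consequence of the Cherry lemma, followed by one minimality argument for the interior of the connecting path. Write the two cherries as $(u_1,c_1,u_1')$ and $(u_2,c_2,u_2')$, with distinct middle vertices $c_1\neq c_2$, and write the connecting isometric path as $P=(c_1,p_1,\dots,p_{\ell-1},c_2)$, where $\ell=d_G(c_1,c_2)\geq 1$. Before anything else I would record three facts. First, since $P$ is isometric it is induced, so for $1\le i\le\ell-1$ the vertex $p_i$ has degree exactly $2$ in $G$, with neighbours $p_{i-1}$ and $p_{i+1}$ (writing $p_0=c_1$, $p_\ell=c_2$); combined with the hypothesis that the only vertices of the core subgraph adjacent to the rest of $G$ are $c_1$ and $c_2$, this gives that the only edges with exactly one endpoint in $\{p_1,\dots,p_{\ell-1}\}$ are $p_1c_1$ and $p_{\ell-1}c_2$. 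Second, each of $(u_1,c_1,u_1')$ and $(u_2,c_2,u_2')$ is an isometric path, as its endpoints are non-adjacent (both of degree $1$ with the middle vertex as their sole neighbour) and hence at distance $2$. Third, $(p_1,\dots,p_{\ell-1})$ is an isometric path of $G$ (vacuously if $\ell\le 2$): it has length $\ell-2$, and if $d_G(p_1,p_{\ell-1})<\ell-2$ we would get $d_G(c_1,c_2)\le 1+d_G(p_1,p_{\ell-1})+1<\ell$, contradicting that $P$ is isometric.

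Next I would produce a minimum IP-partition containing both cherries. Start from any minimum IP-partition $S$ and apply the argument of the Cherry lemma at $c_1$: if the path of $S$ containing $c_1$ is already $(u_1,c_1,u_1')$ there is nothing to do, and otherwise (exactly as in that proof) each of $u_1,u_1'$ is either a singleton path of $S$ or an endpoint, next to $c_1$, of the path $Q_1\ni c_1$, and a short case analysis shows we can split $Q_1$ and insert $(u_1,c_1,u_1')$ without increasing $|S|$. The only adaptations needed are that the argument goes through unchanged even if $c_1$ has degree-$1$ neighbours other than $u_1,u_1'$ (once $c_1$ is consumed by the cherry, such a vertex is simply forced to be a singleton), and that the operation rearranges only $Q_1$ and the singletons among $(u_1),(u_1')$. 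I would then repeat the same step at $c_2$. Because this second step modifies only the path containing $c_2$ (which is distinct from $(u_1,c_1,u_1')$) together with the possible singletons $(u_2),(u_2')$, and because none of $c_2,u_2,u_2'$ lies on $(u_1,c_1,u_1')$, the first cherry is preserved; we end up with a minimum IP-partition $S''$ containing both $(u_1,c_1,u_1')$ and $(u_2,c_2,u_2')$.

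It remains to handle the interior $(p_1,\dots,p_{\ell-1})$ of $P$, which is nonempty only when $\ell\ge 2$. In $S''$ the vertices $c_1$ and $c_2$ lie in the two cherry paths, so by the first fact above every path of $S''$ that meets $\{p_1,\dots,p_{\ell-1}\}$ is contained in $\{p_1,\dots,p_{\ell-1}\}$, hence is a consecutive subpath of $(p_1,\dots,p_{\ell-1})$. If $S''$ used two or more such subpaths, then replacing all of them by the single path $(p_1,\dots,p_{\ell-1})$ -- which is isometric by the third fact -- would yield an IP-partition of strictly smaller size, contradicting the minimality of $S''$. Therefore $(p_1,\dots,p_{\ell-1})\in S''$, so $S''$ contains the two cherries together with the interior of the connecting path, as required.

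The main obstacle here is not a deep one but a matter of care: justifying the two-fold use of the Cherry lemma, namely that the pass at $c_2$ does not undo the cherry built at $c_1$, and that extra pendant vertices at $c_1$ or $c_2$ (permitted by the hypothesis but not by the literal statement of the Cherry lemma) cause no trouble, which forces us to work from the proof of that lemma rather than cite it verbatim. The other point to get right is the third fact, the one place where isometry of $P$ is genuinely used: it is precisely what prevents the interior vertices of $P$ from being recombined more cheaply by routing through $c_1$ and $c_2$.
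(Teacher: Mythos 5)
Your proof is correct and follows exactly the route the paper intends: the paper states this lemma as an unproved corollary of the Cherry lemma, and your argument supplies precisely the missing details (two successive applications of the Cherry-lemma rearrangement, a check that the second does not disturb the first, and the merging/minimality argument showing the interior of the connecting path must be a single isometric path). The two points you flag as needing care --- possible extra pendant neighbours of the middle vertices, and the use of isometry of $P$ to make the merged interior path isometric --- are indeed the only non-trivial steps, and you handle both correctly.
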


Let us highlight an important behavior that the cherries achieve, and that will help simplify the proofs in our hardness reductions in Sections~\ref{sec:IPP:tw-w1-hardness} and \ref{sec:IPP:eth-diam-tw}.
As stated in Lemma~\ref{lem:cherry-lemma}, any minimum IP-partition $\calP$ of a graph $G$ can be assumed to contain any cherry.
Thus, cherries can be added to a graph while assuming that no path in a minimum IP-partition other than cherries use these newly added vertices.
Consequently, by adding a cherry to the graph, and making its middle vertex adjacent to a set $A$ of vertices, we are able to reduce the distance between elements of $A$ to at most $2$, without changing the ``structure'' of the IP-partition, in the sense that cherries can be ignored from the set of vertices that are reachable by other isometric paths.
Moreover, this metric reduction can be adapted to arbitrary distances by using twin-cherries as described in Lemma~\ref{lem:twin-cherry-lemma}.
Examples of such cherries and twin-cherries are depicted in Figures~\ref{fig:tw-w1:semi-grid}--\ref{fig:tw-w1:valves}.

\renewcommand{\C}{\mathcal{C}}
\newcommand{\I}{\mathcal{I}}
\renewcommand{\P}{\mathsf{P}}
\newcommand{\Q}{\mathcal{Q}}
\newcommand{\QQ}{\mathbf{Q}}
\renewcommand{\R}{\mathcal{R}}

\newcommand{\dd}{\mathbf{d}}

\section{Dynamic programming schemes}
\label{sec:DP}

In this section, we give a dynamic programming algorithm solving \SoPP{} in \XP{} time parameterized by the width of a given tree-decomposition.
This algorithm can be roughly described as storing, for each bag of a decomposition, the possible intersections of isometric paths partitions with the bags, an indication of how the obtained pieces are connected outside the bags, together with the location of their endpoints, which are used to ensure that the paths are isometric.
Let us point that, although this approach appears to follow standard techniques, it requires special attention when handling these indications.

Let us next introduce some terminology that will be useful in the following.
Let $P$ be a path of $G$ and $X\subseteq V(G)$ be a subset of vertices.
The \emph{trace of $P$ on $X$} is the family of paths induced by the connected components of $P[X]$.
If $\mathcal{P}$ is a family of (vertex) disjoint paths, then its \emph{trace on $X$} is the union, among all $P\in \mathcal{P}$, of the traces of $P$ on~$X$. 
Note that the trace of a family of disjoint paths also defines a family of disjoint paths.
Moreover, if $\mathcal{P}$ is a path partition of $G$, then its trace on $X$ is a path partition of $X$.

\subsection{Partial Solutions}

We introduce a notion of partial solution which will be related to the value of a state in our algorithm. 
We note that while we may have given a short and more permissive definition of a partial solution, we have chosen on purpose to provide one that is very constrained in order to simplify most of the upcoming arguments.
For a better intuition of the definition, we refer the reader to Figure~\ref{fig:partial-solution-and-signature}.

Let $T$ be a tree-decomposition of $G$ and $t$ be a node of $T$.
Let $w$ be the width of $T$.
In the definition below, by $\sigma\cup \tau$ for two functions $\sigma,\tau$ we mean the set of vertices mapped by $\sigma,\tau$. 
We call \emph{partial solution} at node $t$ a tuple $\P=(\Q,\alpha,\top,\sigma,\tau)$ where:
\begin{itemize}
    \item $\Q=\{Q_1,\dots, Q_k\}$ is an IP-partition of $G_t$;
    \item $\alpha: \Q \to \{0,1,\dots, w+1\}$ is a $(w+2)$-coloring of these paths; 
    \item $\sigma,\tau : \{1,\dots,w+1\} \to V(G)$ are two functions called \emph{terminal functions};
    \item $\top: \{1,\dots, w+1\}\to \{\{u,v\} : u,v\in X_t \cup \sigma \cup \tau\}$ is a function called \emph{linking function}; 
\end{itemize}
and such that, for each positive color $1\leq i\leq w+1$: 
\begin{itemize}
    \item the paths of color $i$ intersect $X_t$, and those of color $0$ do not;
    \item the function $\top$ maps colors to non-adjacent pairs of vertices that are either endpoints of paths of color $i$ among $\Q$, or terminal vertices in $\sigma(i)$ and $\tau(i)$;
    \item the graph $A(\P,i)$ obtained by union of the paths of color $i$ together with the edges defined by~$\top(i)$ is a $\sigma(i)$--$\tau(i)$ path; 
    moreover, it is required that such a $\sigma(i)$--$\tau(i)$ path is \emph{coherent} with $G$ in the following sense: there exists an isometric $\sigma(i)$--$\tau(i)$ path in $G$ which coincides with $A(\P,i)$ on edges different from those of $\top(i)$, and whose maximal portions not in $G_t$ precisely connect pairs of $\top(i)$. 
\end{itemize}
In the following, we call \emph{assembled path of color $i$} the graph $A(\P,i)$, and call \emph{top link of color $i$} a pair in $\top(i)$. 
We will slightly abuse notation and say that a vertex \emph{$x$ has color $i$}, denoted $\alpha(x)=i$, if $x$ belongs to a path $Q$ of color $i$.
Finally, the \emph{type} of an edge in $A(\P,i)$ is either \emph{regular} if it belongs to $G$, or of \emph{top link} type otherwise.

\begin{figure}[t]
\centering
\includegraphics[page=1,width=0.55\textwidth]{./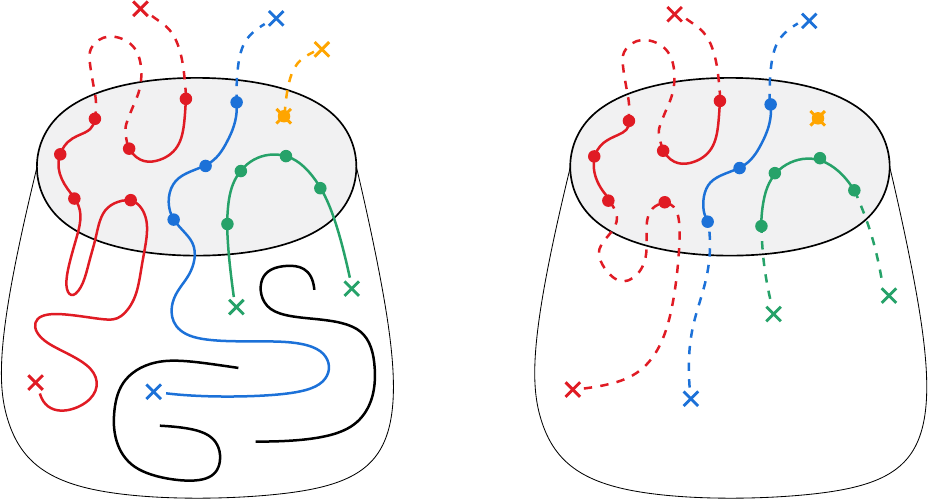}
\caption{The illustration of a partial solution (left) and a compatible signature (right). Paths in black are those of color 0, and other colors represent positive colors $1\leq i\leq w+1$. Dotted lines represent top and bottom links, and crosses represent terminal vertices.}
\label{fig:partial-solution-and-signature}
\end{figure}

Note that if $P_1,\dots, P_k$ is an IP-partition of $G$, then its trace $\Q$ on $V(G_t)$ yields a partial solution together with the following definitions of $\alpha,\top,\sigma$ and $\tau$. 
For $\alpha$, we map each $Q\in \Q$ not intersecting $X_t$ to $0$, and other $Q\in \Q$ to color $i\geq 1$ if $Q$ is a subpath of~$P_i$.
Note that the codomain of $\alpha$ is indeed restricted to $(w+2)$ colors as $|X_t|\leq w+1$ and every path of positive color intersects $X_t$.
Then for each color $1\leq i\leq w+1$ such that $\alpha^{-1}(i)\neq\emptyset$ we do the following.
We map $\sigma(i)$ and $\tau(i)$ to the endpoints of $P_i$ ensuring that $\sigma(i)\neq\tau(i)$ if $P_i$ has at least two nodes.
As for $\top$, we consider a natural ordering of the vertices of~$P_i$, and define $\top(i)$ as the set of consecutive endpoints of distinct paths of color $i$ in $\Q$ with respect to this ordering.
Intuitively, $\alpha$ indicates from which path $P_i$ the pieces of paths obtained by intersection with $G_t$ come from, while $\top$ indicates which endpoints of such pieces are linked through $G-G_t$ in~$P_i$; as of $\sigma,\tau$ they will prove useful in later representations of partial solutions and their compatibility.

The \emph{order} of a partial solution $\P$ is defined as the number of paths of color $0$ plus the number of non-empty classes of colors $1\leq i\leq w+1$.
Intuitively, this value aims to correspond to the number $k$ of isometric paths $P_1,\dots,P_k$ from which this partial solution originates in the full graph $G$.

Note that not all partial solutions correspond to the intersection of an actual IP-partition $P_1,\dots,P_k$ of $G$ with $G_t$.
This is however not an issue as we will ensure (thanks to notions of compatibility between states and nodes of the decomposition) that these partial solutions are not used by their parent nodes at some stage of the dynamic programming. 

\subsection{Signatures of Partial Solutions}\label{sec:dp-signatures}

In our dynamic programming, we will not compute all partial solutions at a given node $t$, but possible compact representations of these solutions based on their interaction with $X_t$, that we will call signatures. 
This is because the number of partial solutions, due to their intersection with $G_t-X_t$, is not bounded by $n^{f(w)}$ for any function, while we need such a bound to achieve \XP{} time.
Thus, a \emph{signature} will be a tuple $\S=(\R,\beta,\top,\bot,\sigma,\tau)$ where:
\begin{itemize}
    \item $\R=\{R_1,\dots,R_k\}$ is an IP-partition of $G[X_t]$;
    \item $\beta: \R \to \{1,\dots, w+1\}$ is a $(w+1)$-coloring of these paths; 
    \item $\sigma,\tau : \{1,\dots,w+1\} \to V(G)$ are \emph{terminal functions};
    \item $\top,\bot: \{1,\dots, w+1\}\to \{\{u,v\} : u,v\in X_t \cup \sigma\cup \tau\}$ are \emph{linking functions};
\end{itemize}
and such that, for every color $1\leq i\leq w+1$: 
\begin{itemize}
    \item the functions $\top$ and $\bot$ map colors to non-adjacent pairs of vertices that are either endpoints of paths of color $i$ among $\R$, or terminal vertices in $\sigma(i)$ and $\tau(i)$;
    \item the graph $A(\S,i)$ obtained by union of the paths of color $i$ together with the edges defined by~$\top(i)$ and $\bot(i)$ is a $\sigma(i)$--$\tau(i)$ path; 
    moreover, it is required that such a $\sigma(i)$--$\tau(i)$ path is \emph{coherent} with $G$ in the following sense: 
    there exists an isometric $\sigma(i)$--$\tau(i)$ path in $G$ which coincides with $A(\S,i)$ on edges different from those of $\top(i)$ and $\bot(i)$, and whose maximal portions not in $X_t$ either lie in $G-G_t$ in which case they connect pairs of $\top(i)$, or lie in $G_t-X_t$ in which case they connect pairs of $\bot(i)$.
\end{itemize}
Intuitively, $\R$ represents the trace of the family $\Q$ of a partial solution $(\Q,\alpha,\top,\sigma,\tau)$ on the bag~$X_t$, together with some additional constraints we describe next; see Figure~\ref{fig:partial-solution-and-signature}.
The function $\beta$ plays the same role as $\alpha$ in a partial solution except that color $0$ is not allowed anymore.
Functions $\top,\sigma,\tau$ are the same functions.
Pairs in $\bot$ play the dual role of $\top$, that is, of indicating which connections are made through $G_t-X_t$.
Similarly as for partial solutions, we call \emph{assembled path of color $i$} the path $A(\S,i)$.
We call \emph{top link of color $i$} a pair in $\top(i)$, and \emph{bottom link of color $i$} a pair in $\bot(i)$. 
The \emph{type} of an edge in $A(\S,i)$ may now be \emph{regular} if it belongs to $G$, or of \emph{top link} type if it belongs to $\top(i)$, and of \emph{bottom link} type if it belongs to $\bot(i)$.

Note that testing whether an arbitrary tuple is a signature can be conducted in $n^{O(1)}$ time, with the coherence of $A(\S,i)$ being tested by appropriately confronting the distances in $G$ from $\sigma(i)$ and $\tau(i)$ to the rest of the path, and making sure that the distances between pairs in $\top(i)$ and $\bot(i)$ are the same in $G-X_t$ or $G_t-X_t$, respectively.
Note that all the distances in the graph can be precomputed in $n^{O(1)}$-time.

Finally, we say that a signature $\S$ and a partial solution $\P$ are \emph{compatible} if, for every positive color $1\leq i\leq w+1$, the assembled paths $A(\S,i)$ and $A(\P,i)$ only differ by bottom link of $A(\S,i)$ being maximal portion of $A(\P,i)$ completely included in $G_t-X_t$. 
Note that to any partial solution $\P$ corresponds a compatible signature $\S$ that is obtained by intersecting it with $X_t$ in the natural way, and coding the parts in $G_t-X_t$ by bottom links: the coherence of the assembled paths of $\S$ naturally comes from the coherence of the assembled paths of~$\P$.
In particular $\top,\sigma$ and $\tau$ are identical for $\S$ and $\P$. 
This concludes the definition of the objects we will be manipulating in the remaining of the section.

\subsection{States of the Dynamic Programming Algorithm}

A state of our dynamic programming algorithm will be a signature 
\[
	\S=(\R,\beta,\top,\bot,\sigma,\tau)
\]
together with a node $t$ of a nice tree-decomposition $T$.
We define its value $\dd[\S,t]$ as the minimum order of a partial solution which is compatible with $\S$, and $+\infty$ otherwise.
Note that as we are aiming at \FPT{} and \XP{} running times, we can indeed assume that a state is a signature since verifying whether an arbitrary tuple is a signature can be conducted in $n^{O(1)}$ time, and we return $+\infty$ for the state otherwise.

The goal of the section is to prove that the values of the states can be computed in a bottom-up fashion, 
and that the size of an optimal solution to \SoPP can be obtained at the root of the tree-decomposition.

In the remainder of the section, we will describe modifications to perform on signatures and partial solutions, to argue that the value of a state can be computed from the value of a child state.
For better readability, we will describe these modifications directly on their associated assembled path, as we did for the definition of compatibility between a partial solution and a signature.
Formally, however, it should be understood that these modifications are to be performed on the colored path partition, the top and bottom links functions, and the terminal functions that describe these assembled paths.

\subsubsection{Leaf Node}

Let $t$ be a leaf node.
Then $X_t=\emptyset$ and the only signature consists of an empty set $\R$ and empty functions $\beta,\top,\bot,\sigma,\tau$ since they are all related to the nature of a path partition of $X_t$ which is trivially empty.
The optimal value of this state is 0 which is optimal since $G_t$ is the empty graph.
This is formalized in the next lemma.

\begin{lemma}\label{lem:dp-leaf-node}
    If $t$ is a leaf node then $\dd[\S,t]=0$ for $\S=(\emptyset,\emptyset,\emptyset,\emptyset,\emptyset,\emptyset)$, and $\dd[\S',t]=+\infty$ for every other tuple $\S'$.
\end{lemma}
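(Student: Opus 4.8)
The plan is to argue that a leaf node carries essentially no information, so the lemma is almost immediate from the definitions, with the only subtlety being to rule out all tuples other than the all-empty one.

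First I would observe that, by the definition of a nice tree-decomposition, a leaf node $t$ has $X_t=\emptyset$, and moreover $G_t$ is the empty graph (the subtree of $T$ rooted at a leaf consists of that single leaf, whose bag is empty). Since a signature $\S=(\R,\beta,\top,\bot,\sigma,\tau)$ requires $\R$ to be an IP-partition of $G[X_t]=G[\emptyset]$, we must have $\R=\emptyset$; and then $\beta,\top,\bot,\sigma,\tau$ are all functions on the empty set of colors (there are no non-empty colour classes), hence all empty. Conversely, the all-empty tuple $\S=(\emptyset,\emptyset,\emptyset,\emptyset,\emptyset,\emptyset)$ trivially satisfies every bullet in the definition of a signature vacuously, so it is a valid signature at $t$. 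This establishes that $\S$ is the unique signature at a leaf node, so $\dd[\S',t]=+\infty$ for every other tuple $\S'$ by the convention that non-signatures receive value $+\infty$.

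Next I would compute $\dd[\S,t]$. By definition $\dd[\S,t]$ is the minimum order over all partial solutions $\P$ at $t$ compatible with $\S$. Take $\P=(\Q,\alpha,\top,\sigma,\tau)$ with $\Q=\emptyset$ and all of $\alpha,\top,\sigma,\tau$ empty: this is a partial solution at $t$ since $\Q=\emptyset$ is the unique IP-partition of the empty graph $G_t$ and all other conditions are vacuous, and it is compatible with $\S$ since for every colour the assembled paths are both empty. Its order is the number of paths of colour $0$ plus the number of non-empty colour classes, which is $0+0=0$. Hence $\dd[\S,t]\le 0$, and since the order of any partial solution is a nonnegative integer (in fact the only partial solution at $t$ is the empty one), $\dd[\S,t]=0$.

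The main (and only) obstacle is being careful about the quantification over tuples: one must make sure that ``no other tuple is a signature at a leaf'' genuinely holds, i.e., that the emptiness of $X_t$ forces emptiness of all six components, rather than merely of $\R$. This follows because $\beta,\top,\bot$ have domain the set of colours actually used by $\R$ (equivalently, since every path of positive colour intersects $X_t=\emptyset$, no colour is used), and $\sigma,\tau$ are likewise only defined on used colours; so all six components are empty and the tuple is uniquely determined. I expect this to be a one-paragraph proof in the paper, essentially a restatement of the definitions applied to the empty graph.
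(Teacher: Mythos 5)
Your proof is correct and follows essentially the same route as the paper, which simply observes that $X_t=\emptyset$ forces the all-empty tuple to be the unique signature and that $G_t$ being the empty graph gives optimal value $0$. Your write-up is somewhat more careful about why all six components must be empty and why the empty partial solution is the unique one, but there is no substantive difference in approach.
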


\subsubsection{Introduce Vertex Node} 

Let $t$ be an introduce vertex node with child $t'$; then $X_t=X_{t'}\cup \{x\}$ for some vertex $x$ of $G-G_{t'}$.
Let $\S=(\R,\beta,\top,\bot,\sigma,\tau)$ be a signature for $t$, and let $i$ be the color of $x$.
Note that $x$ is not incident to any bottom link $xy$ as otherwise $x,y$ are separated by $X_t$ which is excluded in the definition of a signature, where it is required that there exists a $\sigma(i)$--$\tau(i)$ isometric path containing $x,y$ with the $x$--$y$ portion lying in $G_t-X_t$.
Let $\S'=(\R',\beta',\top',\bot',\sigma',\tau')$ be a signature for~$t'$.
We say that $\S'$ is \emph{introduce-compatible} with $\S$ if $A(\S,j)=A(\S',j)$ for every $j\neq i$, and if, additionally:
\begin{enumerate}[label=\textbf{I\arabic*}, ref=I\arabic*]
    \item\label{dp:intro-comp-solo} $x$ is the only vertex of its color in $\S$ and $x$ does not belong to $\S'$; or
    \item\label{dp:intro-comp-not-solo} $x$ is not the only vertex of its color in $\S$, and either: 
    \begin{enumerate}[label=\textbf{I2\alph*}, ref=I2\alph*]
        \item\label{dp:intro-comp-endpoint} $x$ is an endpoint of $A(\S,i)$, in which case $A(\S',i)$ contains $x$ as an endpoint as well (recall that endpoints of the colored paths, which are terminal vertices, may lie out of $X_t$) and $A(\S',i)$ may only differ on $A(\S,i)$ by the edge $xy$ being a top link in $\S'$; or 
        
        \item\label{dp:intro-comp-middle} $x$ is not an endpoint of $A(\S,i)$, in which case $A(\S',i)$ does not contain $x$, $x$ has two neighbors $y,z$ in $A(\S,i)$, and $A(\S',i)$ is equal to $A(\S,i)$ after replacing $yxz$ by a top link $yz$; this later situation is depicted in Figure~\ref{fig:intro-comp-middle}.
    \end{enumerate}
\end{enumerate}
In the following, we note $I(\S)$ the set of signatures that are \emph{introduce-compatible} with $\S$, and prove the following recurrence.

\begin{lemma}\label{lem:dp-introduce-node}
    If $t$ is an introduce vertex node with child $t'$ such that $X_{t'}=X_t\setminus \{x\}$ and $\S=(\R,\beta,\top,\bot,\sigma,\tau)$ is a signature of $t$, then 
    \[
        \dd[\S,t]= \min_{\S'\in I(\S)}
        \begin{cases}
        \dd[\S',t']+1 & \text{if $x$ is the only vertex of its color in $\S$,}\\
        \dd[\S',t'] & \text{otherwise.}\\
        \end{cases}       
    \]
\end{lemma}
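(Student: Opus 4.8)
The plan is to prove both inequalities by exhibiting, in each direction, a partial solution of the appropriate order. For the ``$\geq$'' direction, I would start with a partial solution $\P'=(\Q',\alpha',\top',\sigma',\tau')$ at $t'$ compatible with some $\S'\in I(\S)$ and realizing $\dd[\S',t']$, and construct from it a partial solution $\P$ at $t$ compatible with $\S$. Since $G_t = G_{t'} + \{x\}$ and $x$ has no neighbours in $G_t$ outside $X_t$ (as $x\notin G_{t'}$ and adjacency respects the tree-decomposition), the only modification needed is to insert $x$ into the colour-$i$ structure. In case \ref{dp:intro-comp-solo}, we simply add the one-vertex path $(x)$ to $\Q'$ with colour $i$; this raises the order by exactly $1$, matching the ``$+1$'' case. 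In cases \ref{dp:intro-comp-endpoint} and \ref{dp:intro-comp-middle}, we reinsert $x$ into the assembled path $A(\P',i)$ exactly where the top link (or prefix edge) indicated by $\S'$ sits: in \ref{dp:intro-comp-endpoint} we extend the colour-$i$ path by the edge $xy$; in \ref{dp:intro-comp-middle} we replace the top link $yz$ by the sub-path $yxz$. The order is unchanged since no new colour class or colour-$0$ path is created. In all cases I must check that the resulting $\P$ is a genuine partial solution: the paths of $\Q$ are still isometric (for the modified colour-$i$ pieces this is because the corresponding portion of the coherent isometric $\sigma(i)$--$\tau(i)$ path in $G$ witnessing $\S$ passes through $x$ exactly as prescribed), the colouring and linking functions satisfy the structural constraints, and $A(\P,i)$ is coherent with $G$ — the latter being inherited from the coherence built into the signature $\S$.

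For the ``$\leq$'' direction, I would take a partial solution $\P=(\Q,\alpha,\top,\sigma,\tau)$ at $t$ compatible with $\S$ and realizing $\dd[\S,t]$, and produce a partial solution $\P'$ at $t'$ compatible with some $\S'\in I(\S)$ with order equal to (resp.\ one less than) that of $\P$. This is the reverse surgery: delete $x$ from its colour-$i$ path $Q\ni x$ in $\Q$. If $Q=(x)$ was a singleton (this is exactly the case where $x$ is the only vertex of its colour in $\S$, since every colour-$i$ path meets $X_t=X_{t'}\cup\{x\}$ and here $X_{t'}$ contributes nothing of colour $i$), then removing it drops the order by $1$ and yields $\P'$ compatible with the $\S'\in I(\S)$ given by \ref{dp:intro-comp-solo}. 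Otherwise $x$ lies on a longer colour-$i$ path; if $x$ is an endpoint of $Q$, truncating it gives a shorter colour-$i$ path and we record the severed edge $xy$ as a top link in $\S'$ — matching \ref{dp:intro-comp-endpoint} — while if $x$ is internal to $Q$, deleting it splits $Q$ into two colour-$i$ sub-paths and the edges $yx$, $xz$ become a single top link $yz$ in $\S'$ — matching \ref{dp:intro-comp-middle}. Each such $\P'$ has the same order as $\P$. I then need to argue $\P'$ is a valid partial solution at $t'$ and that the induced signature $\S'$ is indeed introduce-compatible with $\S$; coherence of $A(\P',i)$ with $G$ follows from that of $A(\P,i)$ together with the fact that the segment of the witnessing isometric path through $x$ can be absorbed into a top link, since after removing $x$ the vertices $y$ and $z$ (or $y$ and $\sigma(i)$/$\tau(i)$) lie outside $G_{t'}-X_{t'}$.

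The main obstacle, and the step deserving the most care, is verifying in the ``$\leq$'' direction that the signature $\S'$ obtained by the surgery above is genuinely one of the signatures in $I(\S)$ as defined — in particular that $\S'$ is itself a well-formed signature (its assembled paths remain coherent with $G$, with the newly created top links connecting pairs whose distance in $G-G_{t'}$ equals the prescribed length) and that $A(\S,j)=A(\S',j)$ for all colours $j\neq i$. The first point is subtle because the definition of a signature demands the existence of an isometric $\sigma(i)$--$\tau(i)$ path in $G$ realizing all the top and bottom links, and we must exhibit it from the one witnessing $\S$; here the key is that a top link in $\S'$ connecting $y$ and $z$ corresponds precisely to the concatenation (through $x$) of the colour-$i$ structure in $G_t-X_t$, so the portion of the witness lying outside $X_{t'}$ indeed connects pairs of $\top'$ as required. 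Everything else is bookkeeping on the finitely many cases \ref{dp:intro-comp-solo}–\ref{dp:intro-comp-middle}, and I would organize the write-up as: (i) a lemma that the described surgeries map partial solutions to partial solutions preserving compatibility and tracking the order, applied in both directions; (ii) assembling the two inequalities into the claimed recurrence.
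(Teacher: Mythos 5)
Your proposal follows essentially the same route as the paper's proof: the same case analysis on \ref{dp:intro-comp-solo}, \ref{dp:intro-comp-endpoint}, \ref{dp:intro-comp-middle}, the same two surgeries (inserting $x$ into the colour-$i$ assembled path of a child partial solution, and conversely deleting $x$ and recording the severed edges as top links), and the same coherence argument inherited from the signatures. One correction: you have swapped the labels of the two inequalities. Building a partial solution $\P$ at $t$ compatible with $\S$ from a $\P'$ realizing $\dd[\S',t']$ bounds $\dd[\S,t]$ \emph{from above} (it is the ``$\leq$'' direction, since $\dd[\S,t]$ is a minimum over compatible partial solutions), while deleting $x$ from an optimal $\P$ at $t$ to obtain some $\S'\in I(\S)$ with a compatible $\P'$ is the ``$\geq$'' direction; since both constructions are present and correct, the recurrence still follows, but the write-up should attach each construction to the right inequality.
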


\begin{proof}
    We start by proving the $\leq$ inequality.
    Let $\S':=(\R',\beta',\top',\bot',\sigma',\tau')$ be a signature of $t'$ that is introduce-compatible with $\S$, and $i$ be the color of $x$ in $\S$.
    If $\dd[\S',t']=+\infty$, then the inequality trivially holds.
    Otherwise, let $\P':=(\Q',\alpha',\top',\sigma',\tau')$ be a partial solution at node $t'$, compatible with signature $\S'$, and of minimum order $\dd[\S',t']$.
     
    We first assume that $x$ is the only vertex of its color in $\S$.
    Then $\S'$ satisfies Condition~\ref{dp:intro-comp-solo}, that is, $x$ does not belong to $\S'$.
    Moreover by the compatibility of $\S$ and $\S'$, no other element has color $i$ in $\S'$.
    We create a partial solution $\P$ of order $\dd[\S',t']+1$ for $t$ from $\P'$ by setting $A(\P,j):=A(\P',j)$ for every color $j$ different from $i$, and setting $A(\P,i):=A(\S,i)$. 
    Clearly this defines an IP-partition of $G_t$ and the coherence of $A(\P,i)$ follows from that of $A(\S,i)$.
    Hence we indeed obtain a partial solution. 
    Its compatibility with $\S$ also follows from $A(\P,i)=A(\S,i)$ as $\S'$ and $\P'$ are compatible and only differ on color $i$ with $\S$ and $\P$.
    Hence we conclude that $\dd[\S,t]\leq \dd[\S',t']+1$ in that case.

    \begin{figure}[t]
    \centering
    \includegraphics[page=2,width=0.55\textwidth]{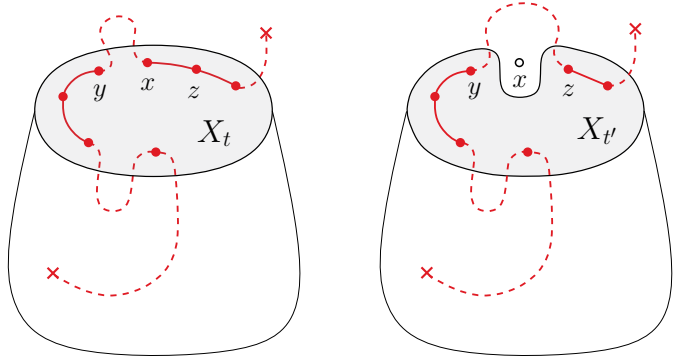}
    \caption{An illustration of Condition \ref{dp:intro-comp-middle}. Parts of the signature $\S$ are depicted on the left, and the corresponding compatible parts of $\S'$ are depicted on the right.}
    \label{fig:intro-comp-middle}
    \end{figure}

    Let us now assume that $x$ is not the only vertex of its color in $\S$.
    Then $\S'$ satisfies Condition~\ref{dp:intro-comp-not-solo}.
    We create a partial solution $\P$ of same order for $t$ by modifying $\P'$ as follows:
    \begin{itemize}
        \item In Case \ref{dp:intro-comp-endpoint} we change the type of the edge $xy$ to regular it if it is regular in $\S$;
        \item In Case \ref{dp:intro-comp-middle} we remove the top link $yz$, insert $x$ between $y$ and $z$, and add the edges $xy$ and $yz$, making sure they are of the same type as in $\S$; note that this operation may amount to merge two paths of $\Q'$ along $x$.
    \end{itemize}
    The fact that $\P$ is indeed a partial solution follows from the fact that the obtained graph $A(\P,i)$ is a path by construction, and its coherence follows from that of the signature $\S$ which has the same intersection on $X_t$.
    As of the compatibility of $\S$ and $\P$, it follows from the compatibility of $\S'$ and $\P'$ together with the fact that the changes made do not concern bottom links. 
    Hence, $\dd[\S,t]\leq \dd[\S',t']$ in that case, concluding the first part of the proof.

    \medskip

    We now move to the $\geq$ inequality.
    Let $i$ be the color of $x$ in $\S$, and let $\P$ be a partial solution of $t$ compatible with $\S$.
    We first assume that $x$ is the only vertex of its color in $\S$.
    Then $x$ has no neighbor among $X_t$ in $A(\S,i)$, and since $x$ is not incident to a bottom link we derive that $\sigma(i)$ and $\tau(i)$ lie in $G-G_{t'}$, with $x$ possibly being one of such terminals, or both. 
    Note as well that $A(\S,i)=A(\P,i)$.
    A partial solution $\P'$ and signature $\S'$ for $t'$ are trivially obtained by removing $A(\P,i)$ from $\P$, and $A(\S,i)$ from $\S$.
    Their compatibility trivially follows from the fact that we removed a complete color.
    Moreover, $\S$ and $\S'$ are introduce-compatible as they satisfy Condition~\ref{dp:intro-comp-solo}.
    We conclude that $\dd[\S,t]\geq \dd[\S',t']-1\geq \min_{\S''\in I(\S)} \dd[\S'',t']-1$, proving the desired inequality in that case.

    Let us now assume that $x$ is not the only vertex of its color in $\S$.
    Then, $x$ has at least one and at most two neighbors among $X_t$ in $\P$.
    A partial solution $\P'$ for $t'$ is obtained by replacing the edges incident to $x$ by top links in $A(\P,i)$.
    Note that the obtained $\P'$ indeed defines a partial solution as $G_{t'}$ stays covered, $A(\P',i)$ is a path, and its coherence follows from that of $\S$.
    We construct a signature $\S'$ from $\S$ by performing the same modifications, whose compatibility trivially follows from the fact that $x$ is not incident to bottom links.
    Moreover, $\S$ and $\S'$ are introduce-compatible as they satisfy Conditions~\ref{dp:intro-comp-endpoint} and \ref{dp:intro-comp-middle} depending on the degree of $x$ in $A(\S,x)$.
    We derive that $\dd[\S,t]\geq \dd[\S',t']\geq \min_{\S''\in I(\S)} \dd[\S'',t']$ concluding this case and the proof.
\end{proof}

\subsubsection{Forget Vertex Node}

Let $t$ be a forget vertex node with child $t'$; then $X_t=X_{t'}\setminus \{x\}$ for some vertex $x$ in $X_t$.
Let $\S=(\R,\beta,\top,\bot,\sigma,\tau)$ be a signature for $t$.
Note that analogously as for the introduce vertex node, we may assume here that $x$ is \emph{not} incident to a top link, by definition of a signature.

Let $\S'=(\R',\beta',\top',\bot',\sigma',\tau')$ be a signature for $t'$.
We define a notion of compatibility for forget nodes that is analogous to the one for introduce vertex nodes with the top links being replaced by bottom links.
We say that $\S'$ is \emph{forget-compatible} with $\S$ if $A(\S,j)=A(\S',j)$ for every $j\neq i$, and if, additionally:
\begin{enumerate}[label=\textbf{F\arabic*}, ref=F\arabic*]
    \item\label{dp:forget-comp-solo} $x$ is the only vertex of its color in $\S'$ and $x$ does not belong to $\S$;
    \item\label{dp:forget-comp-not-solo} $x$ is not the only vertex of its color in $\S'$, and either: 
    \begin{enumerate}[label=\textbf{F2\alph*}, ref=F2\alph*]
        \item\label{dp:forget-comp-endpoint} $x$ is an endpoint of $A(\S',i)$, in which case $A(\S,i)$ contains $x$ as an endpoint as well, and $A(\S,i)$ may only differ on $A(\S',i)$ by the edge $xy$ being a bottom link in $\S$; or
        \item\label{dp:forget-comp-middle} $x$ is not an endpoint of $A(\S',i)$, in which case $A(\S,i)$ does not contain $x$, $x$ has two neighbors $y,z$ in $A(\S',i)$, and $A(\S,i)$ is equal to $A(\S',i)$ after replacing $yxz$ by a bottom link $yz$.
    \end{enumerate}
\end{enumerate}
In the following, we note $F(\S)$ the set of signatures that are \emph{forget-compatible} with $\S$, and prove the following recurrence.

\begin{lemma}\label{lem:dp-forget-node}
    If $t$ is a forget vertex node with child $t'$ and $\S$ is a signature of $t$, then 
    \[
        \dd[\S,t]=\min_{\S'\in F(\S)} \dd[\S',t'].
    \]
\end{lemma}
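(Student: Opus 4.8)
The plan is to prove the recurrence $\dd[\S,t]=\min_{\S'\in F(\S)}\dd[\S',t']$ by establishing the two inequalities separately, mirroring closely the structure of the proof of \Cref{lem:dp-introduce-node} but with the roles of top and bottom links exchanged (since forgetting $x$ moves it from $X_{t'}$ into $G_t-X_t$, so connections through $x$ that were encoded by regular edges or top links at $t'$ may become bottom links at $t$). Note first that since $G_t=G_{t'}$ for a forget node, a partial solution at $t$ is exactly a partial solution at $t'$, so the entire content of the lemma is about how the signature representation changes when $x$ leaves the bag.

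For the $\leq$ direction, I would take a signature $\S'\in F(\S)$ with $\dd[\S',t']<+\infty$ and a minimum-order partial solution $\P'$ compatible with $\S'$; since $G_t=G_{t'}$, the same $\P:=\P'$ is a partial solution at $t$, and I must check it is compatible with $\S$. This is where the case analysis \ref{dp:forget-comp-solo}--\ref{dp:forget-comp-middle} is used: in Case~\ref{dp:forget-comp-solo} the color $i$ of $x$ consists only of $x$ in $\S'$, so the corresponding path (a one-vertex path, by minimality considerations via the Leaf/Cherry lemmas or simply directly) becomes a color-$0$ path in $\P$ relative to $\S$ — and one checks the order is unchanged since a positive color class disappears but a color-$0$ path appears; in Case~\ref{dp:forget-comp-endpoint} the regular edge $xy$ at $\S'$ becomes the bottom link $xy$ in $\S$, and one verifies that the maximal portion of $A(\P,i)$ that now lies in $G_t-X_t$ is precisely the $x$-end, so coherence of $A(\S,i)$ transfers from that of $A(\S',i)$; in Case~\ref{dp:forget-comp-middle} the path $yxz$ through $x$ becomes a bottom link $yz$ in $\S$, and again the portion $yxz$ lies entirely in $G_t-X_t$, giving coherence. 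In all cases $\dd[\S,t]\le \dd[\S',t']$, and minimizing over $\S'\in F(\S)$ gives the inequality.

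For the $\geq$ direction, I would take a minimum-order partial solution $\P$ compatible with $\S$ realizing $\dd[\S,t]$, set $\P':=\P$ (again $G_t=G_{t'}$), and construct the \emph{unique} signature $\S'$ compatible with $\P'$ at node $t'$ in the natural way — that is, intersect $\P'$ with $X_{t'}=X_t\cup\{x\}$ and re-encode the parts of $G_{t'}-X_{t'}$ as bottom links; crucially $x$ is now in the bag so any bottom link of $\S$ incident to $x$ (if $x$ was incident to one) gets ``unfolded'': the portion of $A(\P,i)$ through $x$ reappears in $\S'$, either as regular edges if it lies in $X_{t'}$ or as a top link / bottom link depending on which side of $X_{t'}$ it lies. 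One then checks $\S'\in F(\S)$ by matching against \ref{dp:forget-comp-solo}--\ref{dp:forget-comp-middle} according to whether $x$ is isolated in its color, an endpoint of the assembled path, or interior — exactly the trichotomy that defines forget-compatibility — and concludes $\dd[\S,t]=\text{order}(\P)=\text{order}(\P')\ge \dd[\S',t']\ge \min_{\S''\in F(\S)}\dd[\S'',t']$.

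The main obstacle, and the one requiring care rather than routine bookkeeping, is the coherence bookkeeping when $x$ is incident to a bottom link in $\S$: one must argue that after unfolding $x$ back into the bag at $t'$, the resulting assembled path $A(\S',i)$ is still coherent with $G$, i.e., that the witness isometric $\sigma(i)$--$\tau(i)$ path in $G$ can be chosen so that its maximal off-$X_{t'}$ portions correctly split into ``in $G-G_{t'}$'' pieces matching $\top'(i)$ and ``in $G_{t'}-X_{t'}$'' pieces matching $\bot'(i)$ — and symmetrically in the $\leq$ direction, that folding $x$ out of the bag at $t$ preserves coherence of $A(\S,i)$. Since $x$ is (by the remark preceding the lemma) not incident to a top link in $\S$ and — at $t'$ — a vertex of the bag, these portions change only locally around $x$, so the same witness isometric path works with at most a relabeling of which stored link its $x$-neighboring segment corresponds to; I would state this explicitly but not belabor it. A secondary subtlety is making sure the \emph{order} is genuinely preserved in Case~\ref{dp:forget-comp-solo}, where a positive color class vanishes but is replaced by a color-$0$ path, which is exactly compensated in the definition of order (number of color-$0$ paths plus number of nonempty positive classes).
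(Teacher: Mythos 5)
Your proposal is correct and follows essentially the same route as the paper's proof: since $G_t=G_{t'}$, the partial solution is reused verbatim (up to the recoloring between color $0$ and a fresh positive color in the F1 case), and the signature is transformed by folding/unfolding $x$ into regular edges or bottom links according to the trichotomy F1/F2a/F2b, with coherence transferring because the witness isometric path is unchanged. Two minor imprecisions worth fixing: in Case F1 the color-$i$ assembled path need not be a one-vertex path (it may extend beyond $x$ via bottom links or to terminals outside $G_{t'}$), and in the $\geq$ direction the unfolded portions around $x$ can only become regular edges or bottom links, never top links, since they lie in $G_{t'}-X_{t'}$.
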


\begin{proof}
    We start with proving the $\leq$ inequality. 
    Let $\S'$ be a signature of $t'$ that is forget-compatible with $\S$.
    If $\dd[\S',t']=+\infty$, then the inequality trivially holds.
    Let us assume otherwise that $\dd[\S',t']\neq+\infty$, and let $\P'$ be a partial solution at node $t'$, compatible with $\S'$, and of minimal order.
    Note that $G_t=G_{t'}$.
    Furthermore as $x$ is not incident to any top link, $\P'$ defines a partial solution for $t$ as well.
    We argue that it is compatible with~$\S$.
    Since $\S$ and $\S'$ are forget-compatible, $A(\S,j)=A(\S',j)$ for every $j\neq i$ where $i$ is the color of $x$ in $\S'$.
    We may thus focus on color $i$.
    If $\S'$ satisfies Condition~\ref{dp:forget-comp-solo} then color $i$ does not exist in $\S$ and the partial solution is trivially compatible.
    Otherwise if $\S'$ satisfies Condition~\ref{dp:forget-comp-endpoint} then $A(\S,i)$ and $A(\P',i)$ may only disagree on the edge that is incident to $x$.
    However since this edge is not a top link in $A(\S',i)$, the corresponding portion in $A(\P',i)$ either lies in $X_t$ (as an edge) or in $G_{t'}-X_t$.
    In both cases it is compatible with $xy$ being a bottom link in $A(\S,i)$.
    The same argument holds for Condition~\ref{dp:forget-comp-middle} with edges $yx$ and $xz$ being either regular edges or bottom link in $A(\S',i)$, and hence the $y$--$z$ portion of $A(\P',i)$ being compatible with $yz$ being a bottom link in $A(\S,i)$.
    We conclude that $\P'$ is compatible with $\S$, hence that $\dd[\S,t]\leq \dd[\S',t']= \min_{\S''\in F(\S)} \dd[\S'',t']$, proving the desired inequality.
    
    We now prove the $\geq$ inequality.
    Let $\P$ be a partial solution at node $t$, compatible with~$\S$, and of order $k$.
    Let $i$ be the color of $x$ in $\P$, with possibly $i=0$.
    If $i=0$ we trivially get a partial solution $\P'$ and a compatible signature $\S'$ by picking an extra color $\ell$, setting $A(\P',\ell)$ to be the path of $\P$ containing $x$, and setting $A(\S',\ell)$ to be its natural representation with $\sigma(\ell)$--$x$ and $x$--$\tau(\ell)$ portions being bottom links, if any.
    Moreover, this signature $\S'$ is forget-compatible with $\S$ by Condition~\ref{dp:forget-comp-solo}.
    Note that the order of $\P'$ is equal to that of $\P$ as we removed a path of color $0$ and replaced it by one of a new color.
    Hence $\dd[\S,t]\geq \dd[\S',t]$ in that case. 
    In the other case where $i\neq 0$, $\P$ defines partial solution for $t'$ as $G_t=G_{t'}$ and every element in $X_{t'}$ has a positive color.
    Moreover, note that $x$ is not the only vertex of his color in $\P$ or $\S$.
    We create a signature $\S'$ for $t'$ as follows:
    \begin{itemize}
        \item If $x$ is an endpoint of $A(\S,i)$ then its incident edge $xy$ is a bottom link and we replace it by either a regular edge in $A(\S',i)$ if $x$ and $y$ are adjacent, or by a bottom link otherwise;
        \item If $x$ is not an endpoint of $A(\S,i)$, then it does not belong to $A(\S,i)$. For $y,z$ the closest vertices from $x$ in $A(\P,i)$ that lie in $X_t\cup \{\sigma(i),\tau(i)\}$, we remove the bottom link $yz$ from $A(\S,i)$ and add the edges $yx,xz$ if they are part of $G$, or add these pairs as bottom links otherwise, in order to obtain $A(\S',i)$.
    \end{itemize}
    Note that the obtained signature $\S'$ belongs to $F(\S)$ in each of these case as they satisfy Conditions~\ref{dp:forget-comp-endpoint} and \ref{dp:forget-comp-middle}.
    The compatibility with $\P$ follows from the fact that the modifications performed on the assembled path depends on the nature of the edges incident to $x$ in $A(\P,i)$. 
    We derive $\dd[\S,t]\geq \dd[\S',t']\geq \min_{\S''\in F(\S)} \dd[\S'',t']$, which concludes the proof.
\end{proof}

\subsubsection{Join Node}

Let $t$ be a join node with children $t_1$ and $t_2$; then $X_t=X_{t_1}=X_{t_2}$.
Let $\S=(\R,\beta,\top,\bot,\sigma,\tau)$ be a signature for $t$, and $\S_1=(\R_1,\beta_1,\top_1,\bot_1,\sigma_1,\tau_1)$ and $\S_2=(\R_2,\beta_2,\top_2,\bot_2,\sigma_2,\tau_2)$ be signatures for $t_1$ and $t_2$, respectively.

We say that $\S_1$ and $\S_2$ are pairwise \emph{join-compatible} if
$\R_1=\R_2$,
$\beta_1=\beta_2$,
$\sigma_1=\sigma_2$, 
$\tau_1=\tau_2$, and additionally,
$\bot_1\subseteq \top_2$ and
$\bot_2\subseteq \top_1$. 
Then, for every color $i$ with $1\leq i\leq w+1$, we have that $A(\S_1,i)$ and $A(\S_2,i)$ differ on the types of their bottom and top links only, whenever $\S_1$ and $\S_2$ are \emph{join-compatible}.
We say that $\S$ is \emph{join-compatible} with a pair $\{\S_1,\S_2\}$ if $\S_1$ and $\S_2$ are pairwise join-compatible, and
$\R=\R_1=\R_2$,
$\beta=\beta_1=\beta_2$,
$\sigma=\sigma_1=\sigma_2$,
$\tau=\tau_1=\tau_2$, and additionally,
$\bot=\bot_1\cup \bot_2$, and
$\top=\top_1\cap \top_2$.
Also here, for each color $i$, we derive that $A(\S,i)$, $A(\S_1,i)$ and $A(\S_2,i)$ differ on the types of their bottom and top links only, whenever $\S$ and $\{\S_1,\S_2\}$ are \emph{join-compatible}.
In the following, we denote by $J(\S)$ the family of pairs $\{\S_1,\S_2\}$ of signatures that are join-compatible with the signature $\S$, and prove the following recurrence.

\begin{lemma}\label{lem:dp-join-node}
    If $t$ is a join node with children $t_1,t_2$ and $\S=(\R,\beta,\top,\bot,\sigma,\tau)$ is a signature of~$t$, then 
    \[
        \dd[\S,t]=\min_{\{\S_1,\S_2\}\in J(\S)} 
        \left\{\dd[\S_1,t_1] + \dd[\S_2,t_2] - |\{i : \beta^{-1}(i) \neq \emptyset\}|\right\}.
    \]
\end{lemma}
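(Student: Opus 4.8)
The proof of Lemma~\ref{lem:dp-join-node} follows the same two-inequality template as the introduce and forget cases, but the bookkeeping is heavier because two subtrees are merged and we must avoid double-counting paths that straddle the shared bag $X_t$. The term $-|\{i : \beta^{-1}(i) \neq \emptyset\}|$ is exactly the correction for the colors that appear (with the same trace $\R_i$ on $X_t$) in both children: such a color contributes a piece of a path to $G_{t_1}$ and a piece to $G_{t_2}$, and these are glued along $X_t$ into a single path of color $i$ in $G_t$, so it is counted twice by $\dd[\S_1,t_1]+\dd[\S_2,t_2]$ and must be subtracted once.

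\medskip

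\noindent\textbf{The $\leq$ inequality.} I would fix a pair $\{\S_1,\S_2\}\in J(\S)$ with both values finite (otherwise the bound is vacuous), take partial solutions $\P_1=(\Q_1,\alpha_1,\top_1,\sigma_1,\tau_1)$ and $\P_2=(\Q_2,\alpha_2,\top_2,\sigma_2,\tau_2)$ of minimum order compatible with $\S_1$ and $\S_2$, respectively, and build a partial solution $\P$ for $t$. Since $G_t$ is the union of $G_{t_1}$ and $G_{t_2}$ glued along $X_t$, and the two children share the same trace $\R$ on $X_t$ (because $\R_1=\R_2=\R$ and $\beta_1=\beta_2=\beta$), I would define, for each color $i$, the assembled path $A(\P,i)$ by gluing $A(\P_1,i)$ and $A(\P_2,i)$ along their common intersection with $X_t$: bottom links of $A(\S_1,i)$ (which by join-compatibility are top links of $\S_2$, hence realised as genuine portions of $A(\P_2,i)$ lying in $G_{t_2}-X_t$) get replaced by the corresponding real subpath coming from $\P_2$, and symmetrically for $\bot_2\subseteq\top_1$; the remaining top links of $A(\S,i)=\top_1\cap\top_2$ stay top links. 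The key checks are: (a) the result is a single path — this uses that in each of $A(\P_1,i)$, $A(\P_2,i)$ the portion inside $X_t$ is the \emph{same} subpath of $R$-pieces (forced by $\R_1=\R_2$, $\sigma_1=\sigma_2$, $\tau_1=\tau_2$, $\top_1\cap\top_2=\top$), so the two assembled paths are the same path with some top/bottom edges "expanded" on disjoint sides; (b) $\Q:=\bigcup_i \alpha^{-1}(i)$ (color-$0$ paths from both children included) is an IP-partition of $G_t$ — disjointness holds since $G_{t_1}-X_t$ and $G_{t_2}-X_t$ are disjoint and the $X_t$-parts coincide; (c) coherence of $A(\P,i)$ — the witnessing isometric $\sigma(i)$--$\tau(i)$ path in $G$ is obtained from the witness for $\S$ by substituting, for each surviving $\bot(i)$ pair, the $G_{t_1}-X_t$ or $G_{t_2}-X_t$ realisation guaranteed by the coherence of $\S_1$ or $\S_2$; here we crucially use that $\bot=\bot_1\cup\bot_2$ and that a bottom pair of $\S$ in $\bot_1$ is realised in $G_{t_1}-X_t$ by $\P_1$, and similarly for $\bot_2$, and that these realisations are vertex-disjoint from each other and from the rest. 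Finally, $\P$ is compatible with $\S$ because its assembled paths differ from $A(\S,i)$ exactly by the $\bot(i)$ pairs being expanded into $G_t-X_t$ portions. For the order count: a color with $\beta^{-1}(i)\neq\emptyset$ contributes one path in each of $\P_1,\P_2$ and is merged into one in $\P$; colors absent from one child but present in the other (as color $0$, necessarily, since every positive color hits $X_t$ and the $X_t$-traces are identical) contribute once to the relevant child and once to $\P$; color-$0$ paths of each child survive untouched. Summing gives $\mathrm{order}(\P)=\dd[\S_1,t_1]+\dd[\S_2,t_2]-|\{i:\beta^{-1}(i)\neq\emptyset\}|$, hence $\dd[\S,t]\leq\dd[\S_1,t_1]+\dd[\S_2,t_2]-|\{i:\beta^{-1}(i)\neq\emptyset\}|$; minimising over $J(\S)$ finishes this direction.

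\medskip

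\noindent\textbf{The $\geq$ inequality.} Here I would take a partial solution $\P=(\Q,\alpha,\top,\sigma,\tau)$ for $t$ compatible with $\S$ and of minimum order $\dd[\S,t]$, and split it. For each color $i$ (including $0$), every maximal portion of $A(\P,i)$ lying in $G_t-X_t$ lies entirely in $G_{t_1}-X_t$ or entirely in $G_{t_2}-X_t$: this is because $X_t$ separates $G_{t_1}-X_t$ from $G_{t_2}-X_t$, so a connected subpath avoiding $X_t$ cannot switch sides. I would route each such portion accordingly, defining $\P_1$ by keeping the $G_{t_1}$-part of every path and turning the $G_{t_2}-X_t$ portions into top links, and symmetrically $\P_2$; color-$0$ paths of $\P$ are sent whole to whichever child contains them, and a color-$0$ path entirely inside $X_t$-... — but in fact a color-$0$ path is disjoint from $X_t$ by definition, so it lies wholly in one child. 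Paths of positive color $i$: since $\alpha^{-1}(i)$ intersects $X_t$, after splitting, color $i$ appears in \emph{both} $\P_1$ and $\P_2$ (possibly only via its $X_t$-piece on one side). Setting $\bot_1$ to record the $G_{t_1}-X_t$ portions and $\bot_2$ the $G_{t_2}-X_t$ ones, and $\top_1=\top\cup\bot_2$, $\top_2=\top\cup\bot_1$, the corresponding signatures $\S_1,\S_2$ are well-defined (coherence of each $A(\S_j,i)$ inherited from that of $A(\P,i)$, restricted appropriately), are pairwise join-compatible, and satisfy $\bot_1\cup\bot_2=\bot$, $\top_1\cap\top_2=\top$, $\R_1=\R_2=\R$, $\beta_1=\beta_2=\beta$, $\sigma_j=\sigma$, $\tau_j=\tau$, so $\S$ is join-compatible with $\{\S_1,\S_2\}$, i.e.\ $\{\S_1,\S_2\}\in J(\S)$. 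By construction $\P_j$ is a partial solution at $t_j$ compatible with $\S_j$, so $\dd[\S_j,t_j]\leq\mathrm{order}(\P_j)$. The order count is the same as before read backwards: $\mathrm{order}(\P_1)+\mathrm{order}(\P_2)=\mathrm{order}(\P)+|\{i:\beta^{-1}(i)\neq\emptyset\}|$, since exactly the positive colors (which are precisely the colors with $\beta^{-1}(i)\neq\emptyset$) get counted in both children. Therefore $\dd[\S_1,t_1]+\dd[\S_2,t_2]-|\{i:\beta^{-1}(i)\neq\emptyset\}|\leq\mathrm{order}(\P)=\dd[\S,t]$, and since $\{\S_1,\S_2\}\in J(\S)$ the minimum on the right-hand side is at most $\dd[\S,t]$.

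\medskip

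\noindent\textbf{Main obstacle.} The delicate point is the gluing in the $\leq$ direction: one must argue that $A(\P_1,i)$ and $A(\P_2,i)$ agree \emph{as paths} on their $X_t$-portions — same vertices in the same cyclic-free order — so that expanding the bottom links of one inside $G_{t_1}-X_t$ and of the other inside $G_{t_2}-X_t$ yields a single coherent path rather than something that branches or disconnects. This is forced by the equalities $\R_1=\R_2$, $\beta_1=\beta_2$, $\sigma_1=\sigma_2$, $\tau_1=\tau_2$ and $\top_1\cap\top_2=\top$ in the definition of join-compatibility, but spelling out why these equalities pin down the path structure inside $X_t$ — and checking that the substituted $G_{t_j}-X_t$ subpaths are pairwise vertex-disjoint, so no isometric path is broken and the union $\Q$ really is a partition — is where the bulk of the careful case analysis lives. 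The coherence verification (producing the witnessing isometric path in $G$ by stitching the $G_{t_1}$- and $G_{t_2}$-witnesses at the shared $\top$ pairs) is the second source of routine-but-lengthy checking.
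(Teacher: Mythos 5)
Your proposal matches the paper's proof essentially step for step: the $\leq$ direction merges the assembled paths of the two children by taking regular edges plus the mutual top links $\top_1\cap\top_2$ (using $\bot_1\subseteq\top_2$ and $\bot_2\subseteq\top_1$ to see the result is a single $\sigma(i)$--$\tau(i)$ path), the $\geq$ direction splits a partial solution along $X_t$ with $\top_1=\top\cup\bot_2$ and $\top_2=\top\cup\bot_1$, and the order bookkeeping is identical. (One local slip: a bottom link of $\S_1$ is realised by a genuine portion of $A(\P_1,i)$ in $G_{t_1}-X_t$, not of $A(\P_2,i)$ in $G_{t_2}-X_t$, but your later coherence discussion states this correctly, so it reads as a transposition of indices rather than a gap.)
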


\begin{proof}
    We start by proving the $\leq$ inequality.
    Let $\S_1,\S_2$ be two signatures such that $\{\S_1,\S_2\}\in J(\S)$.
    If $\dd[\S_1,t_1]=+\infty$ or $\dd[\S_2,t_2]=+\infty$ then the inequality trivially holds.
    Otherwise, let $\P_1$ and $\P_2$ be partial solutions at nodes $t_1$ and $t_2$, compatible with $\S_1$ and $\S_2$, and of minimal order $k_1$ and $k_2$, respectively.
    Consider the tuple $\P$ consisting of every path of color $0$ in $\P_1$ or $\P_2$, plus, for each positive color $i$, the \emph{merging} of $A(\P_1,i)$ and $A(\P_2,i)$ resulting in $A(\P,i)$ and defined as the union of their regular edges and mutual top link, i.e., those edges in $\top_1\cap \top_2$.
    We now argue that this operation defines a partial solution of the desired order.

    \begin{claim}
        $\P$ is a partial solution of order $k_1+k_2-|\{i : \alpha^{-1}(i)\neq \emptyset,\ i>0\}|$ for node $t$.
    \end{claim}

    \begin{proof}
        Let us focus on some color $i>0$ and first argue that $A(\P,i)$ is a $\sigma(i)$--$\tau(i)$ path.
        Recall that $A(\P_1,i)$ and $A(\P_2,i)$ are $\sigma(i)$--$\tau(i)$paths, respectively.
        Moreover, as $\bot_1\subseteq \top_2$, the parts of $A(\P_1,i)$ connecting two endpoints $x,y\in X_t\cup \sigma(i)\cup \tau(i)$ with internal vertices in $G_{t_1}-X_t$ are such that $\{x,y\}\in \top_2(i)$, and symmetrically for $A(\P_2,i)$.
        Furthermore, these parts precisely consist of their regular edges.
        Hence by connecting these pieces the operation yields a $\sigma(i)$--$\tau(i)$ path.
        We note that each vertex of $A(\P_1,i)$ and $A(\P_2,i)$ is part of $A(\P,i)$, hence that $\P$ codes an IP-partition of $G_t$.

        Let us now analyze the order $k$ of $\P$.
        The number of paths of color $0$ in $\P$ is equal to the number of paths of color $0$ in $\P_1$ and $\P_2$, which is 
        \[ 
            k_1-|\{i : \alpha_1^{-1}(i) \neq \emptyset,\ i>0\}|+k_2-|\{i : \alpha_2^{-1}(i) \neq \emptyset,\ i>0\}|.
        \]
        As of the number of paths of color $i>0$ in $\P$, it is equal to $|\{i : \alpha_1^{-1}(i) \neq \emptyset,\ i>0\}|$.
        By summing up the two, we get the desired value.
        \claimqedhere
    \end{proof}

    \begin{claim}
        $\P$ and $\S$ are compatible and $|\{i : \alpha^{-1}(i)\neq \emptyset,\ i>0\}|=|\{i : \beta^{-1}(i) \neq \emptyset\}|$.
    \end{claim}

    \begin{proof}
        We focus on an assembled path $A(\P,i)$ for some color $i>0$.
        Recall that this assembled path originates from two assembled paths $A(\P_1,i)$ and $A(\P_1,i)$ who are described by their traces $A(\S_1,i)$ and $A(\S_2,i)$ on $X_t$, respectively.
        Since $\S_1$ and $\S_2$ are compatible with $\P_1$ and $\P_2$ we derive that $A(\P,i)$ and $A(\S,i)$ satisfy the compatibility constraint on the intersection with $G_{t_1}-X_t$ and $G_{t_2}-X_t$, respectively, hence on the intersection with $G_t-X_t$.
        As of the top links in $\S$, they are precisely those in $\top_1\cap \top_2$ by join-compatibility.
        Hence $\S$ and $\P$ are compatible. 
        The second part of the statement follows from the fact that signatures are defined on positive colors and that paths of color $0$ in partial solution are required to be disjoint from $X_t$.
        \claimqedhere
    \end{proof}

    By the two claims above, we derive that $\dd[\S,t]\leq \dd[\S_1,t_1]+\dd[\S_1,t_1]-|\{i : \beta^{-1}(i) \neq \emptyset\}|$, hence to the desired inequality by minimality over $\{\S_1,\S_2\}\in J(\S)$.
    
    We now move to the $\geq$ inequality.
    Let $\P$ be a partial solution at node $t$, compatible with $\S$, and of order $k$.
    Let $\P_1$ and $\P_2$ be the tuples obtained by intersecting $\P$ on $G_{t_1}$ and $G_{t_2}$ in the natural way, where $\top_1$ is the union of the pairs in $\top$ plus the pair only separated by vertices of $G_{t_2}-X_t$ in $A(\S,i)$, and analogously for $\top_2$.

    \begin{claim}
        $\P_1$ and $\P_2$ are partial solutions at nodes $t_1$ and $t_2$.
    \end{claim}

    \begin{proof}
        Let us focus on $\P_1$ and $t_1$ for the rest of the proof as the other case is symmetric.
        Note that every vertex of $G_{t_1}$ that was covered by an assembled path of $\P$ stays covered by an assembled path by definition of $\P_1$, and those covered by paths of color $0$ are as well covered since those paths are not modified.
        As of the coherence of assembled paths $A(\P_1,i)$, $i>0$, it follows from the fact that $A(\P,i)$ is coherent, and that $A(\P_1,i)$ is equal to $A(\P,i)$ up to isometric portions being replaced by top links.
        \claimqedhere
    \end{proof}

    Analogously, let $\S_1$ and $\S_2$ be the signatures obtained by intersection of $\S$ on $G_{t_1}$ and $G_{t_2}$ in the natural way, with a link $uv$ of $\S_1$ being a bottom link if the $u$--$v$ portion of $A(\P,i)$ lies in $G_{t_1}$, a top link otherwise, and symmetrically for $\S_2$.
    Note that $\S_1$ and $\S_2$ are pairwise join-compatible by this construction, and together they are join-compatible with $\S$.
    Moreover these signatures are compatible with $\P_1$ and $\P_2$.
    We derive that $\dd[\S, t]\geq\dd[\S_1,t_1]+\dd[\S_2,t_2]-|\{i : \alpha^{-1}(i)\cap X_t \neq \emptyset\}|$ by construction, hence to the desired inequality by minimality over $\{\S_1,\S_2\}\in J(\S)$.
\end{proof}

\subsection{Proof of \Cref{thm:IPP:DP-xp-tw}}

Let us first show that the optimal size of a solution is found at the root node of the tree-decomposition.

\begin{lemma}\label{lem:dp-root-node}
    The minimum size of an IP-partition of $G$ is equal to $\dd[\S_0,t]$ where $\S_0=(\emptyset,\emptyset,\emptyset,\emptyset,\emptyset,\emptyset)$ and $t$ is the root of the decomposition.
\end{lemma}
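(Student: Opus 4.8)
The plan is to establish the equality by two inequalities, relating IP-partitions of the full graph $G$ to partial solutions at the root node $t$, whose bag $X_t$ is empty.

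First I would prove that $\dd[\S_0,t]$ is at most the minimum size of an IP-partition of $G$. Take any IP-partition $P_1,\dots,P_k$ of $G$ of minimum size. Since $G_t=G$, its trace on $V(G_t)$ is itself, and as described in the paragraph following the definition of partial solutions, this trace yields a partial solution $\P=(\Q,\alpha,\top,\sigma,\tau)$ at node $t$: because $X_t=\emptyset$, every path $P_i$ fails to intersect $X_t$, so $\alpha$ maps every path to color $0$, and the functions $\top,\sigma,\tau$ are empty (they are indexed by positive colors, none of which are used). The order of $\P$ is then exactly the number of color-$0$ paths, namely $k$. This partial solution is trivially compatible with the empty signature $\S_0$, since there are no positive colors on which to check the compatibility condition. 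Hence $\dd[\S_0,t]\leq k$.

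Conversely, I would show that any partial solution $\P$ compatible with $\S_0$ at the root yields an IP-partition of $G$ whose size equals the order of $\P$. Since $X_t=\emptyset$, no path of $\Q$ can intersect $X_t$, so by the definition of a partial solution every path of $\Q$ has color $0$; in particular there are no positive colors, no assembled paths, and no terminal or linking data to worry about. Thus $\Q$ is literally an IP-partition of $G_t=G$, and its order is the number of color-$0$ paths, which is $|\Q|$. Therefore the minimum size of an IP-partition of $G$ is at most $\dd[\S_0,t]$. Combining the two inequalities gives the claimed equality.

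The routine obstacle here is essentially bookkeeping: one must carefully invoke the convention that signatures and the functions $\beta,\top,\bot,\sigma,\tau$ are defined only on positive colors, so that at an empty bag everything collapses to a plain IP-partition of $G$, and verify that ``order'' coincides with cardinality in this degenerate regime. There is no genuine difficulty, since $X_t=\emptyset$ trivializes all the coherence and linking constraints. The only thing to be slightly careful about is that Lemma~\ref{lem:dp-leaf-node} through Lemma~\ref{lem:dp-join-node} have been shown to correctly propagate the values $\dd[\S',t']$ up the tree, so that $\dd[\S_0,r]$ is genuinely computed by the dynamic programming; that correctness is exactly what those lemmas provide, and here we only need the semantic characterization of $\dd[\S_0,t]$ at the root. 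Finally, one remarks that since every node has $n^{O(w)}$ signatures and each recurrence is evaluated in $n^{O(w)}$ time over $n^{O(1)}$ nodes of the nice tree-decomposition, the total running time is $n^{O(w)}=n^{O(\tw)}$, which completes the proof of \Cref{thm:IPP:DP-xp-tw}.
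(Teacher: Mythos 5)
Your proof is correct and follows essentially the same route as the paper's: observe that at the root $X_t=\emptyset$ forces every path of a compatible partial solution to have color $0$, so partial solutions compatible with $\S_0$ are exactly the IP-partitions of $G$ and their order is their cardinality. The paper states this correspondence in one step rather than as two inequalities, and it does not include your closing remarks on running time (which belong to the proof of the theorem, not this lemma), but the substance is identical.
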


\begin{proof}
    Since $t$ is the root, $X_t=\emptyset$ and the only signature consists of an empty set $\R$ and empty functions $\beta,\top,\bot,\sigma,\tau$, i.e., it is $\S_0$.
    Partial solutions compatible with this state are precisely those of the form $Q_1,\dots,Q_k$ with $\alpha$ mapping each $Q_i$ to color $0$, which thus define actual IP-partitions of $G_t$.
    We conclude that the minimum value of an IP-partition is given by $\dd[\S_0,t]$, proving the lemma.
\end{proof}

We are now ready to describe our dynamic programming algorithm.

The algorithm first computes (and stores) the values of each possible state in a bottom-up fashion, starting with the leaf nodes of the decomposition, and finishing at its root node $t$. 
Then it outputs the value of $\dd[\S_0,t]$ for $\S_0=(\emptyset,\emptyset,\emptyset,\emptyset,\emptyset,\emptyset)$.
The fact that each state can be correctly computed follows by induction on the type of nodes in the decomposition, relying on Lemmas~\ref{lem:dp-leaf-node}--\ref{lem:dp-join-node}. 
The fact that the solution to \SoPP{} is obtained follows by Lemma~\ref{lem:dp-root-node}.
This concludes the correctness of the algorithm.

We now turn to proving that the algorithm runs within the claimed \XP{} time, which essentially follows from the next bound on the number of states and the time to compute their values.

\begin{lemma}\label{lem:dp-number-signatures}
    The number of distinct signatures is bounded above by $n^{O(w)}$ and the value of one state can be computed within this time.
\end{lemma}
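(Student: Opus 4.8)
The plan is to bound, for a fixed node $t$ with bag $X_t$ of size at most $w+1$, the number of admissible choices for each component of a signature $\S=(\R,\beta,\top,\bot,\sigma,\tau)$, and then multiply. The path partition $\R$ of $G[X_t]$ can be encoded by a linear order of $X_t$ together with a choice of which of the $|X_t|-1$ gaps are cut, so there are at most $(w+1)!\cdot 2^{w}=2^{O(w\log w)}$ possibilities for $\R$; in particular $\R$ has at most $w+1$ pieces, hence the coloring $\beta$ has at most $(w+1)^{w+1}=2^{O(w\log w)}$ possibilities. The terminal functions $\sigma,\tau\colon\{1,\dots,w+1\}\to V(G)$ contribute a factor of at most $n^{2(w+1)}=n^{O(w)}$. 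For the linking functions the key point is that the total number of links is only $O(w)$: for a color $i$ with $m_i$ pieces of $\R$ spanning $p_i$ vertices, the assembled path $A(\S,i)$ has at most $p_i+2$ vertices (the extra ones being $\sigma(i),\tau(i)$ when not endpoints of pieces) and hence at most $p_i+1$ edges, of which exactly $p_i-m_i$ are regular edges internal to the pieces, leaving at most $m_i+1$ links; summing over colors (the colors with $m_i=0$ each contribute at most one link) gives $O(w)$ links in total. Once $\R,\sigma,\tau$ are fixed, each link is an unordered pair among the at most $3(w+1)$ vertices of $X_t\cup\sigma\cup\tau$, carrying a label in $\{\mathrm{top},\mathrm{bottom}\}$ and a color, so specifying all of them amounts to choosing $O(w)$ such decorated pairs out of $O(w^3)$, i.e.\ at most $(O(w^3))^{O(w)}=2^{O(w\log w)}$ possibilities for $(\top,\bot)$. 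Multiplying the four bounds and using $w+1\le n$ (so that $2^{O(w\log w)}\le 2^{O(w\log n)}=n^{O(w)}$), the number of signatures at $t$ is $n^{O(w)}$.

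For computing the value of a single state $(\S,t)$, I would rely on the recurrences of Lemmas~\ref{lem:dp-leaf-node}--\ref{lem:dp-join-node}, each of which expresses $\dd[\S,t]$ as a minimum, over the family of signatures at the child $t'$ that are introduce-/forget-compatible with $\S$ (or over the family of join-compatible pairs $\{\S_1,\S_2\}$ at children $t_1,t_2$), of an expression involving only the already-computed values at the children. Proceeding bottom-up, I would, for the fixed $\S$, enumerate a superset of the candidate tuples having the shape of a signature at the relevant child, consisting of all choices of $\R,\beta,\sigma,\tau$ together with a list of $O(w)$ decorated pairs for $(\top,\bot)$; by the first part this superset has size $n^{O(w)}$ (and $n^{O(w)}\cdot n^{O(w)}=n^{O(w)}$ for pairs in the join case) and can be generated within this time. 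For each candidate I would discard it unless it is a valid signature and unless the relevant compatibility relation with $\S$ holds — both being local conditions on the assembled paths, testable in $n^{O(1)}$ time by confronting distances in $G$ computed once, as noted after the definition of a signature — then look up the stored child value(s) and keep the running minimum. This runs in $n^{O(w)}$ time, as claimed.

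The step I expect to require the most care is the linear bound on the number of links. A naive count that treats $\top(i)$ and $\bot(i)$ as arbitrary subsets of the $\Theta(w^2)$ pairs on $X_t\cup\sigma\cup\tau$ yields only $2^{\Theta(w^2)}$ signatures, which is not $n^{O(w)}$ unless $w=O(\log n)$; it is essential to exploit that the links are precisely the non-regular edges of the assembled paths, and that these paths together have only $O(w)$ edges besides the regular ones internal to the pieces. A secondary point to handle cleanly is that the enumeration underlying the running-time bound must generate link functions from this bounded linear structure (rather than as arbitrary pair-subsets), so that the enumeration itself stays within $n^{O(w)}$ time; once the superset of candidates is set up this way, the remaining verifications are routine and polynomial.
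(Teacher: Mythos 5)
Your proposal is correct and follows essentially the same decomposition as the paper: bound the choices for $\R,\beta$ by $2^{O(w\log w)}$ via a linear-order encoding, bound $\sigma,\tau$ by $n^{O(w)}$ as maps from colors to vertices, bound $\top,\bot$ by $2^{O(w\log w)}$, multiply, and then observe that evaluating each of the recurrences from Lemmas~\ref{lem:dp-leaf-node}--\ref{lem:dp-join-node} amounts to iterating over the (at most $n^{O(w)}$, or $n^{O(w)}\cdot n^{O(w)}$ at a join) candidate child signatures, checking validity and compatibility in $n^{O(1)}$ time, and taking the minimum. The one place you invest markedly more care than the paper is the linking-function count: the paper dispatches $\top,\bot$ with a terse statement that they ``map each color to $(w+3)^2$ values each for a total of $2^{O(w\log w)}$,'' whereas you make explicit the point that is actually doing the work — namely, that the links are exactly the non-regular edges of the assembled paths, that each color $i$ contributes at most $m_i+1$ of them where $m_i$ is its number of pieces, and that $\sum_i(m_i+1)=O(w)$ — and you correctly observe that a naive count over arbitrary subsets of pairs would only give $2^{\Theta(w^2)}$. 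This is the same bound reached by the same underlying structure, spelled out more rigorously; there is no gap.
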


\begin{proof}
    Let us analyze the composition of a signature $\S=(\R,\beta,\top,\bot,\sigma,\tau)$.
    First note that the pair $\R,\beta$ defines a partition of $X_t$ into colored paths, and that the number of such partitions is bounded above by the number of linear orderings of the vertices of a bag (representing the concatenation of a set of paths), times the number of possible intervals in this ordering (representing each color), times the number of possible intervals within these intervals (representing the actual endpoints of the colored paths).
    Thus, the number of distinct values for $\R,\beta$ can be roughly bounded by $\tw! \cdot 2^\tw\cdot 2^\tw$ which is $2^{O(\tw \log \tw)}$.
    As for $\top,\bot$ they map each color to $(w+3)^2$ values each for a total number of $2^{O(w \log w)}$ distinct values.
    Finally, $\sigma$ and $\tau$ map each color class to at most $2$ vertices among $n$ vertices, for a total number of $n^{O(w)}$ distinct possible values.
    We conclude that the number of distinct signatures is bounded above by $2^{O(w \log w)}\cdot n^{O(w)}$ which is bounded by $n^{O(w)}$ since $w\leq n$.

    Let us now analyze the complexity of computing the value of a state $\S,t$.
    First, we need to check whether $\S$ is indeed a valid state in $n^{O(1)}$ time.
    Then, we rely on Lemmas~\ref{lem:dp-introduce-node}--\ref{lem:dp-join-node} to compute the value of $\S,t$ given the values of its children that we have already computed by induction.
    The base case corresponds to the leaf nodes, whose values of states can be initialized in $n^{O(1)}$ time per state using Lemma~\ref{lem:dp-leaf-node}.
    For the other cases, since iterating through the states of the (at most two) children takes $n^{O(w)}$ time, and checking the compatibility of each state takes time polynomial in $n$, we obtain the desired time bound.
\end{proof}

Finally, since a tree-decomposition of width $w\leq 2\tw$ can be computed in time $2^{O(\tw)}\cdot n$ \cite{DBLP:conf/focs/Korhonen21}, and can be transformed into a nice tree-decomposition with $O(n)$ nodes in $O(wn)$ time~\cite{bodlaender2013fine,kloks1994treewidth}, we derive Theorem~\ref{thm:IPP:DP-xp-tw} that we restate here.

\restatethmIPPdpxptw*

\subsection{Proof of \Cref{thm:IPP:DP-fpt-diam-tw}}

Note that in our dynamic programming algorithm, the dependence on $n$ in the number of states comes from the fact that we store the guessed endpoints of paths of an actual solution, and that these endpoints may lie anywhere in $G$.
Moreover, note that the role of these endpoints $\sigma,\tau$ is limited to check the conditions of a signature, that is, whether the distances from $\sigma,\tau$ to the traces of paths on the bag of a node, together with the side on which they live, are coherent.
Thus, we can improve the time bound of the algorithm to \FPT{} if we can perform these verifications without storing arbitrary endpoints.
In this section, we show that we are able to do so if we parameterize by the diameter of the graph in addition to the treewidth, proving Theorem \ref{thm:IPP:DP-fpt-diam-tw} that we restate here.

\restateIPPdpdiamtw*

The idea is to observe that the number of all possible distances from a vertex of $G$ to the elements of a bag $X_t=\{x_1,\dots,x_k\}$ of the decomposition is limited when the diameter is small. 
More formally, call \emph{distance profile} of $u$ to $X_t$ the vector of size $k$ where the $i^\text{th}$ coordinate denotes the distance from $u$ to $x_i$ in $G$. 
Note that there are at most $(\diam+1)^{w+1}$ such distance profiles, where $w$ is the width of the decomposition. 
Hence there are at most $(\diam+1)^{w+1}$ equivalence classes in $V(G)$ with respect to their distance to $X_t$, that we can further refine by splitting each equivalence class into two depending on whether their elements lie in the bottom part $G_t-X_t$, or in the top part $G-G_t$. 
The number of such refined classes is $\diam^{O(w)}$ and for each class we can name one vertex (say the one of smallest index) to be the \emph{representative} of the class.
Then, the function $\sigma,\tau$ need only to map to these representatives to be able to check the validity of a signature in polynomial time in $n$.
Conducting the same counting as in the proof of Lemma~\ref{lem:dp-number-signatures} we derive that the number of states in the algorithm becomes bounded by $2^{O(w \log w)}\cdot \diam^{O(w^2)}$
which is $\diam^{O(w^2)}$.
Computing these equivalence classes and their representative takes $\diam^{O(w^2)}\cdot n^{O(1)}$ time.
Testing the validity of a signature, as well as testing its compatibility with the signature of a neighboring node in the decomposition, stays the same and requires polynomial time in $n$.
Summing up, the value of each state can be computed in $\diam^{O(w^2)}\cdot n^{O(1)}$ time.
Now since we are able to compute the diameter and the treewidth of $G$ within this time \cite{bentert2023parameterized} we derive a $\diam^{O(\tw^2)}\cdot n^{O(1)}$ time algorithm for \SoPP{}, as desired.

\section{Hardness with Respect to Pathwidth}
\label{sec:IPP:tw-w1-hardness}

In this section, we show that the \SoPP
problem is \W[1]-hard when parameterized by the
pathwidth (and hence treewidth) of the input graph,
thereby proving Theorem~\ref{thm:IPP:w1-tw-hard}.
We present a parameterized reduction from
\textsc{Multicolored Clique}, which is \W[1]-hard when parameterized
by the solution size; see, e.g., \cite[Chapter 13]{cygan2015parameterized}.

\medskip

\defproblem{\textsc{Multicolored Clique}}{A graph $G$, an integer $k$, and a
partition $(V_1, V_2, \dots, V_k)$  of $V(G)$ such that $V_i$
is an independent set and $|V_i| = n$, i.e., $V_i = \{v^i_1, \dots
v^i_n\}$, for every $i \in [k]$.}{Does there exist a clique in $G$ containing one
vertex from $V_i$ for every $i \in [k]$?}

\medskip

In an instance of \textsc{Multicolored Clique}, the sets $V_1,\dots,V_k$ are called \emph{color classes}, and the goal can be rephrased as deciding whether there exists a multicolored clique in $G$.

\subsection{Overview of the Reduction}
\label{sec:IPP:tw-w1:overview}

The reduction takes as input an instance
$(G, k, (V_1, V_2, \dots, V_k))$ of
\textsc{Multicolored Clique} and returns an instance $(H, \text{poly}(n,k))$ of
\SoPP in polynomial time where $H$ is of polynomial size in $k$ and $n$. 
The graph constructed has pathwidth $O(k^2)$. 
For a better comprehension of the coming reduction, it is convenient here to interpret the \textsc{Multicolored Clique} problem as selecting $\binom{k}{2}$ edges in a way that this set is incident to exactly one vertex of each set $V_i$. 

The structure of $H$ is organized as follows. 
For each $1\leq i\leq k$, there is a \emph{semi-grid} $\Gamma_i$ that corresponds to the set $V_i$. 
A semi-grid has a grid-like structure with $O(k)$ rows and $O(n)$ columns. 
See Figure~\ref{fig:tw-w1:semi-grid} for an illustration. 
The columns of $\Gamma_i$ correspond to the vertices in $V_i$. 
Semi-grids are connected together via their left and right boundaries by gadgets encoding edges, as shown in Figure \ref{fig:tw-w1:edge-gadget}: for each edge $(u,v)$ with $u\in V_i$ and $v\in V_j$, there is an edge gadget that connects $\Gamma_i$ and $\Gamma_j$ on appropriate rows, whose indices depend on $i$ and $j$.

To facilitate the analysis of the reduction, a number of cherries are part of the construction. 
Recall that by Lemma~\ref{lem:cherry-lemma}, we can always assume an IP-partition of minimum cardinality to contain such cherries.
Once we made this assumption, there are only two relevant ways of partitioning an edge gadget encoding $(u,v)\in E(G)$: 

\begin{itemize}
    \item Either we decide to partition the edge gadget optimally, i.e., by selecting four isometric paths, as shown in Figure~\ref{fig:tw-w1:non-extendable-paths}. Choosing this red partition corresponds to \emph{not} selecting $(u,v)$ in the \textsc{Multicolored Clique} solution. In this case, no other vertices can be covered by such paths: we say that they are \emph{non-extendable}. 
    
    \item Or, we decide to partition the gadget with five isometric paths, as shown in Figure~\ref{fig:tw-w1:extendable-paths}. Choosing this green partition corresponds to selecting the edge $(u,v)$ in the \textsc{Multicolored Clique} solution. Although this is not an optimal way to cover the gadget, choosing the green partition has a strong benefit: paths emerging from the gadget can penetrate inside semi-grids $\Gamma_i$ and $\Gamma_j$ to cover some vertices there, which potentially reduces the number of paths needed to cover the semi-grids. 
    Penetrating inside semi-grids in that way, however, can only be done up to some column that is left uncovered, and that is referred to as the \emph{crest} in Figure~\ref{fig:tw-w1:edge-gadget}. 
    In the ideal scenario where the instance of the \textsc{Multicolored Clique} problem is positive, almost all vertices of the semi-grids are covered by the green paths emerging from a selection of edges, except for the crest columns. For each semi-grid, this column corresponds to the vertex picked in the \textsc{Multicolored Clique} solution, and can be covered using only one isometric path. 
\end{itemize}

\myparagraph{Organization.} The rest of the section is divided in three parts. 
In Section~\ref{sec:IPP:tw-w1:construction}, we give the full description of the reduction, that we break by subsections for each gadget and the value of the solution size. 
Then, we give some useful properties and observations in Section~\ref{sec:IPP:tw-w1:properties}. 
We conclude with the proof of Theorem~\ref{thm:IPP:w1-tw-hard} in Section~\ref{sec:IPP:tw-w1:proof}.

\subsection{Reduction}\label{sec:IPP:tw-w1:construction}

\subsubsection{Semi-grids for Color Classes}\label{sec:IPP:tw-w1:construction:semi-grid}

\begin{figure}[t]
    \centering
    \includegraphics[page=1,width=0.8\textwidth]{./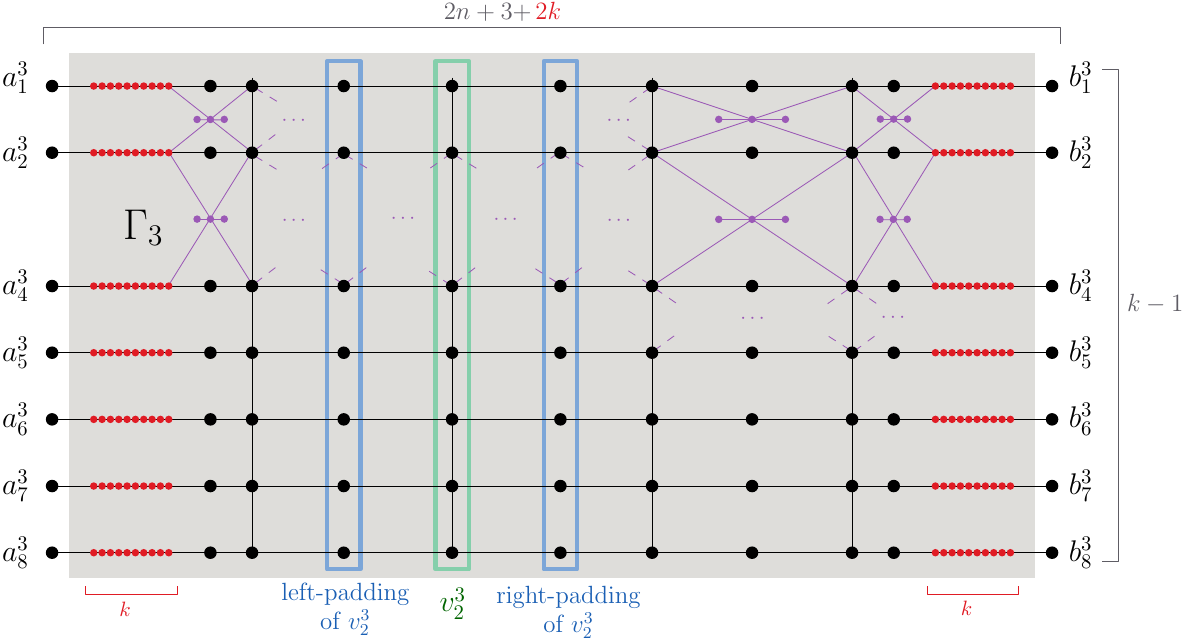}
    \caption{The final stage of the semi-grid used to represent the color class $V_i$ in the reduction, with $i=3$ and $k=8$ here.
    Each vertical path corresponds to a vertex in the set $V_i$.
    Each row will be used to connect $V_i$ to another color class $V_j$, $j \neq i$, with edge gadgets encoding all the adjacencies between $V_i$ and $V_j$; this is why no row is indexed $3$ in $\Gamma_3$, and that the number of rows is $k-1$.
    Grid cherries are depicted in purple.}
    \label{fig:tw-w1:semi-grid}
\end{figure}

Recall that $k$ is the size of the partition of $V(G)$ and $n$ is the number of 
vertices in each part. 
For every $i \in [k]$, we add a structure that we call \emph{semi-grid}, denoted by $\Gamma_i$, aimed at representing the color class $V_i$, and that we shall define now.
For convenience, we choose to present it starting with a $(k-1) \times (2n + 3)$ grid, and later specify edges to be deleted, and subdivisions to be made.
The following steps are better understood accompanied with Figure~\ref{fig:tw-w1:semi-grid} which depicts the resulting construction.

\begin{itemize}
    \item We start with a $(k-1)\times (2n + 3)$ grid $\Gamma_i$ (represented by fat black vertices in Figure~\ref{fig:tw-w1:semi-grid}) where the $k-1$ rows are purposely labeled with integers in $\{1,\dots,i-1\}$ and $\{i+1, \dots, k\}$, an indexing which will be convenient in the following.
    In other words, $\Gamma_i$ should be thought of as the $k \times (2n + 3)$ grid with rows labeled from $1$ to $k$, and on which the row $i$ has been removed, and adjacent rows made connected, while keeping the initial indexing. 
    The indexing of the columns is $0,1,\dots,2n+2$.
    Only the vertices in the leftmost and rightmost columns of the semi-grid will be adjacent to external vertices. 
    They are called the \emph{left} and \emph{right borders} of $\Gamma^i$, and are denoted by $a^i_{j}$'s and $b^i_{j}$'s, respectively, with $j$ being the index of the row.
    
    \item Each column of even index corresponds to a vertex in $V_i$.
    Formally, consider a labeling $\{v^i_1,\dots,v^i_n\}$ of the vertices of $V_i$.
    Then the $(2p)^\th$ column corresponds to the vertex $v^i_p$ for each $p \in [n]$.
    We refer to the $(2p - 1)^\th$ column and the
    $(2p + 1)^\th$ column as the \emph{left-padding} and
    \emph{right-padding} of $v^i_p$, respectively.
    Note that the left-padding of $v^i_1$ is the second column, which is distinct from the left border, and analogously for the right-padding of $v^i_n$ which is distinct from the right border.
    The right-padding and left-padding of consecutive vertices coincide.
    
    \item We delete all the vertical edges of the grid both whose endpoints are \emph{not} in the $(2p)^\th$ column for some $p \in [n]$. 
    In other words, the only vertical paths we keep are those in the columns corresponding to vertices of $V_i$.
    
    \item We subdivide the edges incident to the border $k$ times. 
    Formally, the reduction replaces each such edge with a path with $k$ internal vertices.
    We consider all the new columns obtained in the semi-grid as ``virtual columns'' and consider that this change does not affect the indexing of the columns, for convenience.
    This allows us to denote the column corresponding to vertex $v^i_p$
    as the $(2p)^\th$ column instead of the
    less intuitive notation of ``${(2k + 2p)}^\th$ column''.
    The role of such vertices is to ensure that vertical paths of the grid will stay isometric even after the vertices of the left and right border are made adjacent to the vertices of edge gadgets.

    \item Finally, for each ``cell'' defined by two consecutive rows and two consecutive columns of the obtained grid-like structure, we add a cherry and connect its middle vertex to the four vertices lying at the intersection of these rows and columns.
    We do the same at the left of the first column (of index $2p$) by considering as a cell the two vertices of the column plus the last two subdivided vertices.
    We proceed analogously at the right of the last column.
    The goal of these cherries is to force isometric paths to be either horizontal or (almost) vertical in the obtained semi-grid.
    We call these cherries \emph{grid cherries}; note that there are $(n+1)\cdot (k-2)$ such cherries.
\end{itemize}
This completes the description of the semi-grid $\Gamma_i$ associated to color class $V_i$.

\subsubsection{Encoding Edges}
\label{sec:IPP:tw-w1:construction:edge-gadget}

Consider an edge $(v^i_p, v^j_q)$ of $G$ where $i \neq j \in [k]$ and $p, q \in [n]$.
We start describing the part of the edge gadget that will later be attached to each of $\Gamma_i$ and $\Gamma_j$ and that we call \emph{left and right cables}.
In the following, let us define
\[
    N := 2n^2.
\]
The following steps are better understood accompanied with Figure~\ref{fig:tw-w1:edge-gadget}.

\begin{itemize}
    \item We start by creating a simple path
    $(z_0,x_0, x_1,\dots, x_{\ell})$ that we call \emph{left cable} of $(v^i_p, v^j_q)$ with respect to $\Gamma_i$, 
    where $\ell = N - 2p - k$.
    Note that this value corresponds to $N$ minus the distance of the column of $v^i_p$ from the left border of $\Gamma_i$. 
    We say that $\ell$ is the \emph{length}, $z_0$ is the \emph{core}, and $x_{\ell}$ is the \emph{open end} of the cable. 
    
    \item We create a second simple path $(z'_0,x'_0, x'_1,\dots, x'_{\ell'})$ that we call \emph{right cable} of $(v^i_p, v^j_q)$ with respect to $\Gamma_i$, this time with value $\ell' = N - 2(n-p+1) - k$.
    Note that the value $\ell'$ corresponds to $N$ minus the distance of the column of $v^i_p$ from the right border of $\Gamma_i$.

    \item We create the left cable $(z'''_0,x'''_0, x'''_1,\dots, x'''_{\ell'''})$ and the right cable $(z''_0, x''_0, x''_1,\dots, x''_{\ell''})$ of $(v^i_p, v^j_q)$ with respect to $\Gamma_j$ analogously, with $i$ replaced by $j$, and $p$ replaced by $q$.
\end{itemize}

\begin{figure}[t]
    \centering
    \includegraphics[page=2,width=0.75\textwidth]{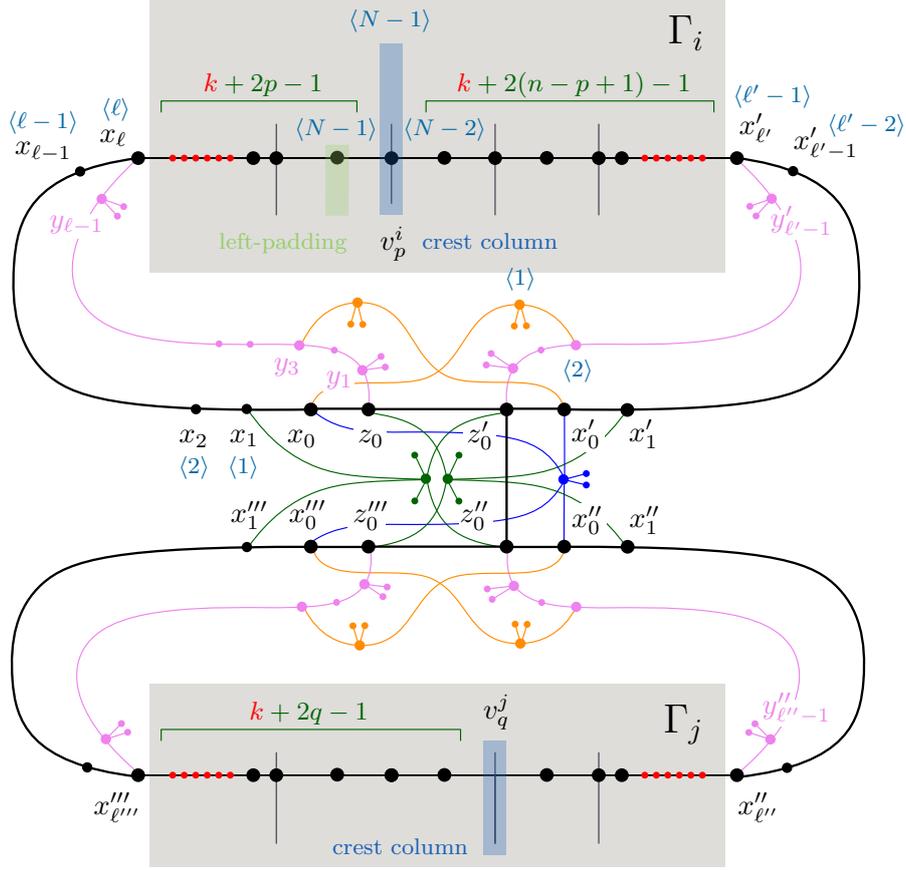}
    \caption{The edge gadget encoding an edge between two color classes. The start cherries are depicted in blue, the core cherries in green, the distance twin-cherries in pink, and the crossing cherries in orange. All these cherries can be assumed to be part of any minimum IP-partition by Lemmas~\ref{lem:cherry-lemma} and \ref{lem:twin-cherry-lemma}, and hence may intuitively be omitted when analyzing the shape of other isometric paths. The distances from $x_0$ are depicted between brackets in blue; note that $x_0$ is equidistant to its crest column and its left-padding.} 
    \label{fig:tw-w1:edge-gadget}
\end{figure}

Note that the left and right cables associated with $\Gamma_i$ may only differ by their length, which is determined by the distance of $v^i_p$ from the left and right border of $\Gamma_i$.
This distance is set so that $x_0$ lies at distance $N-1$ from the left padding of $\smash{v^i_p}$, and analogously for $x'_0$ which lies at distance $N-1$ from the right padding of $v^i_p$.
This $\smash{p^\text{th}}$ column is called the \emph{crest} of $\Gamma_i$ with respect to $\smash{(v^i_p, v^j_q)}$, and the crest of $\Gamma_j$ with respect to $\smash{(v^i_p, v^j_q)}$ is defined analogously.

We connect these four cables by adding the edges $z_0z'_0$, $z'_0z''_0$, and $z''_0z'''_0$ that we call \emph{core edges}.
The obtained path $(z_0, z'_0, z''_0, z'''_0)$ will be called the \emph{core path} in the following.

To these cable and core edges, we add four types of cherries and twin-cherries, defined as follows.
Note that these extra vertices will be assumed to be part of any optimal IP-partition thanks to Lemmas~\ref{lem:cherry-lemma} and \ref{lem:twin-cherry-lemma}, hence, they should be considered as gadgets whose role is to modify the distances in the graph, without changing the ``structure'' of the remaining isometric paths.

\begin{itemize}
    \item First we add a \emph{start cherry}, in blue in Figure~\ref{fig:tw-w1:edge-gadget}, whose middle vertex is adjacent to $x_0, x'_0, x''_0$, and $x'''_0$.
    The role of this cherry will be to simplify the analysis by ensuring these adjacent vertices to lie in distinct isometric paths of a minimum IP-partition.

    \item Then we add \emph{core cherries}, in green in Figure~\ref{fig:tw-w1:edge-gadget}, as follows.
    The \emph{left} one has its middle vertex adjacent to the core vertices $z'_0, z''_0$ as well as to $x_1$ and $x'''_1$.
    The \emph{right} one has its middle vertex adjacent to the core vertices $z_0, z'''_0$ as well as to $x'_1$ and $x''_1$.
    The role of these cherries will be to force that cables cannot be isometrically extended using the core of another cable.

    \item Then, we add a twin-cherry as follows: we create a path $y_1,\dots, y_{\ell-1}$ with $y_1$ adjacent to $z_0$, $y_{\ell-1}$ adjacent to $x_{\ell}$, and add two leaves to each of $y_1$ and $y_{\ell-1}$.
    We call this twin-cherry the \emph{core twin-cherry} of the left path of $\smash{(v^i_p, v^j_q)}$ with respect to $\Gamma_i$.
    This is depicted in pink in Figure~\ref{fig:tw-w1:edge-gadget}.
    The role of this twin-cherry is to make $z_0$ equidistant to $x_{\ell-1},x_{\ell}$.
    We define the twin-cherry of the right path of $\smash{(v^i_p, v^j_q)}$ with respect to $\Gamma_i$ analogously, and do the same for the left and right paths of $\smash{(v^i_p, v^j_q)}$ with respect to $\Gamma_j$.

    \item Finally, we add a \emph{crossing cherry}, in orange in Figure~\ref{fig:tw-w1:edge-gadget}, whose middle vertex is adjacent to $y_3$ and $y'_3$, and call it the crossing cherry of $\smash{(v^i_p, v^j_q)}$ with respect to $\Gamma_i$.
    We define the crossing cherry associated to  $\smash{(v^i_p, v^j_q)}$ with respect to $\Gamma_j$ analogously.
    The role of these cherries is to make $x_0$ at distance $N-1$ from its associated crest column, so that it becomes equidistant to both this column and its left-padding.
\end{itemize}

This concludes the construction of an edge gadget for
edge $(v^i_p, v^j_q)$.
We connect it to the appropriate semi-grids by identifying open ends of the paths with the vertices in the borders of the semi-grids
$\Gamma_i$ or $\Gamma_j$, depending on whether they are left paths, or right paths.
More formally, we identify $x_{\ell}$ with vertex $a^i_{j}$, and $x'_{\ell'}$ with vertex $b^i_{j}$.
Then we do the same for $\Gamma_j$ by identifying $\smash{x'''_{\ell'''}}$ with vertex $\smash{a^j_{i}}$, and $\smash{x''_{\ell''}}$ with vertex $\smash{b^j_{i}}$. 
Figure~\ref{fig:tw-w1:edge-gadget} illustrates the complete picture.

Let us stress the fact that one such edge gadget is added for each single edge $\smash{(v^i_p, v^j_q)}$ between $V_i$ and $V_j$, and that all such gadgets are thus attached to the same $\smash{j^\th}$ row of $\Gamma_i$, and to the same $\smash{i^\th}$ row of $\Gamma_j$, via their open ends.
In particular, open ends separate each edge gadget from the other gadgets and the associated semi-grid.
In the following, we call \emph{inner vertices} of an edge gadget the vertices of the edge gadget that are distinct from its open ends.

\subsubsection{Valve Cherries}
\label{sec:IPP:tw-w1:construction:valves}

For each semi-grid $\Gamma_i$, we add two \emph{valve cherries}: one on the left of the semi-grid, and one on the right. See Figure~\ref{fig:tw-w1:valves}
for an illustration.
The middle vertex of the left cherry is made adjacent to $x_{\ell-2}$ and $x_{\ell-1}$ in every edge gadget attached to the left border of the semi-grid. 
The middle vertex of the right cherry is made adjacent to $x'_{\ell-2}$ and $x'_{\ell-1}$ in every edge gadget attached to the right border.
We will argue that the valve cherries prevent isometric paths to intersect distinct edge gadgets under additional assumptions. 

\begin{figure}[ht]
    \centering
    \includegraphics[page=5,width=0.5\textwidth]{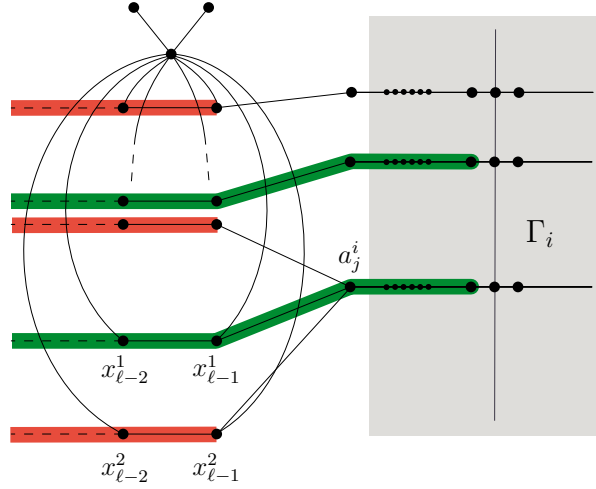}
    \caption{Valve cherries attached to the edge gadgets attached to a border of a semi-grid, here the left border. With the notation $\smash{x^1_{\ell-2}}$ we mean the vertex $x_{\ell-2}$ in the edge gadget associated to edge $e_1$.}
    \label{fig:tw-w1:valves}
    \vspace{-0.5cm}
\end{figure}

\subsubsection{Solution Size}\label{sec:IPP:tw-w1:solution-size}

The reduction sets
\[
    k' = 
    k\cdot (n+1)\cdot (k-2) + 23 \cdot |E(G)| + \binom{k}{2} + k + 2k
\] 
and returns $(H, k')$ as an instance of \SoPP. 

We present an informal justification for the above value of $k'$ which may help
the reader in the next section where we prove properties of IP-partitions of $H$.
The first additive term corresponds to the number of grid cherries that are found in the semi-grids.
The second term corresponds to the number of paths used
to partition the vertices in edge gadgets, 19 of which will consist of cherries and twin-cherries.
The third term corresponds to the number of selected edges: here, the partition needs to spend an extra isometric path for each selection.
A path from the selected edges can be extended to partition almost all the vertices in the semi-grid except for their crest.
If selected edges correspond to a multicolored clique
in $G$, then the solution can partition the remaining crest
vertices in semi-grids using $k$ vertical isometric paths, which corresponds to the fourth additive term.
The last additive term corresponds to the valve cherries.

\subsection{Properties of IP-partitions}\label{sec:IPP:tw-w1:properties}

Let $\calP$ be an IP-partition of $H$ of minimum cardinality.
Note that by Lemmas~\ref{lem:cherry-lemma} and \ref{lem:twin-cherry-lemma} we can assume without loss of generality that $\calP$ contains, for each edge gadget:
\begin{itemize}
    \item one start cherry, as represented in blue in Figure~\ref{fig:tw-w1:edge-gadget};
    \item two core cherries, as represented in green in Figure~\ref{fig:tw-w1:edge-gadget};
    \item four twin-cherries, made of three paths each as represented in pink in Figure~\ref{fig:tw-w1:edge-gadget}; and
    \item four crossing cherries, as represented in orange in Figure~\ref{fig:tw-w1:edge-gadget};
\end{itemize}
to which we add the $k\cdot (n+1)\cdot (k-2)$ grid cherries as represented in Figure~\ref{fig:tw-w1:semi-grid}, and the $2k$ valve cherries represented in Figure~\ref{fig:tw-w1:valves}.
This sums up to a total number of
\[
    k\cdot (n+1)\cdot (k-2) + 19 \cdot |E(G)| + 2k
\]
cherries.
Let us denote $\calQ\subseteq \calP$ the set of remaining paths (which are not among the cherries described above) in $\calP$.
We now focus on characterizing this set.
In the incoming argument, it is crucial to keep in mind that the paths of $\calQ$ may not contain the vertex of a cherry of $\calP$.

We start by characterizing the intersection of $\calQ$ with semi-grids.
In the statement below, by \emph{horizontal} we mean a subpath of a single row, and by \emph{vertical} we mean a subpath of a single column; hence cherries and left- and right-paddings are excluded.

\begin{claim}\label{claim:tw-w1:vertical-is-isometric}
    The vertical paths in the semi-grids are isometric in $H$.
\end{claim}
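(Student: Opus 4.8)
\textbf{Proof plan for Claim~\ref{claim:tw-w1:vertical-is-isometric}.}
The plan is to show that for any two vertices $u,v$ lying on the same column (of index $2p$) of a semi-grid $\Gamma_i$, the distance $d_H(u,v)$ in the whole graph $H$ equals the number of column-edges between them, i.e.\ that the vertical subpath is a shortest $u$--$v$ path in $H$. First I would recall the structural features of $\Gamma_i$ that are relevant here: the only vertical edges that survive the deletion step are those inside columns of even index, the edges incident to the left and right borders are subdivided $k$ times, and every ``cell'' of the grid-like structure carries a grid cherry whose middle vertex is adjacent to the four corner vertices of that cell. The grid cherries are the only shortcuts one could hope to use to beat the vertical path, since a cherry's middle vertex gives a length-$2$ detour between any two of the four cell-corners it is attached to; so the proof is essentially about bounding how much these length-$2$ detours can help.

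The key step is a distance lower bound argument. I would define, for a vertex $x$ of $\Gamma_i$ lying in row $r$ and column $c$ (counting virtual/subdivision columns as well), a potential such as its column index $c$, and argue that along any walk in $H$ the column index changes by at most $1$ per edge when the edge is a grid-edge, and by at most $1$ per \emph{two} edges when the step goes through a grid cherry (entering and leaving the middle vertex), and does not change at all for moves that stay within a column or go through non-grid gadgetry attached only at the borders. Simultaneously I would track the row index $r$ in the same spirit. A $u$--$v$ path with $u,v$ in the same column $c_0$ must return to column $c_0$, so every ``horizontal'' progress it makes must be undone; meanwhile it must traverse all the rows between $u$ and $v$, and each unit of vertical progress costs at least one edge (grid cherries only connect vertices in two \emph{consecutive} rows, so a cherry detour advances the row index by at most $1$ while costing $2$ edges). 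Putting these together, a path leaving column $c_0$ and coming back cannot be shorter than the straight vertical path, because the vertical path already realizes the minimum possible number of row-changes with one edge each. I would also need to rule out escaping the semi-grid entirely: to leave $\Gamma_i$ a path must pass through a border vertex, but the border edges were subdivided $k$ times, so any excursion out of $\Gamma_i$ through a cable and back costs at least $2k$ extra edges, far more than the at most $k-2$ vertical edges of any column; hence such excursions are never shortening. A symmetric remark handles passing through valve cherries or edge-gadget vertices, which are only reachable via the (heavily subdivided) borders.

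I expect the main obstacle to be making the potential/charging argument fully rigorous in the presence of the grid cherries, since a naive ``column index changes by $\le 1$ per edge'' is false at a cherry middle vertex (it is adjacent to corners in two different columns), and one has to account correctly for the amortized cost of a cherry traversal; the cleanest way is probably to contract each grid cherry's middle vertex into a ``virtual edge'' joining the (at most) $\binom{4}{2}$ pairs of corners, observe each such virtual edge has length $2$ and connects vertices differing by at most $1$ in row and at most $1$ in column, and then run a clean BFS-layer / monovariant argument on this auxiliary weighted graph. Once that lemma-level statement is in place, the claim follows immediately: the straight vertical path of length equal to $|r_u - r_v|$ (in the original row indexing, accounting for the removed row $i$) matches the lower bound, so it is isometric in $H$. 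A short remark that the subdivisions on the border make the argument robust to the attachment of edge gadgets — which is exactly the stated purpose of those subdivisions — completes the picture.
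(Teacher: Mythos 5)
Your proposal is correct and follows essentially the same route as the paper's proof: the paper declares the within-semi-grid part ``clear'' and then notes that the $k$-fold subdivision of the border edges (compared to the column length of at most $k-2$ edges) rules out shortcuts leaving the semi-grid, which is exactly your second step. Your potential/charging argument merely fleshes out the part the paper leaves implicit, so there is no substantive difference in approach.
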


\begin{proof}
    The statement is clear if we consider the subgraph of $H$ induced by the vertices of a semi-grid.
    Now note that because of the $k$ subdivided vertices, in red in Figure~\ref{fig:tw-w1:semi-grid}, with $k$ also corresponding to the maximum length of a column, an isometric path cannot have its endpoints in a column of the grid, and its inner vertices outside.
    This concludes the proof. 
\end{proof}

\begin{claim}\label{claim:tw-w1:path-in-grids-horizontal-or-vertical}
    The intersection of each path of $\calQ$ with the grid is either horizontal, vertical, or it consists of a vertical path except possibly for its endpoints lying on the padding of a column.
\end{claim}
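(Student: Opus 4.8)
The plan is to analyze how a single isometric path $P\in\calQ$ can traverse the grid-like structure of a semi-grid $\Gamma_i$, using the grid cherries as the main leverage. Recall from Lemma~\ref{lem:cherry-lemma} that all grid cherries are assumed to belong to $\calP$, so no path of $\calQ$ may use any cherry vertex; in particular, the middle vertex of each grid cherry is unavailable, even though it provides a short (length-$2$) connection between the four corners of the cell it sits in. First I would set up the local picture: inside $\Gamma_i$, the only ``true'' grid edges that survive are the horizontal edges of each row, together with the vertical edges in the even-indexed columns (the columns encoding vertices of $V_i$); all other vertical edges were deleted during the construction. So the intersection $P\cap V(\Gamma_i)$ moves along rows freely, but can only change rows along an even column.

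The key step is to argue that $P\cap V(\Gamma_i)$ cannot make two ``turns'': that is, it cannot contain both a nontrivial horizontal portion and a nontrivial vertical portion that meet at an interior vertex in the natural way a grid path would. Suppose toward a contradiction that $P$ contains a subpath of the form (horizontal segment) followed by (vertical segment) meeting at a corner vertex $c$ that lies in the interior of the grid. Then $P$ visits three of the four corners of some cell (the two endpoints of the horizontal step into $c$ and the endpoint of the vertical step out of $c$, say corners $a$, $c$, $b$), and the portion of $P$ between $a$ and $b$ has length $2$ and is a path of $\Gamma_i$. But the grid cherry of that cell has its middle vertex $m$ adjacent to all four corners, so $d_H(a,b)\le 2$ via $a,m,b$, and in fact one checks — using that $m$ is adjacent to both $a$ and $b$ — that $d_H(a,b)\le 2$ is witnessed by a path through $m$ disjoint from $c$. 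The subtle point is that having $d_H(a,b)=2$ does not immediately contradict $P$ being isometric, since $P$ also realizes $d_H(a,b)=2$; the contradiction must instead come from a \emph{third} vertex on $P$ whose distance to an endpoint of $P$ is shortened by routing through $m$. So the real argument is: extend one step further along $P$ past the turn, obtaining a vertex $b'$ with $d_P(a,b')=3$, and show $d_H(a,b')\le 2$ using the cherry middle vertex $m$ together with the edge $mb$ and an edge $bb'$, contradicting isometry of $P$. This ``one-step-past-the-turn'' trick, applied at whichever is the first turn along $P$, forces $P\cap V(\Gamma_i)$ to be turn-free, hence either entirely horizontal or entirely contained in one column.

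The remaining work is to handle the two permitted exceptions in the statement, namely that the endpoints of a vertical path of $P$ may lie on the left- or right-padding column rather than strictly inside the even column. A vertical path in an even column $2p$ can exit at its top or bottom corner onto the adjacent padding column $2p\pm1$ by a single horizontal edge, since that first padding vertex is genuinely adjacent to the column vertex and its inclusion does not create a turn in the sense above (it is a pendant-like extension at an endpoint of $P$, not an interior corner). One must check that going \emph{two} or more steps into the padding, or changing direction within the padding, would again be caught by a grid cherry or by Claim~\ref{claim:tw-w1:vertical-is-isometric} — indeed the padding columns have no vertical edges, so any vertical movement there is impossible, and the horizontal freedom there is precisely the single-step tolerance stated. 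I would also invoke Claim~\ref{claim:tw-w1:vertical-is-isometric} to rule out a path whose endpoints sit in a column while inner vertices leave it, which pins down that a ``vertical'' intersection really is a contiguous subpath of one column (possibly with the two padding tips).

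The main obstacle I anticipate is the bookkeeping around the cherry-distance argument: one has to be careful that the shortcut through a cherry middle vertex $m$ is genuinely vertex-disjoint from the relevant portion of $P$ (so that it does witness a strictly shorter $s$–$t$ distance rather than merely tying it), and that the argument is applied at the \emph{first} turn along $P$ so that no prior structure of $P$ interferes. Handling corner cases at the borders of the grid — where the ``cell'' is the degenerate one formed by the two column vertices and the last two subdivided border vertices, which also received a cherry in the construction — requires checking that the same cherry-shortcut reasoning goes through there, which it does by the same argument with the subdivided vertices playing the role of padding.
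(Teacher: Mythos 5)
Your high-level strategy is the right one and matches the paper's: the grid cherries make the middle vertex $m$ of a cell adjacent to all four corners, so a turning isometric path inside a semi-grid would realize a distance that $m$ strictly beats, and since paths of $\calQ$ cannot use cherry vertices this is impossible. However, the details of your argument rest on a misreading of the grid geometry, which creates a spurious difficulty that you then try (unsuccessfully) to fix.

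Concretely, in the construction each ``cell'' is bounded by two \emph{even-indexed} columns (those carrying vertical edges), separated by a padding column that has no vertical edges. Hence two horizontally consecutive corners $a,c$ of a cell are at distance \emph{two} along a row (through the padding vertex), not one. So when $P$ turns at $c$ — entering along the row from the neighboring corner $a$ and leaving vertically to $b$ — the subpath of $P$ from $a$ to $b$ has length $3$ (or more), while $d_H(a,b)=2$ via the cherry middle vertex $m$. This is already an immediate contradiction to isometry; no ``subtle point'' arises. Your claim that the $a$–$b$ portion of $P$ has length $2$ (and hence ties with the cherry shortcut) is incorrect for the actual construction. The paper's proof formalizes exactly the observation above by defining \emph{crossings} (corners and the border subdivision vertices adjacent to paddings, i.e., the vertices adjacent to cherry middles) and noting that between two non-aligned crossings the unique shortest path goes through the cherry.

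Moreover, the fix you propose does not work even in the simplified no-padding picture you have in mind. You extend one step past $b$ to a vertex $b'$ and assert $d_H(a,b')\le 2$ ``using the cherry middle vertex $m$ together with the edge $mb$ and an edge $bb'$''; but the path $a,m,b,b'$ has length $3$, not $2$, so this does not beat $d_P(a,b')=3$. The bound $d_H(a,b')\le 2$ would only follow if $b'$ happens to be the fourth corner of the same cell, which is not guaranteed (for instance $b'$ could continue along the column containing $b$, or exit the grid). So the ``one-step-past-the-turn'' device is both unnecessary (the paddings already make the turn at distance $3$) and unsound as stated. The remainder of your sketch — invoking Claim~\ref{claim:tw-w1:vertical-is-isometric}, and arguing that at most the two endpoints of a vertical intersection may lie on paddings — is in the right spirit and matches the paper's last sentence, but it should be derived from the corrected ``no two non-aligned crossings'' argument rather than from the flawed version.
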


\begin{proof}
    Consider a semi-grid, and let us call \emph{crossing} any vertex that either lies both on a row and on a column of the semi-grid, or that is equal to a subdivision vertex adjacent to a padding.
    Note that these vertices are precisely those that are adjacent to the middle vertex of a grid cherry; see Figure~\ref{fig:tw-w1:semi-grid}.
    Let $P$ be the intersection of an isometric path of $\calQ$ with the semi-grid, and suppose it is neither horizontal nor vertical.
    Suppose, towards a contradiction, that $P$ contains two non-aligned crossings (i.e., crossings that do not share the same row or column), and consider two such vertices that are consecutive (ignoring non-crossing vertices) in $P$.
    Note that the unique isometric path between these crossings goes through the grid cherry, a contradiction.
    Thus, $P$ only contains aligned crossings, and we derive that $P$ may contain at most two vertices lying on paddings, being its endpoints. 
\end{proof}

We say that a path $Q$ of $H$ \emph{extends} a path $P$ if it contains it as a proper subpath; $Q$ \emph{isometrically extends} a path $P$ if in addition $Q$ is isometric in $H$.
The goal of the remainder of the section is to characterize different ways for the cables of edge gadgets to be covered by an IP-partition of $H$, and to argue that one of this ways can be extended, while the other cannot.
Towards this, we introduce the following terminology.

\begin{definition}\label{def:extendable-non-extendable-paths}
    We call the path $(x_0, \dots, x_{\ell})$ of an edge gadget and any of its extensions \emph{extendable}, and the path $(z_0, x_0, \dots, x_{\ell - 1})$ is \emph{non-extendable}; this definition is analogously extended to the four cables of an edge gadget.
\end{definition}

These paths are depicted in green and red in Figures~\ref{fig:tw-w1:extendable-paths} and \ref{fig:tw-w1:non-extendable-paths}, respectively. 
In the following, for simplicity, we will focus on left cables and thus denote the vertices of cables without primes in the notation, e.g., as $(z_0, x_0, x_1, \dots, x_{\ell})$; we stress the fact that all the statements and arguments hold for right cables by symmetry.

Note that extendable and non-extendable paths are isometric in $H$.
We justify these names with the following claims. 

\begin{figure}[t]
    \centering
    \includegraphics[page=3,width=0.75\textwidth]{./figures/w1-simple.pdf}
    \caption{Four extendable paths and one core path partitioning all the vertices of an edge gadget, but those in cherries and twin-cherries. The distances from $x_0$ are depicted between brackets in blue; note that $x_0$ is equidistant to its crest column and its left-padding. Hence, the path containing of $x_0$ cannot be extended to cover the crest column.} 
    \label{fig:tw-w1:extendable-paths}
\end{figure}

\begin{figure}[t]
    \centering
    \includegraphics[page=4,width=0.75\textwidth]{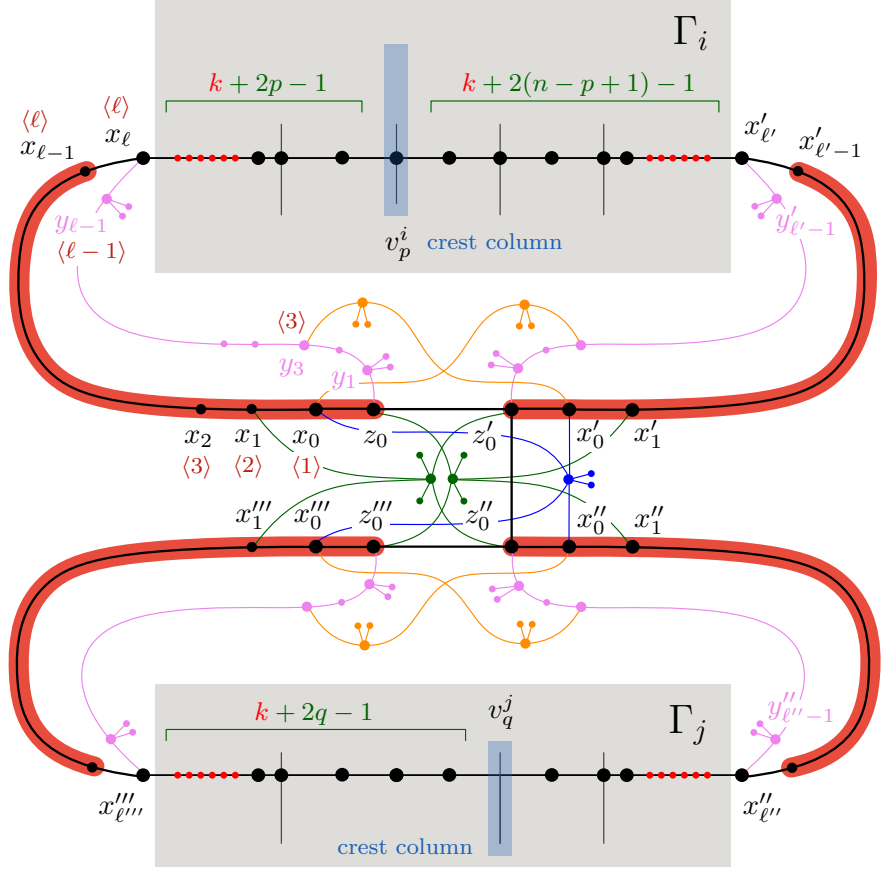}
    \caption{Four non-extendable paths partitioning the vertices that are neither open ends, nor part of a cherry or of a twin-cherry. 
    The distances from $z_0$ are depicted between brackets in red; note that $z_0$ is equidistant to $x_{\ell-1},x_{\ell}$, hence that its path cannot be extended in the semi-grid.}
    \label{fig:tw-w1:non-extendable-paths}
\end{figure}

\begin{claim}\label{claim:tw-w1:extendable-property}
    The extendable paths can only be isometrically extended in the semi-grid, and this can be done up to the associated crest column excluded, i.e., their extension can contain the paddings of the crest column but not the crest column.
\end{claim}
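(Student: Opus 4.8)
The plan is to analyze the possible isometric extensions of the path $(x_0, \dots, x_\ell)$ within the graph $H$, relying heavily on the cherries and twin-cherries that may be assumed to be part of the minimum IP-partition $\calP$. First I would argue that such a path cannot be extended ``towards the core'', i.e., through $z_0$: by construction, the left core cherry (green in Figure~\ref{fig:tw-w1:edge-gadget}) has its middle vertex adjacent to both $z'_0$, $z''_0$, $x_1$, and $x'''_1$, while the right core cherry is adjacent to $z_0$, $z'''_0$, $x'_1$, and $x''_1$. Hence any attempt to prolong the path past $x_0$ into the core region would either revisit a vertex already used by a core cherry in $\calP$, or create a shortcut through a cherry middle vertex, contradicting isometricity — recall that by Lemma~\ref{lem:cherry-lemma} the paths of $\calQ$ do not use cherry vertices. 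The start cherry adjacent to $x_0, x'_0, x''_0, x'''_0$ similarly blocks moving from one cable's open segment to a sibling cable. Together these force any extension of $(x_0, \dots, x_\ell)$ to go through the open end $x_\ell = a^i_j$ and into the semi-grid $\Gamma_i$.

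Next I would track the distances. The key quantitative fact, stated in the construction, is that $x_0$ lies at distance exactly $N-1$ from the left-padding of the crest column of $\Gamma_i$, and this was arranged precisely via the crossing cherry, which makes $x_0$ equidistant (at distance $N-1$) to the crest column itself and to its left-padding. Since the cable has length $\ell = N - 2p - k$, the vertex $x_\ell$ sits at distance $\ell$ from $x_0$, hence at distance $N - 1 - \ell = 2p + k - 1$ from the crest column; tracing through the subdivision vertices (the $k$ red vertices of Figure~\ref{fig:tw-w1:semi-grid}) and the horizontal portion of the row, one checks that an extension reaching the left-padding of the crest column has the right length to still be isometric, while reaching the crest column itself would force the endpoint to be at distance $N-1$ from $x_0$ along a path of length $N$ — one too many — since the crossing cherry made $x_0$ equidistant to both columns. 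More carefully: the shortest $x_0$–(crest column) path in $H$ has length $N-1$ (realized by going ``the other way'', through the crossing cherry side), so a path from $x_0$ through the cable and into the grid that reaches the crest column has length $\ell + (2p+k) = N$, which exceeds $d_H(x_0, \text{crest})= N-1$; therefore such a path is not isometric. The padding column, being one step closer, is still reachable isometrically.

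Then I would handle the ``how far horizontally'' question: an extendable path, after entering $\Gamma_i$ on row $j$, may turn and go vertically (using Claim~\ref{claim:tw-w1:vertical-is-isometric} and Claim~\ref{claim:tw-w1:path-in-grids-horizontal-or-vertical}), but the grid cherries force it to stay within a single column once it turns, and its endpoint may lie on a padding by Claim~\ref{claim:tw-w1:path-in-grids-horizontal-or-vertical}. So the extension's reach is dictated by which column it turns down at; the distance computation above shows that turning at the crest column and going up/down remains isometric only if the turn happens at the padding, not at the crest column proper, because entering the crest column costs exactly the one extra unit that breaks isometricity. Combining: the extendable path can be prolonged to include vertices of the crest column's padding, possibly a vertical stretch there, but never a vertex of the crest column itself.

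**Main obstacle.** The delicate part will be the exact distance bookkeeping through the semi-grid — accounting for the $k$ subdivision vertices on the border edges, the virtual columns, the fact that left- and right-paddings of consecutive vertices coincide, and verifying that the ``detour'' distance from $x_0$ to the crest column (the one realized via the crossing cherry / twin-cherry side rather than through the cable) is indeed $N-1$ and not shorter, since everything hinges on that equality. I expect this to require carefully confronting $d_H(x_0, \cdot)$ computed two ways and invoking the roles of the core twin-cherry (making $z_0$ equidistant to $x_{\ell-1}, x_\ell$) and the crossing cherry (attached at $y_3$) as designed; the ``blocking'' arguments via the start and core cherries are comparatively routine once one recalls that $\calQ$-paths avoid cherry vertices.
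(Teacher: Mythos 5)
Your proposal follows essentially the same route as the paper's proof: the twin-cherry forces $x_0$ to remain an endpoint of any isometric extension (so the extension can only proceed through $x_\ell$ into the semi-grid), and the fact that both the crest column and its left-padding lie at distance $N-1$ from $x_0$, while the route through the cable and the row reaches the crest column only at length $\ell+(2p+k)=N$, rules out covering the crest while allowing the padding. The only blemish is your intermediate assertion that $x_\ell$ is at distance $N-1-\ell=2p+k-1$ from the crest column, which is off by one (the row distance from the open end to the crest is $2p+k$, as your own subsequent computation correctly uses), but this does not affect the conclusion.
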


\begin{claimproof}
    Consider a pair of left and right cables as in Figure~\ref{fig:tw-w1:extendable-paths}.
    Note that any isometric extension of the path $(x_0, \dots, x_{\ell})$ has $x_0$ as an endpoint, since $x_{0}$ and $z_{0}$ are equidistant from $x_\ell$, due to the twin-cherry, represented in pink in Figure~\ref{fig:tw-w1:extendable-paths}. 
    Now, by the choice of $\ell$ and $\ell'$ in the definition of the cables, together with the distances induced by the crossing cherries represented in orange in Figure~\ref{fig:tw-w1:extendable-paths}, both the crest column and its left padding lie at distance $N-1$ from $x_0$.
    Hence, $(x_0, \dots, x_{\ell})$ cannot be extended to contain the vertex of the crest column.
    However, it is easily checked that it can be extended up to the left padding of this crest column, as desired.
\end{claimproof}

\begin{claim}\label{claim:tw-w1:non-extendable-property}
    The non-extendable paths cannot be isometrically extended.
\end{claim}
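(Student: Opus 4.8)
\medskip

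The plan is to examine the two endpoints of a non-extendable path $P=(z_0, x_0, x_1, \dots, x_{\ell-1})$ separately and argue that neither can be extended while keeping the path isometric in $H$. First I would handle the $x_{\ell-1}$ endpoint. The natural candidate for extension here is to append $x_\ell$ (the open end, identified with the border vertex $a^i_j$ of the semi-grid) and then push further into $\Gamma_i$ along a row. However, the core twin-cherry of this cable, depicted in pink in Figure~\ref{fig:tw-w1:non-extendable-paths}, provides an alternative $z_0$--$x_\ell$ path through the $y_1,\dots,y_{\ell-1}$ spine of the same length as $(z_0, x_0, \dots, x_{\ell-1}, x_\ell)$; more precisely, its role (as stated when the twin-cherry is introduced) is to make $z_0$ equidistant to $x_{\ell-1}$ and $x_\ell$. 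Hence if we were to append $x_\ell$ to $P$, the resulting path $(z_0,x_0,\dots,x_\ell)$ would no longer be a shortest $z_0$--$x_\ell$ path, contradicting isometry. I should also rule out extending past $x_\ell$ without this issue --- but any isometric extension containing $x_\ell$ as an internal vertex in particular contains the $z_0$--$x_\ell$ subpath, so the same argument applies.

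\medskip

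Next I would handle the $z_0$ endpoint. Here the candidate extensions are: (i) continue along a core edge to $z'_0$, $z''_0$, or $z'''_0$; (ii) continue into the core twin-cherry spine $y_1, y_2, \dots$; or (iii) continue via the middle vertex of one of the green core cherries. For (i) and (iii), the core cherries are designed exactly to block this: as stated in the construction, the left core cherry is adjacent to $z'_0, z''_0, x_1, x'''_1$ and the right core cherry to $z_0, z'''_0, x'_1, x''_1$, and their stated role is ``to force that cables cannot be isometrically extended using the core of another cable.'' Concretely, the presence of these cherries makes the relevant core vertices mutually close (distance $2$ via the cherry's middle vertex), so that a purported isometric path entering the core region from $z_0$ and reaching a vertex $x_i$ of another cable would be shortcut through the appropriate core cherry's middle vertex. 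For (ii), the same reasoning as in the first paragraph applies: the twin-cherry spine $y_1,\dots,y_{\ell-1}$ connects $z_0$ to $x_\ell$ with the same length as the cable, so entering the spine from $z_0$ and going all the way would just give another shortest $z_0$--$x_\ell$ path with no new vertices covered, and going only partway into the spine is blocked because the spine vertices $y_1$ and $y_{\ell-1}$ carry leaves (the twin-cherry structure), so they cannot be internal vertices of any path of $\calQ$ (the cherry leaves are themselves forced into separate one-vertex or three-vertex cherry paths of $\calP$ by Lemma~\ref{lem:twin-cherry-lemma}).

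\medskip

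Finally I would assemble these observations: a path properly extending $(z_0,x_0,\dots,x_{\ell-1})$ must extend it at the $z_0$ end, at the $x_{\ell-1}$ end, or both; all cases have been excluded, so no isometric extension exists. The main obstacle I anticipate is case (i)/(iii): one must be careful that there is genuinely no clever isometric route that leaves $z_0$, wanders through several core edges and core cherries, and emerges to cover fresh vertices in a different cable or semi-grid --- this requires checking the small number of vertices in the core region ($z_0,z'_0,z''_0,z'''_0$, the two core-cherry middle vertices, the start-cherry middle vertex, and the first vertices $x_1, x'_1, x''_1, x'''_1$) and confirming that every pair of ``productive'' targets is at distance at most the length already used, so that the core cherries act as shortcuts. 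This is a finite, local computation of distances within the gadget, made routine by the fact that all these auxiliary cherries are assumed to be in $\calP$ (by Lemmas~\ref{lem:cherry-lemma} and \ref{lem:twin-cherry-lemma}) and hence their middle vertices are available as distance-$2$ shortcuts but their vertices cannot be reused by $\calQ$.
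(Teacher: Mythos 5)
Your proposal is correct and follows essentially the same route as the paper: the twin-cherry makes $x_{\ell-1}$ and $x_\ell$ equidistant from $z_0$ (blocking extension at that end), and the core cherries block extension at the $z_0$ end, with the spine and cherry middle vertices excluded because cherries are already in $\calP$. One small precision worth noting: the paper's shortcut at the $z_0$ end is that $z_0$ and $z'_0$ are equidistant from $x_1$ \emph{of the same cable} (the left core cherry is adjacent to both $z'_0$ and $x_1$), so prepending $z'_0$ already violates isometry against $x_1$ --- no need to argue about reaching a vertex of another cable as in your case~(i).
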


\begin{claimproof}
    See Figure~\ref{fig:tw-w1:non-extendable-paths}.
    The fact that the path $(z_0, x_0, \dots, x_{\ell - 1})$ cannot be extended follows from the fact that $x_{\ell-1}$ and $x_{\ell}$ are equidistant from $z_0$, due to the twin-cherry represented in pink in Figure~\ref{fig:tw-w1:extendable-paths}, and the fact that $z_{0}$ and $z'_{0}$ are equidistant from $x_1$, due to the core cherries, represented in green in Figure~\ref{fig:tw-w1:non-extendable-paths}.
\end{claimproof}

We derive the following as a corollary of Claim~\ref{claim:tw-w1:non-extendable-property}, noting that if an isometric path contains all the vertices of a cable except for its open end, then it contains a non-extendable path.

\begin{corollary}\label{cor:tw-w1:inner-1-path-non-extendable}
    If one path in $\calQ$ covers all the vertices of a cable except for its open end, then this path is a non-extendable path.
    In particular, at least two paths in $\calQ$ intersect the vertices of each cable.
\end{corollary}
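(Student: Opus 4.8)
The plan is to deduce the corollary directly from Claim~\ref{claim:tw-w1:non-extendable-property}, using the elementary fact that isometric paths are induced subgraphs (a chord would yield a strictly shorter path between the endpoints). Fix a cable and, invoking the symmetry convention fixed above, write it as $(z_0, x_0, x_1, \dots, x_\ell)$ with open end $x_\ell$, so that its vertices other than the open end are exactly $z_0, x_0, \dots, x_{\ell-1}$.

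First I would verify, by a routine inspection of the edge-gadget construction, that the subgraph of $H$ induced on $S := \{z_0, x_0, \dots, x_{\ell-1}\}$ is precisely the path $P_0 := (z_0, x_0, \dots, x_{\ell-1})$ --- that is, the non-extendable path of the cable in the sense of Definition~\ref{def:extendable-non-extendable-paths}. Indeed, each vertex of $S$ has all of its incidences other than those of $P_0$ going to vertices outside $S$: $z_0$ additionally reaches only $z'_0$, a middle vertex of a core cherry, and the first vertex of a twin-cherry, while each $x_i$ additionally reaches only middle vertices of start, core, or valve cherries. Granting this, suppose $Q \in \calQ$ contains all vertices of $S$. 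Since $Q$ is an induced path, the subgraph it induces on $S$ equals $P_0$; as two vertices of $S$ can be adjacent in $Q$ only if they are consecutive in $Q$, and $P_0$ has $\ell$ edges on $\ell+1$ vertices, the set $S$ must form a contiguous block of $Q$ traversed in the order of $P_0$, so $Q$ contains $P_0$ as a subpath. If $Q \neq P_0$, then $Q$ is an isometric path strictly extending the non-extendable path $P_0$, contradicting Claim~\ref{claim:tw-w1:non-extendable-property}; hence $Q = P_0$, which is the first assertion.

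For the ``in particular'' part, I would note that no vertex of the cable lies in a cherry, so the whole cable is covered by paths of $\calQ$; if a single path $Q \in \calQ$ met all of $\{z_0, x_0, \dots, x_\ell\}$, then in particular it would contain $S$, so by the first part $Q = P_0$, which omits the open end $x_\ell$ --- a contradiction. Hence at least two paths of $\calQ$ intersect the cable. There is no genuine obstacle here, the statement being a corollary; the one step deserving care is the passage from ``$Q$ covers the cable vertices'' to ``$Q$ contains $P_0$ as a subpath'', which is exactly where induced-ness of isometric paths enters, together with the verification that $S$ spans no chord in $H$.
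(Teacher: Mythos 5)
Your proposal is correct and follows essentially the same route as the paper, which derives the corollary in one line from Claim~\ref{claim:tw-w1:non-extendable-property} by noting that a path containing all cable vertices except the open end must contain a non-extendable path. You merely make explicit the step the paper leaves implicit --- that $H$ induces exactly the non-extendable path on those vertices, so an isometric (hence induced) path covering them contains it as a contiguous subpath --- and the ``in particular'' part is handled identically.
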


We are now interested in expressing bounds on the number of paths needed to partition our edge gadgets. 

\begin{claim}\label{claim:tw-w1:start-cherry}
    Each of the four neighbors $x_0,\, x'_0,\, x''_0$, and $x'''_0$ of the middle vertex of a start cherry belongs to a distinct path in $\calQ$.
\end{claim}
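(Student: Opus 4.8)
Each of the four neighbors $x_0,\, x'_0,\, x''_0$, and $x'''_0$ of the middle vertex of a start cherry belongs to a distinct path in $\calQ$.

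The plan is to argue by contradiction, assuming that two of these four vertices, say $x_0$ and $x'_0$ (the general case being handled by the symmetries of the gadget), lie on a common path $Q\in\calQ$. First I would recall that by Lemma~\ref{lem:cherry-lemma}, the start cherry itself is assumed to be one of the paths of $\calP$; in particular, its middle vertex $m$ is \emph{not} available to the paths of $\calQ$. Consequently, any path of $\calQ$ through both $x_0$ and $x'_0$ must connect them using a shortest path that avoids $m$. The core of the argument is therefore a short distance computation: I would show that $d_H(x_0,x'_0)$ in the graph $H$ (where the cherry edge $x_0 m$, $x'_0 m$ is available as a length-$2$ route) equals $2$ via the middle vertex of the start cherry, while \emph{any} $x_0$--$x'_0$ path avoiding $m$ has length strictly greater than $2$. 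Hence an isometric $x_0$--$x'_0$ path of $H$ must use $m$, contradicting $Q\in\calQ$.

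The key steps, in order, are: (i) observe $d_H(x_0,x'_0)=2$, realized only through $m$; this uses that $x_0$ and $x'_0$ are the open-side endpoints of the left and right cables with respect to $\Gamma_i$, which are distinct simple paths joined only at their cores $z_0,z'_0$ and through cherry/twin-cherry gadgets, so no genuine edge or length-$2$ path of $H-m$ connects them. (ii) Enumerate the few candidate $x_0$--$x'_0$ routes in $H-m$: through the cores (via $z_0 z'_0$, giving length $4$ through the two core vertices, or longer through core cherries whose middle vertices are likewise reserved), through the twin-cherry paths $y_1,\dots$ (much longer), or through the semi-grid $\Gamma_i$ (length on the order of $N$); in every case the length is at least $3$, in fact much larger. (iii) Conclude that no path of $\calQ$ contains both $x_0$ and $x'_0$. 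Then repeat the same reasoning for every pair among $\{x_0,x'_0,x''_0,x'''_0\}$: the pairs $\{x_0,x_0'''\}$ and $\{x_0',x_0''\}$ are symmetric to the above (cables with respect to $\Gamma_i$ versus $\Gamma_j$ sharing a core edge), and the ``crossed'' pairs $\{x_0,x_0''\}$, $\{x_0',x_0'''\}$ are separated by two core edges and are even farther apart; in all cases the unique length-$2$ connection is through $m$.

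The main obstacle I anticipate is step~(ii): being careful that no \emph{auxiliary} vertex — the middle of a core cherry, the middle of the crossing cherry, or a twin-cherry leaf — provides an unexpected short $x_0$--$x'_0$ route in $H-m$. The clean way around this is to note that, by Lemmas~\ref{lem:cherry-lemma} and~\ref{lem:twin-cherry-lemma}, \emph{all} of these cherry and twin-cherry vertices are already committed to their own paths in $\calP$ and thus forbidden to paths of $\calQ$; so when bounding the length of an $x_0$--$x'_0$ path \emph{within $\calQ$}, I may restrict attention to the subgraph of $H$ on the ``structural'' vertices (cables, cores, grids, open ends), in which the only short connections left are the core edges and the semi-grid, all of which are long. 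With the cherry vertices thus excluded, the distance bound becomes immediate from the construction of the edge gadget in Section~\ref{sec:IPP:tw-w1:construction:edge-gadget}, and the claim follows.
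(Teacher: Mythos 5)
Your proposal is correct and takes essentially the same approach as the paper: the paper's proof likewise observes that the only isometric path between any two of $x_0, x'_0, x''_0, x'''_0$ passes through the middle vertex of the start cherry, which is already committed to a cherry path of $\calP$ and hence unavailable to paths of $\calQ$. Your version merely spells out the distance computation and the enumeration of alternative routes that the paper leaves implicit.
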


\begin{proof}
    Consider the start cherry represented in blue in Figures~\ref{fig:tw-w1:edge-gadget}--\ref{fig:tw-w1:non-extendable-paths}.
    Note that the only isometric path between any two such neighbors goes through a cherry.
    Hence, no two distinct paths of $\calQ$ may contain two such neighbors.
\end{proof}

Recall that the inner vertices of an edge gadget are those vertices of the edge gadget that are distinct from its open ends.

\begin{claim}\label{claim:tw-w1:edge-at-most-4-non-extendable}
    If at most four paths in $\calQ$ cover all the inner vertices of an edge gadget, then these paths are precisely the non-extendable paths.
\end{claim}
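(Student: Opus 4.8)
The plan is to analyze how the paths of $\calQ$ can cover the inner vertices of an edge gadget, using the structural facts already established, and to conclude that when only four such paths suffice, they must be exactly the four non-extendable paths $(z_0,x_0,\dots,x_{\ell-1})$ (and the three primed analogues). First, I would invoke Claim~\ref{claim:tw-w1:start-cherry}: the four vertices $x_0, x_0', x_0'', x_0'''$ adjacent to the middle of the start cherry lie in four pairwise distinct paths of $\calQ$. Hence, if at most four paths of $\calQ$ cover all inner vertices, there are \emph{exactly} four such paths, one through each of $x_0, x_0', x_0'', x_0'''$. Call them $P, P', P'', P'''$ respectively.

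Next, I would argue that each $P^{(\cdot)}$ covers exactly the inner vertices of its own cable. By Corollary~\ref{cor:tw-w1:inner-1-path-non-extendable}, at least two paths of $\calQ$ meet each cable; but we only have four paths total and four cables, and the start cherry forces the four ``$x_0$-type'' vertices into separate paths, so each cable is met by at least two of $P,P',P'',P'''$ only if some path is shared between two cables. I would rule this out using the core cherries (green) and the core path: the only isometric path joining $x_1$ to $x_1'$ (or any two cables' interiors) passes through a core cherry, so no single path of $\calQ$ can contain inner vertices of two different cables. Combined with the fact that the four open ends separate the edge gadget from the rest of $H$, and that the core vertices $z_0, z_0', z_0'', z_0'''$ must each be covered, I would deduce: each $z_j^{(\cdot)}$ is covered by the unique path through the corresponding $x_0^{(\cdot)}$; thus each $P^{(\cdot)}$ is confined to a single cable together with its core vertex, and contains the core twin-cherry's neighbors as well. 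Since the twin-cherry vertices are assumed to be separate cherries in $\calP$, the path through $x_0$ restricted to the cable is a subpath of $(z_0, x_0, x_1, \dots, x_{\ell-1})$ that must cover all of $x_0,\dots,x_{\ell-1}$ and $z_0$ (the open end $x_\ell$ being excluded as it belongs to a semi-grid border and is reached via a different path); hence it equals the non-extendable path, by Corollary~\ref{cor:tw-w1:inner-1-path-non-extendable}.

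I expect the main obstacle to be the bookkeeping around which vertices count as ``inner'': the core vertices $z_0,z_0',z_0'',z_0'''$, the $x$'s, and the twin-cherry spine $y_1,\dots,y_{\ell-1}$ are all inner, while the cherry and twin-cherry leaves are not the concern of $\calQ$. The delicate point is showing that the twin-cherry spine vertices $y_j$ do not force an extra path into $\calQ$: since each $y_1$ and $y_{\ell-1}$ has two degree-one leaves attached, the Cherry/Twin-cherry lemmas let us assume the twin-cherry (three paths) is in $\calP$, so the $y_j$'s are already accounted for and need not be covered by $P, P', P'', P'''$; I would state this cleanly by noting these vertices are part of the twin-cherries assumed in $\calP$ at the start of Section~\ref{sec:IPP:tw-w1:properties}. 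Finally, once each $P^{(\cdot)}$ is pinned to its cable and shown to cover $z_j^{(\cdot)}$ and all $x$-type vertices of that cable up to but not including the open end, Corollary~\ref{cor:tw-w1:inner-1-path-non-extendable} directly identifies it as the non-extendable path, completing the proof.
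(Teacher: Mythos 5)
Your proposal is correct and follows essentially the same route as the paper's proof: Claim~\ref{claim:tw-w1:start-cherry} forces exactly four paths (one per cable head), the core cherries confine each to a single cable, and Corollary~\ref{cor:tw-w1:inner-1-path-non-extendable} then identifies each as a non-extendable path. The extra bookkeeping you add (twin-cherry spines, core vertices) is consistent with the paper's setup, though your intermediate remark that ``each cable is met by at least two of $P,P',P'',P'''$ only if some path is shared'' is a harmless non-sequitur, since the second path meeting a cable may simply be the one through its open end, which is not an inner vertex.
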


\begin{proof}
    By Claim~\ref{claim:tw-w1:start-cherry}, at least four distinct paths of $\calQ$ intersect the inner vertices of the gadget, precisely on distinct neighbors of the middle vertex of the start cherry, represented in blue in Figures~\ref{fig:tw-w1:edge-gadget}--\ref{fig:tw-w1:non-extendable-paths}.
    Thus, these paths are those that cover the full gadget.
    Note that each of these paths cannot visit another cable, due to the core cherries, represented in green in Figures~\ref{fig:tw-w1:edge-gadget}--\ref{fig:tw-w1:non-extendable-paths}.
    We conclude by Corollary~\ref{cor:tw-w1:inner-1-path-non-extendable}.
\end{proof}

\begin{claim}\label{claim:tw-w1:no-path-between-two-gadgets}
    It can be assumed that no path in $\calQ$ intersects the inner vertices of two distinct edge gadgets.
\end{claim}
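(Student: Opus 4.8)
The plan is to argue that any path of $\calQ$ that intersects two distinct edge gadgets can be safely ``cut'' into pieces, each staying within a single gadget (or within a semi-grid), without increasing the cardinality of $\calP$, and while keeping all paths isometric. First I would note that two distinct edge gadgets are separated by open ends, which themselves lie in the borders of semi-grids; so any path visiting inner vertices of two gadgets must pass through at least one semi-grid, and in particular through at least one open end of each gadget. Thus such a path $Q$ has the form: a (possibly trivial) portion inside gadget $g_1$, then a portion traversing some semi-grid $\Gamma_i$, then a portion inside gadget $g_2$ (and possibly more alternations, but the key point is the first border crossing). The valve cherries, described in Section~\ref{sec:IPP:tw-w1:construction:valves} and depicted in Figure~\ref{fig:tw-w1:valves}, are the main tool: the middle vertex of the left valve cherry of $\Gamma_i$ is adjacent to $x_{\ell-2}$ and $x_{\ell-1}$ in \emph{every} edge gadget attached to the left border of $\Gamma_i$. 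Hence the only isometric path in $H$ between $x_{\ell-2}$ (or $x_{\ell-1}$) of one gadget and the corresponding vertex of another gadget attached to the same border passes through this valve cherry — so no path of $\calQ$ can contain both, since cherries are reserved. This forbids a path from entering $\Gamma_i$ through one gadget's open end and exiting through another gadget attached to the same border of $\Gamma_i$.

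Next I would handle the case where $Q$ enters $\Gamma_i$ from a gadget attached to (say) the left border and exits from a gadget attached to the right border of the \emph{same} semi-grid $\Gamma_i$: this means $Q$ traverses $\Gamma_i$ horizontally from left border to right border. By Claim~\ref{claim:tw-w1:path-in-grids-horizontal-or-vertical}, the intersection of $Q$ with the semi-grid is then horizontal, i.e., a subpath of a single row $j$. But then, using the subdivided virtual columns (the $k$ subdivisions on each border edge) together with the lengths $\ell = N - 2p - k$ of the cables, the two cable portions glued to this row on either side, plus the $2n+3$ grid columns, force the total path length; I would check that $Q$ cannot simultaneously be isometric and reach all the way from the $x_0$ (or $z_0$) of the left gadget to the $x_0$ (or $z_0$) of the right gadget. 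More simply: by Corollary~\ref{cor:tw-w1:inner-1-path-non-extendable}, at least two paths intersect each cable, and by Claim~\ref{claim:tw-w1:start-cherry} the four neighbors of each start cherry lie in four distinct paths of $\calQ$; combined with Claim~\ref{claim:tw-w1:edge-at-most-4-non-extendable} about covering inner vertices, a path spanning two gadgets would have to ``use up'' an extendable slot in each of two gadgets. I would show that replacing $Q$ by its two (or more) restrictions — one per gadget, plus possibly one inside the semi-grid — does not increase $|\calP|$: the restriction to each gadget is still isometric (being a subpath of an isometric path, and the cables are isometric by construction), and the restriction to the semi-grid portion is isometric by Claim~\ref{claim:tw-w1:vertical-is-isometric} or is horizontal hence trivially isometric. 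Since cutting a path into $t$ pieces increases the count by $t-1$, I must be careful; the resolution is that the ``middle'' semi-grid portion is empty or can be merged — here I would lean on the fact that we are choosing $\calP$ of minimum cardinality and re-using the Leaf/Cherry lemma style exchange arguments (Lemmas~\ref{lem:leaf-lemma}--\ref{lem:cherry-lemma}) to absorb the semi-grid portion into an adjacent path.

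The main obstacle I anticipate is the bookkeeping in this last exchange step: showing that the re-splitting of $Q$ does not strictly increase $|\calP|$ requires pairing each newly created path-piece with a path it can be merged with, or arguing that $Q$ traversing two gadgets was already suboptimal (so the original $\calP$ could not have been minimum). I expect the cleanest route is to show that if $Q$ intersects two gadgets then either a valve cherry is violated (contradiction with $\calP$ containing all valve cherries), or $Q$ can be broken at an open end into two isometric paths $Q_1, Q_2$ each in one gadget, and one of these open ends — being a degree-structured vertex on the semi-grid border after subdivisions — allows merging one piece back, so that we trade $Q$ for two paths but simultaneously merge two other paths into one, keeping $|\calP|$ unchanged. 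I would make this precise by carefully tracking, for the fixed minimum IP-partition $\calP$, which paths of $\calQ$ are incident to the open ends $a^i_j, b^i_j$ involved, and invoking Corollary~\ref{cor:tw-w1:inner-1-path-non-extendable} to guarantee there are at least two such paths per cable to rearrange. Since the claim only asserts ``it can be assumed,'' an exchange argument producing an IP-partition of the same cardinality satisfying the property suffices.
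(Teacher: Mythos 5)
You have correctly identified the valve cherries as the decisive device and the overall shape of the argument (an exchange producing an IP-partition of the same cardinality), but your plan stops short precisely at the step you yourself flag as ``the main obstacle'', and that step is where the content of the proof lies. The paper's argument hinges on a structural fact that your plan never extracts: once the cherries are reserved, an isometric path $P$ meeting inner vertices of two gadgets must pivot at a \emph{shared} open end (recall that all gadgets between $V_i$ and $V_j$ are attached to $\Gamma_i$ via the same border vertex $a^i_j$, so no semi-grid traversal is needed, contrary to the decomposition you set up), and the valve cherry then forces $P$ to be exactly the length-two path $x^1_{\ell-1}$ -- (open end) -- $x^2_{\ell-1}$: it cannot reach $x^1_{\ell-2}$ and $x^2_{\ell-2}$, nor the mixed pairs, since those are at distance $2$ through the valve cherry (\Cref{fig:tw-w1:valves}). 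With $P$ pinned down to three vertices, the exchange is immediate and involves no cardinality bookkeeping: the path of $\calQ$ containing $x^1_{\ell-2}$ is confined to $\{x^1_1,\dots,x^1_{\ell-2}\}$ (or is the path through $x^1_0$ from \Cref{claim:tw-w1:start-cherry}), so it can be extended by the single vertex $x^1_{\ell-1}$ while staying isometric, and $P$ is shortened so that it no longer meets the first gadget.

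By contrast, your proposed resolution --- cut $Q$ into per-gadget restrictions and hope to merge pieces back --- is never shown to preserve $|\calP|$; you explicitly note that splitting into $t$ pieces costs $t-1$ and only conjecture that a compensating merge exists, without exhibiting one. Likewise, your treatment of the configuration where the two gadgets hang off different borders (the horizontal row-traversal case) is left as an unverified length computation, whereas the paper dispatches all non-shared-open-end configurations in one stroke via the valve-cherry distances. So the missing idea is concrete: prove first that the offending path penetrates only one vertex deep into each gadget; everything else then follows by a one-vertex extension rather than by a global re-splitting argument.
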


\begin{proof}
    Suppose that such a path $P\in \calQ$ intersecting the inner vertices of two distinct edge gadgets exists.
    Then, it must be connecting two gadgets associated with distinct edges $e_1,e_2$ via their open end.
    By the distances induced by the valve cherry, it must be that these gadgets share a same open end. 
    Moreover, $P$ contains one neighbor of the open end in each of the two edge gadgets, since it contains an inner vertex.
    Let us assume without loss of generality that the open end lies on the left border of a grid, and denote the vertices of the edge gadget associated with $e_1$ and $e_2$ with $1$ and $2$ in superscript, as in Figure~\ref{fig:tw-w1:valves}.
    
    Note that $P$ cannot reach $x^1_{\ell-2}$ and $x^2_{\ell-2}$, since the unique shortest path between these vertices goes through the valve cherry; see Figure~\ref{fig:tw-w1:valves}.
    Similarly, it cannot contain both $x^1_{\ell-1}$ and $x^2_{\ell-2}$ at the same time, nor $x^1_{\ell-2}$ and $x^2_{\ell-1}$ at the same time.
    Hence, $P$ consists of a path of length~2 induced by $x^1_{\ell-1}$, $x^1_{\ell-1}$ and their mutual neighbor, which is the open end.
    
    Now, by Claim~\ref{claim:tw-w1:start-cherry}, we also know that there is a path $Q\in \calQ$ containing the vertex $x^1_0$.
    Consider the path $Q'$ containing $x^1_{\ell-2}$.
    Note that this path either equals $Q$, or has its vertices in $\{\smash{x^1_1,\dots,x^1_{\ell-2}}\}$.
    In any case, it is easily seen that this path can be extended to contain $x_{\ell-1}$, and $P$ can be reduced, so that it does not intersect the edge gadget associated with $e_1$ anymore. 
    Repeating this argument, we may assume that in $\calQ$, no path intersects the inner vertices of two distinct edge gadgets, as desired.
\end{proof}

We derive the following as a corollary of Claims~\ref{claim:tw-w1:start-cherry} and~\ref{claim:tw-w1:no-path-between-two-gadgets}.

\begin{corollary}\label{cor:edge-require-at-least-4-paths}
    There are at least four distinct paths in $\calQ$ intersecting each edge gadget, and distinct edge gadgets are intersected by distinct such paths.
\end{corollary}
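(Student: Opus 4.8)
The plan is to read the corollary off directly from Claims~\ref{claim:tw-w1:start-cherry} and~\ref{claim:tw-w1:no-path-between-two-gadgets}, the only real point being to check that the four vertices adjacent to a start cherry's middle vertex are genuinely \emph{inner} vertices of the edge gadget, so that meeting them counts as intersecting the gadget.

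First I would observe that, for a fixed edge gadget, the vertices $x_0, x'_0, x''_0, x'''_0$ adjacent to the middle vertex of its start cherry are inner vertices. Indeed, each of them is the vertex following the core in one of the four cables $(z_0, x_0, x_1, \dots, x_\ell)$, $(z'_0, x'_0, \dots, x'_{\ell'})$, etc., and none is an open end since $\ell, \ell', \ell'', \ell''' \geq 1$ by the choice of parameters (with $N = 2n^2$, one has $\ell = N - 2p - k \geq 1$ for the admissible values of $p$ and $k$, and similarly for the other three). These four vertices are covered by $\calP$; since every cherry of $\calP$ covers only cherry vertices, they are covered by paths of $\calQ$, and by Claim~\ref{claim:tw-w1:start-cherry} these are four pairwise distinct paths. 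Each of them therefore intersects the inner vertices of the edge gadget, which gives the first assertion.

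For the second assertion, I would invoke Claim~\ref{claim:tw-w1:no-path-between-two-gadgets}: after the (harmless) modification it provides, no path of $\calQ$ meets the inner vertices of two distinct edge gadgets. Consequently, for any two distinct edge gadgets, the collections of at least four paths of $\calQ$ meeting their respective inner vertices are pairwise disjoint, which is exactly the statement that distinct edge gadgets are intersected by distinct such paths.

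I do not expect any genuine obstacle here, as the corollary is an immediate consequence of the two claims. The only two subtleties worth spelling out are (i) that $x_0, x'_0, x''_0, x'''_0$ are inner (not open-end) vertices, which licenses the phrase ``intersecting the edge gadget''; and (ii) that the reduction performed in the proof of Claim~\ref{claim:tw-w1:no-path-between-two-gadgets} does not break Claim~\ref{claim:tw-w1:start-cherry}, since the latter relies only on the fact that the unique shortest path between two of these neighbors passes through a cherry, a property that is untouched by that modification.
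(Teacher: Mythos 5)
Your proposal is correct and matches the paper exactly: the corollary is stated there as an immediate consequence of Claims~\ref{claim:tw-w1:start-cherry} and~\ref{claim:tw-w1:no-path-between-two-gadgets}, with no further argument given. Your additional checks (that $x_0, x'_0, x''_0, x'''_0$ are inner vertices, and that the modification in Claim~\ref{claim:tw-w1:no-path-between-two-gadgets} preserves Claim~\ref{claim:tw-w1:start-cherry}) are sound and merely make explicit what the paper leaves implicit.
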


We are now interested in covering all the vertices of an edge gadget, including its open ends (which are shared by its associated grids and other edge gadgets).

\begin{claim}\label{claim:tw-w1:edge-at-most-5-extendable}
    If at most five paths in $\calQ$ cover all the vertices of an edge gadget, including its open end, then these paths are precisely the extendable paths of each cable, together with a single core path. Moreover, such a family can be extended in order to cover their associated rows, except for their crest columns.
\end{claim}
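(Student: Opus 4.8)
The plan is to first pin down the number of paths of $\calQ$ meeting the edge gadget, then their shape, relying on the distance properties guaranteed by the start, core, twin, and valve cherries; the pictures to keep in mind are Figures~\ref{fig:tw-w1:extendable-paths} and~\ref{fig:tw-w1:non-extendable-paths}. For the count: by Claim~\ref{claim:tw-w1:start-cherry} the four neighbours $x_0,x'_0,x''_0,x'''_0$ of the middle vertex of the start cherry lie in four pairwise distinct paths of $\calQ$; if these were the only paths meeting the gadget, then at most four paths would cover all its inner vertices, so by Claim~\ref{claim:tw-w1:edge-at-most-4-non-extendable} they would be exactly the four non-extendable paths $(z_0,x_0,\dots,x_{\ell-1})$, one per cable. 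But such a path misses the open end $x_\ell$ of its cable and, by Claim~\ref{claim:tw-w1:non-extendable-property}, cannot be extended to reach it, so the open ends would be uncovered, contradicting the hypothesis. Hence exactly five paths $P_1,\dots,P_5$ cover the gadget, and after renaming we may assume $x_0\in P_1$, $x'_0\in P_2$, $x''_0\in P_3$ and $x'''_0\in P_4$.

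Next I would locate these five paths. Ignoring the cherries --- which are unused by paths of $\calQ$ --- the gadget is a caterpillar whose spine is the core path $z_0z'_0z''_0z'''_0$ and whose four legs are the $x$-parts of the cables, the open ends being its only connections to the rest of $H$. I would first show that each $P_i$, $1\le i\le 4$, stays within cable $i$ (up to a possible extension inside the semi-grid attached to that cable): a path of $\calQ$ leaving a cable would have to cross a core edge --- blocked by the core cherries, which forbid such extensions exactly as in the proof of Claim~\ref{claim:tw-w1:non-extendable-property} (for instance $z'_0$ lies at distance two from $x_1$ via the left core cherry) unless the path contains two neighbours of the start cherry, which Claim~\ref{claim:tw-w1:start-cherry} excludes --- or else leave and re-enter through open ends, which is blocked between distinct gadgets by the valve cherries (as in the proof of Claim~\ref{claim:tw-w1:no-path-between-two-gadgets}) and, between two cables of the same gadget, by the choice $N=2n^2$: the only isometric route between two open ends on the same side of a semi-grid runs straight along a grid row and cannot be prolonged inside both cables without ceasing to be isometric, because the shortcut through the start cherry is then shorter. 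With this confinement in place, Corollary~\ref{cor:tw-w1:inner-1-path-non-extendable} (each cable is met by at least two paths) together with a counting argument finish the job: the only ways to cover the four core vertices $z_0,z'_0,z''_0,z'''_0$ are $P_5$ and ``non-extendable'' prolongations of the $P_i$'s towards $z_0,z'_0,z''_0,z'''_0$, but every such prolongation leaves the corresponding cable's open end uncovered and hence would demand a sixth path; so the $P_i$'s must cover their cables' $x$-parts entirely, and since the core twin-cherry forbids including the core vertex, $P_i=(x_0,\dots,x_\ell)$ (possibly extended into the semi-grid) is the extendable path of cable $i$, while $P_5$ is the core path depicted in Figure~\ref{fig:tw-w1:extendable-paths}. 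The ``moreover'' then follows from Claim~\ref{claim:tw-w1:extendable-property}: each extendable path can be prolonged inside its semi-grid up to, but excluding, its crest column, and the same counting shows that a five-path cover must in fact perform these prolongations up to the crest.

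The step I expect to be the main obstacle is the counting/case analysis at the end: one must verify that no ``mixed'' cover --- using a non-extendable path on some cable, or a path that hops between two cables or between a cable and a grid row --- can cover the whole gadget, open ends included, with only five paths. This is exactly the reason the gadget is equipped with four different families of cherries: each of them rules out one type of unwanted isometric path (extending a non-extendable path, crossing a core, dipping into a grid row, jumping between gadgets), and the argument must combine all of their distance constraints at once.
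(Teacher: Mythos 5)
Your proposal is correct and follows essentially the same approach as the paper's proof: count the paths meeting the gadget starting from Claim~\ref{claim:tw-w1:start-cherry}/Corollary~\ref{cor:edge-require-at-least-4-paths}, argue by elimination that the four paths through $x_0,x'_0,x''_0,x'''_0$ must be extendable, conclude the fifth is the core path, and invoke Claim~\ref{claim:tw-w1:extendable-property} for the ``moreover''. The main difference is one of exposition: you make the confinement of each $P_i$ to its own cable an explicit intermediate step, whereas the paper relies on it implicitly (it is essentially recorded inside the proof of Claim~\ref{claim:tw-w1:edge-at-most-4-non-extendable} via the core cherries, and inside Claim~\ref{claim:tw-w1:no-path-between-two-gadgets} via the valves); the paper also phrases the final counting as ``at least three of the four are extendable, and then the fourth must be too,'' while you argue directly that none can be non-extendable. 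One small inaccuracy worth flagging: when you speak of ``two open ends on the same side of a semi-grid,'' the two open ends of a single gadget attached to $\Gamma_i$ sit on \emph{opposite} borders, so the relevant point is not the valve cherries but simply that such a row-long detour has length about $N$ whereas the start cherry puts $x_0$ and $x'_0$ at distance~$2$; the conclusion you draw is still correct.
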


\begin{proof}
    By Corollary~\ref{cor:edge-require-at-least-4-paths}, at least four distinct paths of $\calQ$ intersect the edge gadget, namely on $x_0,x'_0, x''_0$ and $x'''_0$.
    As we only have one extra path to cover the edge gadget, including the open ends, we derive that at least three of the above paths are of the extendable type.
    Thus, the three remaining core vertices and the remaining cable are covered by two other paths.
    Note that the path that contains the remaining open end must be the one containing a neighbor of the start cherry among $\{x_0,x'_0, x''_0,x'''_0\}$, as a single path cannot contain the four core vertices plus one of these vertices.
    Hence, this path is of the extendable type as well, and the last path is a core path.
    The last statement follows by Claim~\ref{claim:tw-w1:extendable-property}.
\end{proof}

\begin{claim}\label{claim:tw-w1:no-more-than-5}
    It can be assumed that no more than five paths in $\calQ$ intersect an edge gadget.
\end{claim}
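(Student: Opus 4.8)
The plan is to argue that if some edge gadget is intersected by six or more paths of $\calQ$, then we can modify the IP-partition $\calP$ (without increasing its cardinality, and while maintaining that it contains all the cherries and twin-cherries fixed earlier) so that the gadget is intersected by at most five paths. Iterating this over all edge gadgets yields the claim. First I would recall what we already know: by Corollary~\ref{cor:edge-require-at-least-4-paths}, exactly the four neighbors $x_0, x_0', x_0'', x_0'''$ of the start cherry are the ``entry points'' of $\calQ$ into the gadget, distinct gadgets use distinct paths (Claim~\ref{claim:tw-w1:no-path-between-two-gadgets}), and by Corollary~\ref{cor:tw-w1:inner-1-path-non-extendable} each of the four cables is met by at least two paths of $\calQ$. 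So the only way to exceed five paths is if, along some cable, a path ``breaks'' into more pieces than necessary, or the core path is itself fragmented.

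The key step is a local rerouting argument along a single cable. Consider a cable, say the left cable $(z_0, x_0, \dots, x_\ell)$ with its core twin-cherry $y_1, \dots, y_{\ell-1}$; suppose three or more paths of $\calQ$ meet $\{x_0, \dots, x_\ell, z_0\}$. Since $x_\ell$ is an open end shared with the semi-grid, and $x_0$ is the only entry via the start cherry, the paths meeting the cable's interior $\{x_1, \dots, x_{\ell-1}\}$ that are \emph{not} the one through $x_0$ can only enter from the open-end side (through $x_\ell$, hence from the semi-grid or, as excluded by Claim~\ref{claim:tw-w1:no-path-between-two-gadgets}, from another gadget) or be fully contained in the cable interior. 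A cable interior path is an isometric subpath $(x_a, \dots, x_b)$; two such pieces, or one such piece together with the $x_0$-path and the $x_\ell$-path, can always be merged: the concatenation $(x_0, \dots, x_b)$ or $(x_a, \dots, x_\ell)$ remains an isometric path since the cable is a single induced path in $H$ (the only chords would go through cherries, which are in $\calP\setminus\calQ$), and merging reduces $|\calP|$, contradicting minimality — or, if we only want ``at most five'', we perform it as a replacement that keeps the count from exceeding five. One has to be slightly careful that the $x_0$-path may want to be of the extendable type (penetrating the semi-grid); but the argument of Claim~\ref{claim:tw-w1:no-path-between-two-gadgets} already shows how to shuffle which piece ``owns'' $x_{\ell-1}$, and the same trick applies here. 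I would therefore reduce each cable to exactly two intersecting paths (one containing $x_0$, one containing $x_\ell$), and then handle the core: the four core vertices $z_0, z_0', z_0'', z_0'''$ together with the core edges form a path, and by the core cherries (Claim~\ref{claim:tw-w1:non-extendable-property}) no path of $\calQ$ can leave the core into a ``wrong'' cable, so the core is covered by pieces that glue to the appropriate $z_0$-endpoints; a counting check then shows that after the mergers the whole gadget is covered by the four entry-paths plus at most one leftover core path, i.e., at most five.

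The main obstacle I anticipate is bookkeeping the interaction at the open ends: each $x_\ell$ is identified with a border vertex $a^i_j$ of a semi-grid, shared with potentially many edge gadgets and with the grid's subdivided columns, so when I merge a cable-interior piece with the $x_\ell$-path I must verify that the $x_\ell$-path is exactly one of the four semi-grid-penetrating paths and not something pathological crossing back out — but this is precisely what Claims~\ref{claim:tw-w1:start-cherry}, \ref{claim:tw-w1:no-path-between-two-gadgets} and the valve cherries control, so the obstacle is more notational than conceptual. The other delicate point is ensuring the mergers don't inadvertently destroy the extendability needed later (Claim~\ref{claim:tw-w1:edge-at-most-5-extendable}); I would phrase the reduction so that whenever a gadget is covered by exactly five paths we leave it untouched, and only shrink gadgets that are strictly over budget, which sidesteps any conflict with the subsequent structural claims.
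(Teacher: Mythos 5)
Your proposal takes a different route from the paper's (local merging along cables versus a global exchange), and it has a genuine gap at its central step. The assertion that two adjacent pieces along a cable ``can always be merged \dots\ since the cable is a single induced path in $H$'' conflates \emph{induced} with \emph{isometric}: the gadget is engineered precisely so that certain concatenations are not isometric. The core twin-cherry makes $z_0$ equidistant from $x_{\ell-1}$ and $x_\ell$, so a piece anchored at $z_0$ and a piece reaching $x_\ell$ (or beyond, into the grid) can never be merged. Concretely, consider the configuration where one cable's interior is covered by $P_1=(z_0,x_0,\dots,x_a)$ and $P_2=(x_{a+1},\dots,x_\ell,\dots)$ with $P_2$ extending into the row, the other three cables carry extendable paths, and the remaining core vertices form a fourth piece. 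This gives six paths meeting the inner vertices, yet no adjacent pair is mergeable ($P_1\cup P_2$ fails by the twin-cherry; gluing $P_1$ to the core piece fails by the core cherries, which put $z_0$ at distance $2$ from $z'''_0$ and $x_1$ at distance $2$ from $z'_0$). So minimality of $\calP$ alone does not force your ``one path per cable interior'' normal form, and your final count of $4+1$ does not go through.

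The paper's proof sidesteps local merging entirely: it \emph{replaces} the covering of an over-budget gadget by the canonical five paths of Claim~\ref{claim:tw-w1:edge-at-most-5-extendable}, and then accounts for whatever the surplus paths were achieving outside the gadget. By Claim~\ref{claim:tw-w1:path-in-grids-horizontal-or-vertical}, any extra path entering the grid through the gadget's open end is confined to the single row attached to that gadget, so the only additional coverage it can provide beyond what the extendable paths already reach (up to the crest's padding) is the tail of that row starting at the crest vertex --- which can be covered by one separate horizontal path that does not meet the gadget's inner vertices. This exchange keeps the total cardinality unchanged while bringing the gadget down to five paths; it is exactly the step your local argument is missing. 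If you want to salvage your approach, you would need to add this replacement step for the non-mergeable ``$z_0$-side / $x_\ell$-side'' split, at which point you have essentially reconstructed the paper's argument.
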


\begin{proof}
    By Claim~\ref{claim:tw-w1:edge-at-most-5-extendable}, five paths suffice to cover an edge gadget, and moreover, paths covering the open end can be extended into the grid up to the associated crest column.
    Thus, the only utility for intersecting a gadget with strictly more than five paths would be to have such an additional path to cover more in the semi-grid than the five paths provided by Claim~\ref{claim:tw-w1:edge-at-most-5-extendable} can.
    However, by Claim~\ref{claim:tw-w1:path-in-grids-horizontal-or-vertical}, this extra path must be horizontal, hence cannot be used to cover other rows in the semi-grid.
    We derive that the same scenario is achieved by having the extendable paths provided by Claim~\ref{claim:tw-w1:edge-at-most-5-extendable} to be extended to the crest column, and the remaining vertex of the crest column to be covered by one extra one-vertex path.
    Hence, we can assume that no more than five paths in $\calQ$ intersect a given edge gadget, as desired.
\end{proof}

We are now interested in the consequences of using extendable or non-extendable paths to cover edge gadgets.

\begin{definition}
    We say that an edge $(v^i_p, v^j_q)$ in $G$ is selected by $\calQ$ if the inner-vertices of the edge gadget encoding it in $H$ intersect at least five paths in $\calQ$; otherwise, we say that the edge is not selected by $\calP$.
\end{definition}

\begin{claim}\label{claim:tw-w1:at-most-one-selection}
    No two edge gadgets attached to the same row of a semi-grid are selected by $\calQ$.
\end{claim}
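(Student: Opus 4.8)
The plan is to derive a contradiction from the assumption that two edge gadgets, encoding edges $e_1$ and $e_2$, both attached to the same row~$j$ of a semi-grid $\Gamma_i$, are selected by $\calQ$. Since both $e_1$ and $e_2$ join a vertex of $V_i$ to a vertex of $V_j$, the open ends of their gadgets on the $\Gamma_i$-side were, by construction, identified with the \emph{same} two border vertices $a^i_j$ and $b^i_j$ of $\Gamma_i$; this shared attachment is the feature that will be exploited.

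First I would pin down how a selected gadget is covered. If $e_1$ is selected, then by definition at least five paths of $\calQ$ meet the inner vertices of its gadget, hence at least five paths meet the gadget itself; combined with Claim~\ref{claim:tw-w1:no-more-than-5}, exactly five paths of $\calQ$ meet this gadget. These five paths necessarily cover every vertex of the gadget, including its open ends: the open ends are grid-border vertices (hence not part of any cherry), so they lie on paths of $\calQ$, and any such path meets the gadget and is therefore among the five. The hypothesis of Claim~\ref{claim:tw-w1:edge-at-most-5-extendable} is thus satisfied, so these five paths are precisely the four extendable paths, one per cable, together with a single core path. In particular, the extendable path of the left cable of $e_1$ with respect to $\Gamma_i$ is among them; since that cable is the path $(x_0,\dots,x_\ell)$ whose open end $x_\ell$ was identified with $a^i_j$, this extendable path contains $a^i_j$. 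Call it $P_1$, and let $P_2\in\calQ$ be, by the same argument applied to $e_2$, the extendable path of the left cable of $e_2$ with respect to $\Gamma_i$, which likewise contains $a^i_j$.

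It then remains to observe that $P_1\ne P_2$. Indeed, $P_1$ contains an inner vertex of the gadget of $e_1$ (namely $x_0$ of its left cable) and $P_2$ contains the corresponding inner vertex of the gadget of $e_2$, so if we had $P_1=P_2$ this common path would intersect the inner vertices of two distinct edge gadgets, contradicting Claim~\ref{claim:tw-w1:no-path-between-two-gadgets}. Hence $P_1$ and $P_2$ are two distinct paths of $\calQ\subseteq\calP$ both passing through $a^i_j$, which is impossible since $\calP$ is a partition of $V(H)$ into pairwise vertex-disjoint paths. This contradiction proves the claim.

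I do not expect a serious obstacle: the heavy lifting is done by Claims~\ref{claim:tw-w1:no-more-than-5},~\ref{claim:tw-w1:edge-at-most-5-extendable} and~\ref{claim:tw-w1:no-path-between-two-gadgets}, and the only step requiring a little care is the first one, where one has to ensure that the five paths meeting a selected gadget also cover its open ends, so that Claim~\ref{claim:tw-w1:edge-at-most-5-extendable} applies; this is immediate from Claim~\ref{claim:tw-w1:no-more-than-5} once one notes that the open ends lie on paths of $\calQ$ which, by definition, meet the gadget.
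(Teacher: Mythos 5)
Your proof is correct, but it reaches the conclusion by a genuinely different route than the paper. The paper argues by normalization: it performs a case analysis on whether the open ends of a selected gadget are covered by its five paths, and repeatedly invokes Claims~\ref{claim:tw-w1:edge-at-most-4-non-extendable} and~\ref{claim:tw-w1:edge-at-most-5-extendable} to \emph{modify} $\calQ$ until at most one gadget per row remains selected (consistent with the ``it can be assumed'' flavor of the neighboring claims). You instead derive an outright contradiction: once Claim~\ref{claim:tw-w1:no-more-than-5} is in force, ``selected'' (at least five paths on the inner vertices) together with ``at most five paths on the whole gadget'' forces every path touching the gadget --- in particular the path through each shared open end $a^i_j$, $b^i_j$ --- to be one of the five, so the hypothesis of Claim~\ref{claim:tw-w1:edge-at-most-5-extendable} is automatically met and each selected gadget is covered by its four extendable paths plus a core path. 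Two gadgets on the same row then each own a distinct (by Claim~\ref{claim:tw-w1:no-path-between-two-gadgets}) extendable path through the \emph{same} border vertex $a^i_j$, contradicting disjointness of the partition. The key step you supply that the paper arranges by hand is precisely the observation that the paths through the open ends are automatically among the five; this is what lets you avoid the paper's case analysis and modification of $\calQ$. Your version is shorter and proves the statement as stated (an impossibility, not a further normalization), at the cost of leaning slightly harder on the exact quantification in Claims~\ref{claim:tw-w1:edge-at-most-5-extendable} and~\ref{claim:tw-w1:no-more-than-5}; both approaches rest on the same shared-open-end feature of the construction.
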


\begin{proof}
    Note that edge gadgets attached to the same row are separated by their mutual open end.
    Moreover, if an edge gadget is selected by $\calQ$, we may assume that its open end is intersected by one of its five paths, as otherwise by Claim~\ref{claim:tw-w1:edge-at-most-4-non-extendable} we may cover the gadget with four paths and modify $\calQ$ accordingly.
    Let us suppose that at least two edge gadgets are selected by $\calQ$ and make a case analysis depending on the way their open ends are covered.

    Let us first assume that one such gadget has both its open ends covered. 
    Recall that by Claim~\ref{claim:tw-w1:no-path-between-two-gadgets}, the paths of $\calQ$ do not intersect several edge gadgets.
    Thus, and since the open ends separate all other edge gadgets from the graph, we conclude that the paths of $\calQ$ covering the other gadgets attached to the same row only cover their inner vertices.
    Again, we can modify $\calQ$ so that these gadgets are covered by four paths according to Claim~\ref{claim:tw-w1:edge-at-most-4-non-extendable}, and hence that the gadgets are no longer selected by $\calQ$.

    Let us now suppose that an edge gadget is selected, but only one of its two open ends are covered using these paths.
    Thus, the open end that is not covered by these paths must be covered by another path $Q$ from $\calQ$.
    By Claim~\ref{claim:tw-w1:edge-at-most-5-extendable}, we can modify $\calQ$ so that both its open ends are covered by at most five paths, and reduce $Q$ accordingly.
    We are reduced to the previous case, for which we proved that $\calQ$ can be further modified to satisfy the statement.
    This concludes the case study, hence the proof.
\end{proof}

\begin{claim}\label{claim:tw-w1:no-selection-implies-confined-additional-row}
    If no edge between $V_i$ and $V_j$ is selected by $\calQ$, then at least one additional path in $\calQ$ is contained in the associated rows in each of $V_i$ and $V_j$.
    Hence, for any two distinct pairs of color classes for which no edge has been selected by $\calQ$, such paths are distinct.
\end{claim}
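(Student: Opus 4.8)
The plan is to use non-selection to pin down how the edge gadgets between $V_i$ and $V_j$ are covered, and then argue that the associated rows must be covered ``from the inside''. First I would observe that, since no edge between $V_i$ and $V_j$ is selected by $\calQ$, the inner vertices of every edge gadget encoding such an edge are met by at most four paths of $\calQ$; with Corollary~\ref{cor:edge-require-at-least-4-paths} this forces exactly four, and by Claim~\ref{claim:tw-w1:edge-at-most-4-non-extendable} these are precisely the four non-extendable paths of the gadget. By Claim~\ref{claim:tw-w1:non-extendable-property} none of them extends, and since the non-extendable path of a cable ends at $x_{\ell-1}$, the open end $x_\ell$ --- the border vertex $a^i_j$ or $b^i_j$ with which it is identified --- is covered by some other path of $\calQ$.

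The heart of the argument is to show that a path $Q\in\calQ$ covering $a^i_j$ is contained in row~$j$ of $\Gamma_i$. I would argue that $Q$ cannot enter an edge gadget attached to $a^i_j$, because the only neighbours of $a^i_j$ inside such gadgets are the vertices $x_{\ell-1}$, already covered by the non-extendable paths; and $Q$ cannot run through the middle vertex of a grid cherry, since those belong to $\calP \setminus \calQ$. Hence the trace of $Q$ on $\Gamma_i$ coincides with its trace on the grid, so Claim~\ref{claim:tw-w1:path-in-grids-horizontal-or-vertical} makes it horizontal, vertical, or vertical with padding endpoints; but $a^i_j$ lies in the leftmost column of $\Gamma_i$, which carries no vertical edge and is the padding of no column, so only the horizontal case survives and $Q$ is contained in row~$j$ of $\Gamma_i$. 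Running the same argument from $a^j_i$ (or $b^j_i$) yields a path of $\calQ$ contained in row~$i$ of $\Gamma_j$. Since neither path contains an inner vertex of any edge gadget, neither is a non-extendable path nor a cherry, so both are genuinely additional.

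For the last sentence, I would note that rows~$j$ of $\Gamma_i$ and $i$ of $\Gamma_j$ are vertex-disjoint from the rows associated with any other pair of color classes that has no selected edge --- distinct rows of one semi-grid share no vertex, distinct semi-grids meet only in border vertices, and a border vertex cannot lie on two row-confined paths living in different rows --- whence the additional paths attached to distinct pairs are pairwise distinct. I expect the confinement step to be the delicate part: a path through a border vertex could a priori escape either by diving into a gadget through the shared open end, or by bending from its row into a semi-grid column. The first is ruled out only through Claims~\ref{claim:tw-w1:edge-at-most-4-non-extendable} and~\ref{claim:tw-w1:non-extendable-property} --- non-selection forces the non-extendable paths, which then occupy all gadget-side neighbours of the border --- and the second only through the dichotomy of Claim~\ref{claim:tw-w1:path-in-grids-horizontal-or-vertical}, so I would be careful to invoke these precisely rather than rely on the figures.
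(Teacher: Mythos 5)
Your proof is correct and follows essentially the same route as the paper's: non-selection plus Corollary~\ref{cor:edge-require-at-least-4-paths} and Claim~\ref{claim:tw-w1:edge-at-most-4-non-extendable} force the non-extendable paths on every gadget between $V_i$ and $V_j$, so the open ends $a^i_j,b^i_j$ must be covered by a further path of $\calQ$, which Claim~\ref{claim:tw-w1:path-in-grids-horizontal-or-vertical} then confines to the corresponding row. If anything, you are slightly more explicit than the paper in justifying why the horizontal case of the trichotomy is the only one available at a border vertex, and in spelling out the distinctness of the extra paths across pairs.
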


\begin{proof}
    Consider a pair $V_i,V_j$ such that no edge between $V_i$ and $V_j$ is selected by $\calQ$. 
    By Claim~\ref{claim:tw-w1:edge-at-most-4-non-extendable}, the paths of $\calQ$ intersecting the edge gadgets associated with $V_i,V_j$ are all of the non-extendable type.
    Let us focus on $\Gamma_i$ and consider the open ends $a:=a^i_j$ and $b:=b^i_j$.
    Note that their neighbor in the edge gadgets are covered by the non-extendable paths described above.
    Their other neighbor is unique and consists of a subdivision vertex in the semi-grid.
    Hence, there is a path in $\calQ$ starting at $a$.
    By Claim~\ref{claim:tw-w1:path-in-grids-horizontal-or-vertical}, this path must be either horizontal, vertical, or almost vertical with possibly the endpoints not lying on a column.
    However, this path is of the horizontal type.
    Hence, it is confined in the row until it reaches $b$.
    At that point, it may not continue since its neighbors are covered by the non-extendable paths described above.
    This concludes the proof.
\end{proof}

\subsection{Proof of Theorem~\ref{thm:IPP:w1-tw-hard}}
\label{sec:IPP:tw-w1:proof}

We are now ready to give the proof of Theorem~\ref{thm:IPP:w1-tw-hard}, that we split into three lemmas.
Let us recall that the reduction sets 
\[
    k' := k\cdot (n+1)\cdot (k-2) + 23 \cdot |E(G)| + \binom{k}{2} + k + 2k
\]
given an instance $(G, k)$ of \textsc{Multicolored Clique} where $n$ is the size of color classes.

\begin{lemma}\label{lemma:tw-w1:hard-backward}
    If $(H, k')$ is a \yes-instance of \SoPP, then $(G, k)$ is a \yes-instance of \textsc{Multicolored Clique}.
\end{lemma}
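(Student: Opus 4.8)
The plan is to start from a minimum IP-partition $\calP$ of $H$ witnessing the \yes-instance, so that $|\calP|\le k'$, and to recover a multicolored clique of $G$ from the edges whose gadgets are ``over-partitioned'' by $\calP$. First I would invoke \Cref{lem:cherry-lemma} and \Cref{lem:twin-cherry-lemma} to assume that $\calP$ contains all the grid, start, core, crossing and valve cherries and all the twin-cherries of the construction; these contribute exactly $k\cdot(n+1)\cdot(k-2)+19\,|E(G)|+2k$ paths, so the family $\calQ\subseteq\calP$ of remaining paths satisfies $|\calQ|\le 4\,|E(G)|+\binom{k}{2}+k$. Next I would normalize $\calP$ using \Cref{claim:tw-w1:no-path-between-two-gadgets}, \Cref{claim:tw-w1:no-more-than-5}, \Cref{claim:tw-w1:extendable-property} and \Cref{claim:tw-w1:at-most-one-selection}, so that: no path of $\calQ$ meets the inner vertices of two distinct edge gadgets; every edge gadget is met by exactly four or five paths of $\calQ$ (four unless the edge is \emph{selected}, in which case five, with the extendable paths extended up to the crest columns); and at most one edge between any two color classes is selected. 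Write $s$ for the number of selected edges.

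I would then lower-bound $|\calQ|$ by three pairwise disjoint subfamilies. (i) By \Cref{cor:edge-require-at-least-4-paths} and the definition of a selected edge, the paths of $\calQ$ meeting inner vertices of edge gadgets are pairwise distinct and number $4\,|E(G)|+s$. (ii) For each of the $\binom{k}{2}-s$ color-class pairs $\{V_i,V_j\}$ with no selected edge, \Cref{claim:tw-w1:no-selection-implies-confined-additional-row} supplies a path of $\calQ$ confined to row $j$ of $\Gamma_i$ and another confined to row $i$ of $\Gamma_j$; these $2(\binom{k}{2}-s)$ paths are pairwise distinct and, avoiding inner vertices, disjoint from family~(i). (iii) After normalization, the only vertices of a semi-grid $\Gamma_i$ not covered by families (i) and (ii) are its \emph{crest vertices} --- one per selected pair $\{V_i,V_j\}$, lying in the crest column of that selected edge; by \Cref{claim:tw-w1:path-in-grids-horizontal-or-vertical} a path of $\calQ$ through such a vertex is trivial or confined to a single column, so crest vertices in distinct columns of $\Gamma_i$ lie in distinct, so-far-uncounted paths of $\calQ$. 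Letting $c_i$ be the number of distinct crest columns among selected edges incident to $V_i$, this gives $4\,|E(G)|+s+2\big(\binom{k}{2}-s\big)+\sum_{i=1}^{k}c_i\le|\calQ|\le 4\,|E(G)|+\binom{k}{2}+k$, that is, $\sum_{i=1}^{k}c_i\le s-\binom{k}{2}+k$.

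Finally I would extract the clique from this inequality. Let $t$ be the number of color classes incident to a selected edge; then $c_i\ge 1$ for such classes, so $t\le\sum_i c_i\le s-\binom{k}{2}+k$, while $s\le\binom{t}{2}\le\binom{k}{2}$. Using $\binom{k}{2}-k=\binom{k-1}{2}-1$ and assuming $k\ge 4$ (harmless for a $\W[1]$-hardness reduction), these constraints force first $s\ge 1$ hence $t\ge 2$, then $\binom{t}{2}\ge t+\binom{k-1}{2}-1>\binom{k-1}{2}$, so $t\ge k$, whence $t=k$ and $s=\binom{k}{2}$; plugging back, $\sum_i c_i\le k$ together with $c_i\ge 1$ for all $i$ yields $c_i=1$ for every $i$. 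Thus every pair of color classes carries a selected edge, and all selected edges incident to $V_i$ share a single endpoint $v^i_{p_i}\in V_i$. Hence the selected edge between $V_i$ and $V_j$ is precisely $(v^i_{p_i},v^j_{p_j})$, so $(v^i_{p_i},v^j_{p_j})\in E(G)$ for all $i\ne j$, and $\{v^1_{p_1},\dots,v^k_{p_k}\}$ is a multicolored clique of $G$, proving the lemma.

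The hard part will be step (iii): I need to argue carefully that, under the combined normalizations, the crest vertices are genuinely left uncovered by families (i) and (ii) and that the horizontal/vertical rigidity of \Cref{claim:tw-w1:path-in-grids-horizontal-or-vertical} really forces at least $c_i$ extra paths per semi-grid, all while keeping the three subfamilies disjoint so that no path is double-counted; reconciling the several normalizations (extendable paths reaching their crests, at most five paths per gadget, at most one selection per row) is where most of the bookkeeping lies.
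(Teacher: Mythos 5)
Your proposal is correct and follows essentially the same route as the paper: assume the cherries via Lemmas~\ref{lem:cherry-lemma} and~\ref{lem:twin-cherry-lemma}, charge $4|E(G)|+s$ paths to the edge gadgets, $2\bigl(\binom{k}{2}-s\bigr)$ to the rows of unselected pairs via Claim~\ref{claim:tw-w1:no-selection-implies-confined-additional-row}, and extra column paths to the crest vertices via Claims~\ref{claim:tw-w1:extendable-property} and~\ref{claim:tw-w1:path-in-grids-horizontal-or-vertical}, then force alignment of the crests. Your explicit inequality $\sum_i c_i\le s-\binom{k}{2}+k$ together with $s\le\binom{t}{2}$ is in fact a more careful rendering of the paper's budget argument, which only treats the two extreme cases and dismisses the mixed one informally.
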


\begin{proof}
    Consider an IP-partition $\calP$ of $H$ of minimum cardinality.
    As discussed in the beginning of Section~\ref{sec:IPP:tw-w1:properties}, by Lemmas~\ref{lem:cherry-lemma} and \ref{lem:twin-cherry-lemma}, we may assume that $\calP$ contains the
    \[
        k\cdot (n+1)\cdot (k-2) + 19\cdot |E(G)| + 2k
    \]
    cherries defined in Section~\ref{sec:IPP:tw-w1:construction} and represented in Figures~\ref{fig:tw-w1:semi-grid}--\ref{fig:tw-w1:non-extendable-paths}.
    Let $\calQ$ be the set of remaining paths in $\calP$.
    By Corollary~\ref{cor:edge-require-at-least-4-paths}, at least $4\cdot |E(G)|$ paths in $\calQ$ intersect the inner vertices of edge gadgets.
    Thus, the remaining budget to cover semi-grids is of 
    \[
        \binom{k}{2} + k.
    \]
    By Claim~\ref{claim:tw-w1:no-selection-implies-confined-additional-row}, if in $\calQ$ all the edge gadgets associated with a pair of color classes $V_i,V_j$ are intersected by exactly four paths, then an additional path is needed to cover the associated row in $V_i$, and may only be used to cover this row; the same holds for $V_j$.
    This sums up to $|k-1|\cdot k$ additional paths to cover the rows of each semi-grid, which is over budget for $k\geq 4$.
    
    The other option for an edge of $G$ between $V_i$ and $V_j$ is to be selected by $\calQ$.
    In that case, by Claim~\ref{claim:tw-w1:no-more-than-5}, its corresponding gadget is intersected by exactly five paths of $\calQ$, and by Claim~\ref{claim:tw-w1:at-most-one-selection}, it is the only one among the gadgets associated to $V_i,V_j$.
    Moreover, by Claim~\ref{claim:tw-w1:edge-at-most-5-extendable}, selecting such an edge allows to cover the rows of the two associated semi-grids, except for their crest column which require an additional path.
    Note that, unless exactly one additional path per grid is used to cover its crest vertices, this option is also over budget, with a total of at least 
    \[
        \binom{k}{2} + k + 1.
    \]
    
    Thus, the only possibility for $\calQ$ to be within budget is to have exactly one additional path per grid that is used to cover its crest vertices.
    However, to have a single isometric path to cover the crest vertices of a given grid, by Claim~\ref{claim:tw-w1:path-in-grids-horizontal-or-vertical}, it must be that these vertices are part of a same column of the semi-grid.
    In other words, for any color class $V_i$, and any other color class $V_j$, there is a selection of one edge between $V_i$ and $V_j$ such that these edges all share the same crest column of index $2p$ in $V_i$.
    By construction, all these edges share $v^i_p$ for their endpoint.
    Hence, the set of such $v^i_p$'s over all color classes forms a clique in $G$.
    This concludes the proof.
\end{proof}

\begin{lemma}\label{lemma:tw-w1:hard-forward}
    If $(G, k)$ is a \yes-instance of \textsc{Multicolored Clique}, then $(H, k')$ is a \yes-instance of \SoPP.
\end{lemma}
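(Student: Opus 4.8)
The plan is to start from a multicolored clique $K=\{v^1_{p_1},\dots,v^k_{p_k}\}$ of $G$ and exhibit an IP-partition $\calP$ of $H$ of cardinality exactly $k'$. Call an edge $(v^i_p,v^j_q)$ of $G$ \emph{selected} if $p=p_i$ and $q=p_j$, i.e., if both its endpoints lie in $K$; since $K$ is a clique, there is exactly one selected edge between each pair of color classes, for a total of $\binom{k}{2}$ selected edges. First, I would put into $\calP$ all the cherries and twin-cherries, namely the $k\cdot(n+1)\cdot(k-2)$ grid cherries, the $19$ cherry- and twin-cherry paths of each of the $|E(G)|$ edge gadgets, and the $2k$ valve cherries; this is legitimate by Lemmas~\ref{lem:cherry-lemma} and \ref{lem:twin-cherry-lemma}, and contributes $k\cdot(n+1)\cdot(k-2)+19\cdot|E(G)|+2k$ paths.

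Next I would handle the edge gadgets. For each \emph{non-selected} edge gadget I would add the four non-extendable paths, one per cable; by Definition~\ref{def:extendable-non-extendable-paths} and the remark preceding Corollary~\ref{cor:tw-w1:inner-1-path-non-extendable}, these are isometric in $H$ and together with the cherries already chosen they cover exactly the inner vertices of the gadget. For each \emph{selected} edge gadget I would instead add the four extendable paths together with one core path as in Claim~\ref{claim:tw-w1:edge-at-most-5-extendable}: these five isometric paths cover all vertices of the gadget, including its two open ends, and moreover I would prolong the four extendable paths into the two associated semi-grids along the appropriate rows, up to but excluding the crest columns, which is possible by Claim~\ref{claim:tw-w1:extendable-property}. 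This step contributes $4\cdot(|E(G)|-\binom{k}{2})+5\cdot\binom{k}{2}$ paths, and note that every open end shared by the non-selected gadgets attached to a given row of a semi-grid and by that semi-grid is covered by the extendable path of the unique selected gadget attached to that row.

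The only place the clique hypothesis enters is the treatment of crests. For the selected edge between $V_i$ and $V_j$, the crest of $\Gamma_i$ with respect to that edge is column $2p_i$ \emph{independently of $j$}, by construction of the cables. Hence, once the extendable paths of all selected gadgets incident to $\Gamma_i$ have been prolonged as above, the vertices of $\Gamma_i$ left uncovered are precisely those in column $2p_i$, exactly one per row, and these form a vertical path, which is isometric in $H$ by Claim~\ref{claim:tw-w1:vertical-is-isometric}. Adding one such vertical path for each of the $k$ semi-grids uses $k$ further paths. Summing the three contributions yields
\[
    k\cdot(n+1)\cdot(k-2)+19\cdot|E(G)|+2k+4\cdot\bigl(|E(G)|-\tbinom{k}{2}\bigr)+5\cdot\tbinom{k}{2}+k \;=\; k',
\]
and every vertex of $H$ lies in exactly one path of $\calP$ (cherry vertices in the cherry paths; inner vertices of edge gadgets, and open ends of selected gadgets, in the paths of the second step; each grid row minus its crest vertex in the prolongations; each crest column in the vertical paths of the third step), so $(H,k')$ is a \yes-instance.

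I expect the main difficulty to be purely in the bookkeeping: verifying that the paths chosen in the second and third steps are pairwise vertex-disjoint (distinct edge gadgets share only open ends, distinct selected gadgets attach to distinct rows, and the prolongations of a selected gadget into a grid row avoid that grid's crest column), that each vertex of $H$ is covered exactly once, and that the count is exactly $k'$ — all the while relying on Claims~\ref{claim:tw-w1:vertical-is-isometric}, \ref{claim:tw-w1:extendable-property} and \ref{claim:tw-w1:edge-at-most-5-extendable} to guarantee that the paths are isometric in the whole of $H$ and not only inside the gadget or semi-grid where they mostly live.
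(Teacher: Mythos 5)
Your proposal is correct and follows essentially the same route as the paper's proof: place all cherries and twin-cherries first, cover each non-selected edge gadget with its four non-extendable paths, cover each selected gadget with four extendable paths plus a core path and prolong the extendable paths into the semi-grid rows up to the (coinciding) crest columns, and finish with one vertical path per semi-grid, arriving at the same count $k'$. The paper's argument is exactly this, invoking the same Claims for isometry and coverage.
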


\begin{proof}
    Consider a clique $K$ of $(G, k)$.
    We construct an IP-partition of $H$ as follows.
    First, we consider the set of $k\cdot (n+1)\cdot (k-2) + 19\cdot |E(G)| + 2k$ cherries defined in Section~\ref{sec:IPP:tw-w1:construction}.
    Then, we add the family of disjoint isometric paths obtained by selecting each edge in $K$, i.e., this family consists of the extendable paths as defined in Definition~\ref{def:extendable-non-extendable-paths}.
    By Claim~\ref{claim:tw-w1:edge-at-most-5-extendable}, these paths suffice to cover the corresponding gadgets and can be extended to cover their associated rows up to their crest column, which is left uncovered.
    Since all edges having an endpoint in a color class $V_i$ share the same endpoint, the crest columns in $\Gamma_i$ of each selected edge coincide.
    We add one more vertical path to cover the full grid.
    This adds a total of $5\cdot \binom{k}{2}+k$ paths.
    For every other edge, we consider the four non-extendable paths as defined in~\ref{def:extendable-non-extendable-paths}.
    By Claim~\ref{claim:tw-w1:edge-at-most-4-non-extendable}, four paths per such edge gadget indeed suffice.
    This adds a total of $4\cdot (|E(G)|-\binom{k}{2})$ paths, and completes the construction of the family.
    Note that this family precisely contains $k'$ paths.
    By construction, all paths are disjoint, and they cover the full graph.
    The fact that they are isometric is trivial for cherries, follows from Claim~\ref{claim:tw-w1:edge-at-most-5-extendable} and Claim~\ref{claim:tw-w1:edge-at-most-4-non-extendable} for extendable and non-extendable paths, and from Claim~\ref{claim:tw-w1:vertical-is-isometric} for vertical paths in the semi-grids.
    This concludes the proof.
\end{proof}

\begin{lemma}\label{lemma:tw-w1:hard-pathwidth}
    The graph $H$ has pathwidth $O(k^2)$.
\end{lemma}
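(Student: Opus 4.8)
The plan is to build a path decomposition of $H$ of width $O(k^2)$ directly, exploiting that $H$ is assembled from $k$ semi-grids, each of pathwidth $O(k)$, and $O(|E(G)|)$ edge gadgets, each of constant pathwidth, glued along a small interface. Concretely, let $B$ be the union of the set $B_0$ of all \emph{border vertices} $\{a^i_j,\, b^i_j : i\in[k],\, j\in[k]\setminus\{i\}\}$ of the semi-grids, together with the vertices of the $2k$ valve cherries. Since each semi-grid contributes $2(k-1)$ border vertices and each valve cherry has $3$ vertices, $|B| = 2k(k-1)+6k = O(k^2)$. The first point to verify is that $B$ separates $H$ cleanly: by the construction of Section~\ref{sec:IPP:tw-w1:construction}, an edge gadget is attached to the rest of $H$ only through its open ends (which are border vertices) and through the valve cherries, so $B$ contains all of these attachment points. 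Hence the connected components of $H-B$ are exactly the graphs $\Gamma_i - B_0$ for $i\in[k]$, and, for each edge $e$ of $G$, the subgraph of $H$ induced by the inner vertices of the edge gadget encoding $e$.

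Next I would invoke the standard fact that $\mathrm{pw}(G)\le \mathrm{pw}(G-S)+|S|$ for any graph $G$ and $S\subseteq V(G)$: concatenate path decompositions of the components of $G-S$ and add $S$ to every bag. It then remains to bound the pathwidth of each component of $H-B$ by $O(k)$. A semi-grid minus its border vertices is a subgraph of a graph obtained from a $(k-1)\times m$ grid (with $m=O(n+k)$) by subdividing some edges and attaching, to each $2\times 2$ cell, a cherry whose middle is joined to the four corners of the cell plus two pendant leaves; the latter graph admits a column-by-column path decomposition of width $O(k)$ — the bag straddling two consecutive (virtual) columns also holding the $k-2$ grid cherries between them — and pathwidth does not increase under taking subgraphs. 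An edge gadget minus its open ends consists of a constant-size ``core region'' (the core path, the start and core cherries, the crossing cherries, and the first few vertices of the four cables and four core twin-cherry paths) off which, once the open ends are removed, a constant number of long paths dangle (the remainders of the cables and the twin-cherry paths, each carrying at most a bounded number of pendant leaves), so such a graph has pathwidth $O(1)$. Combining, $\mathrm{pw}(H)\le |B| + O(k) = O(k^2)$.

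I do not expect a genuine obstacle here; the only two points requiring care are bookkeeping. First, one must confirm, from the explicit list of adjacencies in Section~\ref{sec:IPP:tw-w1:construction}, that removing $B$ truly disconnects the edge gadgets from the semi-grids and from one another — this hinges precisely on the fact that the valve cherries and the open ends are the sole attachment points, and both are placed in $B$. Second, the grid cherries must be absorbed \emph{locally} inside the semi-grid sweep, whereas the valve cherries must be handled \emph{globally}, i.e.\ as part of $B$ that sits in every bag, because a single valve cherry is adjacent to cable vertices of many distinct edge gadgets and therefore cannot be confined to the decomposition of one gadget. Once these are checked, the width bound follows immediately.
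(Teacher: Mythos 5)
Your proposal is correct and follows essentially the same route as the paper: delete a set of size $O(k^2)$ consisting of the semi-grid borders and the valve cherries, observe that the remaining components (semi-grid interiors and individual edge-gadget interiors) each have pathwidth $O(k)$, and add the deleted set to every bag. The only cosmetic difference is that you place the entire valve cherries into the separator whereas the paper keeps only their middle vertices (leaving the endpoints as isolated vertices in the complement); both yield the same bound.
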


\begin{proof}
    Consider the set of vertices $X\subseteq V(H)$
    which consists of the union of the left and right borders in the semi-grids, together with the middle vertex of each valve cherry.
    By construction, $|X| \in O(k^2)$, and $H - X$
    is a disconnected graph whose connected components are
    (1) subgraphs of the $(2k + 2n + 1) \times (k-1)$ grid with a constant number of edges and vertices in each cell,
    (2) isolated endpoints of valve cherries, and
    (3) collections of eight long paths with a constant number
    of additional edges. 
    Each of these components has pathwidth
    $O(k)$, and hence the pathwidth of $H$ is $O(k^2)$.   
\end{proof}

We conclude with Theorem~\ref{thm:IPP:w1-tw-hard}, that we restate here, as a corollary of Lemmas~\ref{lemma:tw-w1:hard-backward}, \ref{lemma:tw-w1:hard-forward}, and \ref{lemma:tw-w1:hard-pathwidth}, noting that the graph $(H,k')$ can be computed in polynomial time given an instance of \textsc{Multicolored Clique}.

\restateIPPtwhard*

\newcommand{\cycle}[1]{A_{#1}}
\newcommand{\clGroup}[1]{C'_{#1}}
\newcommand{\varV}[2]{v_{#1}^{#2}}
\newcommand{\clGroupName}[2]{c_{#1}^{#2}}
\newcommand{\cldummy}[2]{d_{#1}^{#2}}
\newcommand{\fnb}[1]{\beta\mathopen{}\left(#1\mathclose{}\right)} 
\newcommand{\distanceMod}[4]{D\mathopen{}\left(#1,#2,#3,#4\mathclose{}\right)}

\section{Lower Bound w.r.t.~Pathwidth and Diameter}
\label{sec:IPP:eth-diam-tw}

In this section, we prove that \SoPP{}
does not admit an algorithm running in time $O(\diam^{o(\tw^2/\log^3(\tw))})$,
unless the \textsf{Randomized ETH} fails.
Towards that, we present a reduction from \textsc{Sparse 3-SAT} to 
\SoPP. 
\textsc{Sparse 3-SAT} has been introduced by Gourvès et al.~\cite{gourves2024filling} as a sparse variation of 3-SAT.
{We consider the following slight variation of the problem and argue that this modification can indeed be considered without loss of generality.} 

\medskip

\defproblem{\textsc{Sparse 3-SAT}}{An integer $n$ which is a 
perfect square, and a $3$-SAT formula with at most 
$n$ variables and at most $n$ clauses such that each variable appears 
in at most $3$ clauses.
Moreover, a partition $\{V_1,\ldots,V_{\sqrt{n}}\}$ of the set of 
variables $V$ and a 
partition $\{C_1,\ldots,C_{\sqrt{n}}\}$ of the set of clauses $C$ 
such that each part is of size at most $\sqrt{n}$ and
{for every}
$i,j\in \left[\sqrt{n}\right]$ the cardinality of the set 
$\{(x,c) : x\in V_i,\, c\in C_j,\, x\in c\}$ is at most one.}{Does there exist a satisfying assignment of the formula?}

\medskip

In the original definition of the problem~\cite[Definition 2]{gourves2024filling} it is mentioned that 
``the number of variables 
of $V_i$ which appear in at least one clause of $C_j$ is at most one.''
This does not forbid a variable $x\in V_i$ to appear in multiple clauses in $C_j$.
However, we do \emph{not} want to allow this and need this
stronger restriction, as stated in the problem definition above.
As evident from \cite[Lemma~11]{gourves2024filling},
the conditional lower bound mentioned below also holds
for the version of the problem that we state.

\begin{proposition}[{\cite{gourves2024filling}}]
\label{prop:sparse-hard}
Unless the \textsf{Randomized ETH} fails,
\textsc{Sparse 3-SAT} does not admit an algorithm running 
in time $2^{o(n)}$.
\end{proposition}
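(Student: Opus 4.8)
The plan is to obtain Proposition~\ref{prop:sparse-hard} essentially as a corollary of the result of Gourv\`es et al.~\cite{gourves2024filling} for their (weaker) formulation of \textsc{Sparse 3-SAT}, together with a small reduction — the content of \cite[Lemma~11]{gourves2024filling} — that upgrades an instance satisfying the weaker partition condition (``at most one variable of $V_i$ occurs in some clause of $C_j$'') into one satisfying the stronger condition we state (``at most one incidence $(x,c)$ with $x\in V_i$ and $c\in C_j$''). Below I outline the shape of the underlying argument.

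\textbf{From \textsf{ETH} to a sparse, bounded-degree formula.} Under the \textsf{Randomized ETH} there is no randomized $2^{o(n)}$-time algorithm for \textsc{3-SAT} on $n$ variables. Applying the Sparsification Lemma of Impagliazzo, Paturi and Zane, one may further assume the number of clauses is $O(n)$, so that no randomized algorithm runs in time $2^{o(n+m)}$ even when $m=O(n)$. A standard degree-reduction step then makes every variable occur in at most three clauses: a variable appearing in $d\geq 4$ clauses is replaced by $d$ fresh copies linked cyclically by implication clauses $\neg x_i\vee x_{i+1}$ (of size at most three). This preserves satisfiability, leaves every variable in at most three clauses, and produces a formula with $N=\Theta(n)$ variables and $M=\Theta(n)$ clauses; hence a $2^{o(N)}$-time algorithm for the resulting problem would refute \textsf{Randomized ETH}. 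Padding with dummy variables and trivially satisfiable clauses, we may additionally take $N=M$ with $N$ a perfect square.

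\textbf{The structured partition.} It remains to partition the $N$ variables into $\sqrt N$ classes of size at most $\sqrt N$, and likewise the clauses, so that every (variable-class, clause-class) pair carries at most one incidence. The key point is that the incidence conflicts are local: the graph on clauses in which two clauses are adjacent when they share a variable has maximum degree at most $6$, and the analogous graph on variables also has maximum degree at most $6$; properly colouring both with $O(1)$ colours decomposes the incidence bipartite graph into a constant number of blocks in each of which the incidences form an (induced) matching. Each such matching is then spread over a $\Theta(\sqrt N)\times\Theta(\sqrt N)$ grid of sub-classes by a careful allocation, with a constant number of residual conflicting incidences per block absorbed into dedicated singleton classes; the total number of classes on each side remains $\Theta(\sqrt N)$, which is precisely what keeps the parameter at $\Theta(n)$ and hence preserves the $2^{o(n)}$ bound. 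This is essentially the construction of \cite{gourves2024filling}. Finally, to pass from their condition to ours, whenever a variable $x\in V_i$ lies in two clauses of the same class $C_j$ one splits $x$ into copies occurring in distinct clause-classes and re-inserts the constantly many new copies and equality clauses into the partition; the blow-up is only by a constant factor, so the lower bound is unaffected. This last step is carried out in \cite[Lemma~11]{gourves2024filling}.

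\textbf{Main obstacle.} The delicate part is the structured partition: one must simultaneously use only $\Theta(\sqrt N)=\Theta(\sqrt n)$ classes on each side, keep each class of size at most $\sqrt n$, and guarantee at most one incidence per cell \emph{even within a block whose matching may have $\Theta(N)$ edges}, where a purely random assignment would already create $\Theta(N)$ collisions. Getting the allocation right inside a block — exploiting the induced-matching structure so that the unavoidable collisions are few and can be repaired locally without exceeding $\Theta(\sqrt n)$ classes — is the heart of the argument, and the reason the statement is quoted from \cite{gourves2024filling} rather than reproved here.
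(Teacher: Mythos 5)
Both the paper and your proposal obtain this proposition by direct citation of~\cite{gourves2024filling}; the paper offers no independent proof, only the observation that the construction underlying~\cite[Lemma~11]{gourves2024filling} already produces instances satisfying the stronger per-cell incidence bound stated here (hence the phrasing ``as evident from Lemma~11''). Your reading of Lemma~11 as a separate weak-to-strong variable-splitting reduction differs slightly from the paper's, though both readings support the stated conclusion, and your reconstruction of the underlying ETH-based argument (sparsification, degree reduction, structured partition) is a reasonable account of what is being deferred to~\cite{gourves2024filling}.
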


\begin{figure}[ht]
    \centering
    \includegraphics[page=1,width=0.75\linewidth]{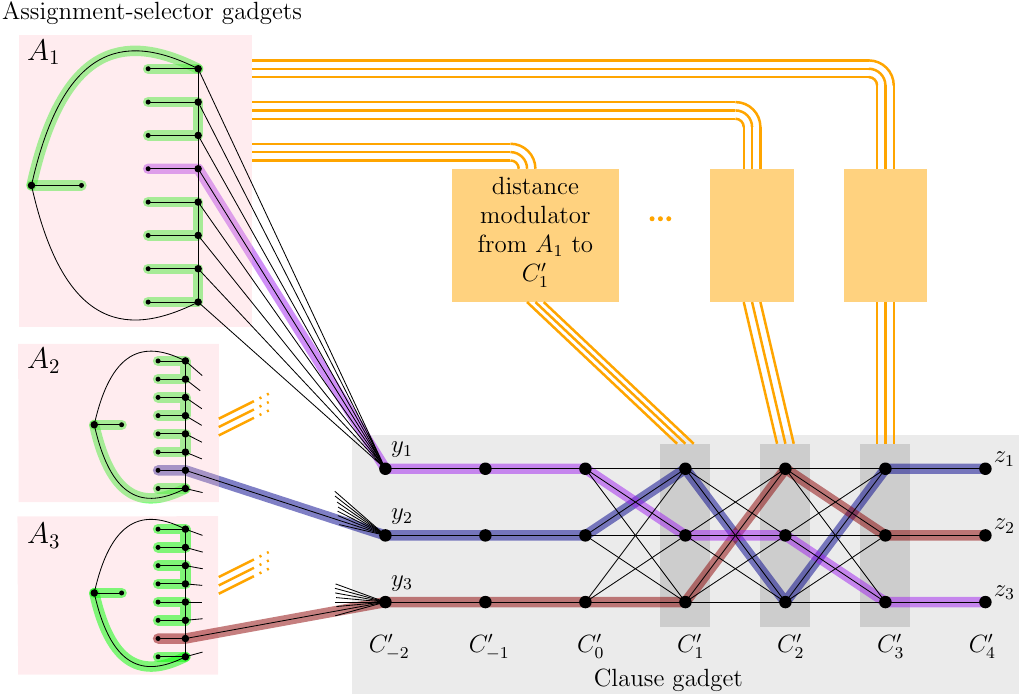}
    \caption{A schematic diagram of the reduction where $\sqrt{n}=3$. The graph consists in three parts: $\sqrt{n}$ assignment-selector gadgets (pink), a clause gadget (gray) and $\sqrt{n}\times\sqrt{n}$ distance modulators (orange). In any optimal IP-partition, only one path (purple, red, blue) -- called assignment-selection paths -- can leave each assignment-selector gadget and the location of their endpoint there encodes an assignment of the corresponding variables. These assignment-selection paths then traverse the clause gadget until they reach their other endpoints on $C'_4$. On the way, they (hopefully) cover vertices on sets $C'_1,\dots, C'_{\sqrt{n}}$ that represent the sets of clauses. To prevent an assignment-selection path to cover a clause vertex that is not satisfied by the corresponding assignment, we use distance modulators (orange) that shorten the distance by one, thus preventing an isometric path to contain both vertices.}
    \label{fig:schematic}
\end{figure}

\subsection{Overview of the Reduction}

The reduction takes as input an instance 
$(\phi, \{V_1, \dots, V_{\sqrt{n}}\}, \{C_1, \dots, C_{\sqrt{n}}\})$
of \textsc{Sparse 3-SAT},
runs in time $2^{O(\sqrt{n})}$, and returns an instance
$(G_{\phi}, k)$ of \SoPP where $k=2^{O(\sqrt{n})}$. The treewidth and diameter of $G_{\phi}$ are $O(\sqrt{n}\log n)$ and $O(\sqrt{n})$, respectively. Altogether, these bounds imply that \SoPP does not admit an $O(\diam^{o(\tw^2/\log^3(\tw))})$-time algorithm, unless the \textsf{Randomized ETH} fails.

The graph of the reduction consists of three parts: the \emph{assignment gadget} that encodes the possible assignments of the variables, a \emph{clause gadget} that contains a vertex for each clause,
and \emph{distance modulator} gadgets that encode the formula. 

For each variable group $V_i$, the \emph{assignment-selector gadget} $A_i$ consists of vertices corresponding to the $2^{\sqrt{n}}$
possible (partial) assignments of the 
variables in this group.  See 
\Cref{sec:IPP:eth-diam-tw:construction:assignments} for details. It is designed in a way that only one path can leave the gadget. Altogether, these $\sqrt{n}$ paths, called \emph{assignment-selection paths}, encode an assignment of the variables.

The clause gadget consists of a chain of $\sqrt{n}$ 
sets $\clGroup{1},\ldots,\clGroup{\sqrt{n}}$ each representing a clause set, together with four additional sets $\clGroup{-2}, \clGroup{-1}, \clGroup{0}$ and $\clGroup{\sqrt{n}+1}$ that will force some properties on isometric paths. 
Each of these sets consists of $\sqrt{n}$ vertices, and each clause of the formula is associated to one of these vertices. 
The assignment gadget is connected to the clause gadget in a way that the assignment-selection paths need to traverse all the sets $\clGroup{1},\ldots,\clGroup{\sqrt{n}}$ --- ideally covering on the way all clause vertices. To force an assignment-selector path to only cover clauses that are satisfied by the corresponding assignment, we rely on the fact that these paths must be isometric. To achieve this, we add \emph{distance modulator} gadgets that shorten by one the distance between assignments vertices and clauses that are not satisfied by this assignment. 

To ensure that distance modulators only act as a metric-changer and not as an alternative way for the assignement-selector paths, we use a number of cherries (i.e., induced paths of length~2 whose endpoints have degree one). 
We note that in our reduction, the number of leaves is exactly twice the number of allowed paths, forcing each solution path to join two leaves, which greatly facilitates the analysis of the reduction. 
The complete construction is illustrated in \Cref{fig:schematic}.

\myparagraph{Organization.}
The rest of the section is divided in three parts. 
In Section \ref{sec:IPP:eth-diam-tw:construction} we give the full description of the reduction, that we break into a subsection for the description of each gadget and the analysis of the solution size. 
Then, we give some useful properties and observations in Section \ref{sec:IPP:eth-diam-tw:properties}. 
We conclude with the proof of Theorem~\ref{thm:IPP:eth-diam-tw-hard} in Section \ref{sec:IPP:eth-diam-tw:proof}.

\subsection{Reduction}\label{sec:IPP:eth-diam-tw:construction}

We now describe the construction of our reduction and its different gadgets.

\subsubsection{Distance Modulators}
\label{sec:IPP:eth-diam-tw:construction:dist-modul}

In our construction, we use ``distance modulators'' to encode the formula. 
The main idea behind this distance modulator has been previously used 
in~\cite{DBLP:conf/icalp/FoucaudGK0IST24} under the name of 
``set representation gadget''. 
We first present the gadget in its general form and then specify how to adapt it for the reduction.

Consider two sets $A$ and $B$, an onto total function $\lambda\colon A \to B$, and an integer $q\ge 3$, we construct a \emph{distance modulator} gadget, denoted $D(A,B,\lambda,q)$ as follows.

Our objective is to construct a graph of small treewidth whose vertex set contains
vertices of $A$ and $B$, and for any $a \in A$ and $b \in B$, we have $a$ and $b$ at distance $q$ if $\lambda(a) = b$, and at distance $q - 1$ otherwise. 
Note that a naive way is to add paths of appropriate lengths between the vertices in $A$ and $B$.
However, the resulting graph would have treewidth $\Omega(|B|)$, assuming the size of $B$ is smaller than that of $A$. 

\begin{figure}
    \centering
    \includegraphics[page=2,width=0.5\linewidth]{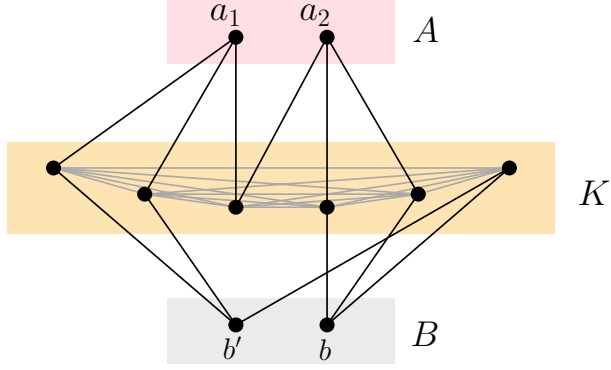}
     \caption{An example of a distance modulator. 
    Here, $\lambda(a_1)=b$, $\lambda(a_2)=b'$, and $q=3$.
    {Note that $a_1$ is at distance $3$ from $b$ whereas 
    it is at distance $2$ from $b'$. }}
    \label{fig:distance-modulator-main-description}
\end{figure}

Let $p$ be the smallest integer such that $|B| \le \binom{2p}{p}$ and $\calS_p$ be the collection of the $\binom{2p}{p}$
subsets of $[2p]$ that contain exactly $p$ integers. (We show later that having the property that $p=O(\log |B|)$ is enough for our purposes.) 
Then, we define $\setrep: B \to \calS_p$ as a one-to-one function by arbitrarily assigning a set in $\calS_p$ to a vertex in $B$. 
To construct $D(A,B,\lambda,q)$, we proceed as follows:

\begin{itemize}
    \item We start from the two (disjoint) vertex sets $A$ and $B$.

    \item We add a clique $K = \{u_1, u_2, \dots,  u_{2p}\}$ on $2p$ vertices, referred to as the \emph{central clique}. 

    \item For every $b \in B$ and for every $p' \in \setrep(b)$, we add the edge $(b, u_{p'})$.
    
    \item For each $a \in A$ 
    and each $p' \in [2p] \setminus \setrep(\lambda(a))$, we add a path of length $q -2$ from $a$ to $v_{p'} \in K$, that we call a \emph{connector path}. 
\end{itemize}

The above construction is illustrated in \Cref{fig:distance-modulator-main-description}.

\begin{lemma}
For each $a\in A$ and each $b\in B$, the distance between $a$ and $b$ in $D(A,B,\lambda,q)$ is 
 $q$ if $\lambda(a)=b$ and $q-1$ otherwise. Moreover, $D(A,B,\lambda,q)$ has treewidth $O(\log |B|)$.  
\end{lemma}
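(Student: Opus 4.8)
The plan is to prove the two assertions separately, the distance claim being the core of the argument.

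For the distance claim, the first step is to record the rigid local structure of $D:=D(A,B,\lambda,q)$. Every internal vertex of a connector path has degree exactly $2$; connector paths are pairwise internally disjoint and share no vertex other than possibly their endpoints in $K$; every $a\in A$ meets the rest of the graph only through the connector paths leaving it (its neighbours are their first internal vertices, or, when $q=3$, directly the clique vertices $u_{p'}$ with $p'\in[2p]\setminus\setrep(\lambda(a))$); and every $b\in B$ has neighbourhood exactly $\{u_{p'}:p'\in\setrep(b)\}\subseteq K$. Consequently, any $a$--$b$ path, being simple, first reaches $K$ only after traversing one whole connector path of $a$ (length $q-2$) at some vertex $u_{p_1}$ with $p_1\in[2p]\setminus\setrep(\lambda(a))$, and must end with the edge $u_{p_2}b$ for some $p_2\in\setrep(b)$; its portion strictly between $u_{p_1}$ and $u_{p_2}$ avoids $a$ and $b$, so it can be replaced (in $D-\{a,b\}$) by a single clique edge if $p_1\neq p_2$, or by nothing if $p_1=p_2$. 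Hence $d(a,b)=q-2+0+1=q-1$ whenever one may choose $p_1=p_2$, i.e.\ whenever $\setrep(b)\setminus\setrep(\lambda(a))\neq\emptyset$, and $d(a,b)=q-2+1+1=q$ otherwise; in each case the value is realised by the explicit path just described, so it is attained.

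It then remains to decide when $\setrep(b)\setminus\setrep(\lambda(a))$ is nonempty, and this is exactly where the choice of $\setrep$ as an injection into $\calS_p$ (the $p$-element subsets of $[2p]$, an antichain) is used. If $\lambda(a)=b$ then $\setrep(b)=\setrep(\lambda(a))$, so the difference is empty and $d(a,b)=q$. If $\lambda(a)\neq b$ then $\setrep(\lambda(a))$ and $\setrep(b)$ are two distinct $p$-element subsets of $[2p]$, hence incomparable, so $\setrep(b)\setminus\setrep(\lambda(a))\neq\emptyset$ and $d(a,b)=q-1$. This settles the distance claim. For the treewidth bound, I would first note $p=O(\log|B|)$: by minimality of $p$, $|B|>\binom{2(p-1)}{p-1}\ge 2^{p-1}$, so $p\le 1+\log_2|B|$. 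Then I would exhibit a tree-decomposition of width $O(p)$: take a root bag equal to $K$; attach to it a bag $K\cup\{b\}$ for every $b\in B$ and a bag $K\cup\{a\}$ for every $a\in A$; and below each $K\cup\{a\}$, for every connector path $a,w_1,\dots,w_{q-3},u_{p'}$ leaving $a$, attach the chain of bags $\{a,w_1,u_{p'}\},\{w_1,w_2,u_{p'}\},\dots,\{w_{q-3},u_{p'}\}$ (an empty chain when $q=3$, since then the edge $au_{p'}$ already lies in $K\cup\{a\}$). The two axioms are then routine to check: every edge lies in some bag (clique edges and edges incident to $A$ or $B$ in the big bags, connector-path edges in the chains), and the bags containing any fixed vertex form a subtree --- the only case worth a word being $u_{p'}\in K$, whose bags are the root, all its children, and the single $p'$-chain hanging off each $A$-bag, which together form a subtree rooted at $K$. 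Since the big bags have size at most $2p+1$ and the chain bags at most $3$, the width is $O(p)=O(\log|B|)$, with no dependence on $q$.

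The step I expect to be the main obstacle is making the ``no shortcut'' half of the distance argument watertight: ruling out $a$--$b$ paths of length less than $q-1$, and of length $q-1$ in the case $\lambda(a)=b$. The clean route is the decomposition above into exactly one connector segment of length $q-2$, one segment that avoids $\{a,b\}$ and can be contracted to at most one edge because $K$ is complete, and one final edge into $b$; but one must be careful with the degenerate regime $q=3$ (connector paths are single edges, so $a$ is adjacent to clique vertices directly) and with bookkeeping which clique indices are reachable from $a$ versus adjacent to $b$, so that the disjointness $\bigl([2p]\setminus\setrep(\lambda(a))\bigr)\cap\setrep(b)=\emptyset$ in the case $\lambda(a)=b$ is used correctly.
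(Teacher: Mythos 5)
Your proof is correct, and the distance half is essentially the paper's argument, filled in with more care: the paper asserts without detail that $q-1\le d(a,b)\le q$ and that a length-$(q-1)$ $a$--$b$ path must be a single connector path followed by one $K$--$B$ edge, whereas you derive both from the local degree structure (internal connector vertices have degree $2$, $N(b)\subseteq K$, and $K$ being a clique lets the middle portion be contracted to at most one edge). Both proofs rest on the same key fact that $\setrep$ maps into the antichain $\calS_p$, so $\lambda(a)\neq b$ if and only if $\setrep(b)\setminus\setrep(\lambda(a))\neq\emptyset$. The treewidth half, however, takes a genuinely different route. The paper deletes $K$, observes that what remains is a disjoint union of (subdivided) stars, hence a forest of treewidth $1$, and invokes $\tw(D)\le \tw(D-K)+|K|=O(p)$. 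You instead build an explicit tree-decomposition: root bag $K$, children $K\cup\{a\}$ and $K\cup\{b\}$, and a chain of constant-size bags (each keeping the appropriate clique vertex $u_{p'}$) along every connector path. Both yield the same bound; the paper's argument is shorter and a bit more robust to cosmetic changes of the gadget, while your explicit decomposition makes visible that the width depends only on $p$ and not on $q$, which the paper leaves implicit. Your derivation $p\le 1+\log_2|B|$ from $|B|>\binom{2(p-1)}{p-1}\ge 2^{p-1}$ is also a cleaner route to $p=O(\log|B|)$ than the paper's appeal to the asymptotics of the central binomial coefficient.
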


\begin{proof}
Let $a\in A$ and $b\in B$. First notice that the distance between $a$ and $b$ is either $q$ or $q-1$. 
First, suppose that $\lambda(a)\neq b$. Since $\setrep(\lambda(a))\neq \setrep(b)$ are distinct subsets of $[2p]$ of size $p$, there exists an element $p'$ contained in both $\setrep(b)$ and $[2p]\setminus \setrep(\lambda(a))$. Thus, there is a path of length $(q-2)+1$ from $a$ to $b$ that passes through $p'$. 

Suppose now that there is a path of length $q-1$
between $a$ and $b$. This path consists in a connector path from $a$ for some vertex $v_{p'}\in K$, with $p'\in [2p]$, and the edge $(v_{p'},b)$. This implies that $p'\in \setrep(b)$ and $p'\notin \setrep(\lambda(a))$, and further that $\lambda(a)\neq b$. 

We now study the treewidth of the gadget. Let $q' = \lceil 2 \cdot \log_2(|B|) \rceil$.
This implies that 
\[
    |B| \le 2^{q'} = \frac{4^{q'}}{2^{q'}} \le \frac{4^{q'}}{\sqrt{\pi \cdot {q'}}} \sim \binom{2{q'}}{{q'}}.
\]
The last step follows from the asymptotic estimate of the 
central binomial coefficient which states $\smash{\binom{2p}{p}\sim \frac{4^p}{\sqrt{\pi \cdot p}}}$~\cite{Sperner}. 
Hence, fixing a value of $p$ such that $p={O(\log (|B|)}$ suffices 
for our purpose. 

Now, note that the graph obtained from $\distanceMod{A}{B}{\lambda}{q}$ by deleting all vertices of $K$ is a collection of disjoint stars or subdivisions of stars.
Hence, the treewidth of $\distanceMod{A}{B}{\lambda}{q}$ is $O(|K|) = O(\log(|B|))$.
This concludes the proof. 
\end{proof}

In the reduction, $A$ will correspond to the set of all possible assignments of the variables of some part $V_i$ of the variables, and will hence have size $2^{O(\sqrt{n})}$.
The set $B$ will correspond to the clauses of some part $C_j$ of the clauses, and will hence have size $O(\sqrt{n})$.
It follows from our assumptions on the formula that every assignment of $V_i$ satisfies at most one clause in $C_j$.
We critically use this property of \textsc{Sparse 3-SAT}.
We will consider the following definition of $\lambda$: 
For $a \in A$, we set $\lambda(a) = b$ if and only if the assignment
corresponding to $a$ satisfies the clause corresponding to $b$.
Note however that such a function may not be defined for the full set $A$, since there may be assignments in $A$ that do not satisfy any clause. 
Thus, we will have a special element $b_0$ in $B$ such that
$\lambda(a) = b_0$ for such assignments $a$ that do not satisfy any clause.

\subsubsection{Encoding Assignments}
\label{sec:IPP:eth-diam-tw:construction:assignments}

For each variable group $V_i$, we construct an \emph{assignment-selector gadget} as follows. 
Let $\cycle{i}$ be the graph obtained by taking a cycle on $2^{\left|V_i\right|}+1$ vertices and attaching a (distinct) leaf to each vertex of the cycle.
Let $T_i$ be the set of degree~3 vertices of $\cycle{i}$.
See \Cref{fig:schematic} for an illustration. 
We arbitrarily choose $p_i\in T_i$ (on the figure it is the leftmost vertex) and define a bijection $\alpha\colon T_i \setminus \{p_i\} \to 2^{V_i}$.
This function will represent an assignment of the variables in $V_i$.
We extend this function by defining $\alpha(p_i)=\bot$.
Let $\mathcal{A}$ denote the disjoint union of the assignment-selector gadgets $A_1,\ldots,A_{\sqrt{n}}$.
We call $\mathcal{A}$ the \emph{assignment-selector} gadget.

In the following, we will assume the variables of $V_i$ to be labeled $\varV{i}{1},\ldots \varV{i}{|V_i|}$.

\subsubsection{Encoding Clauses}
\label{sec:IPP:eth-diam-tw:construction:clauses}

Recall that $C_j$ is a set of clauses, for any $j\in [\sqrt{n}]$.
For each such group of clauses, we consider an arbitrary labeling $\clGroupName{j}{1},\ldots,\clGroupName{j}{|C_j|}$ of its clauses; we shall refer to $\clGroupName{j}{\ell}$ as the $\ell^{th}$ clause in $C_j$.

For each $j\in [\sqrt{n}]$, we create a set $\clGroup{j}$ of $\sqrt{n}$ vertices as follows. 
For each $\ell\in [|C_j|]$, there is a \emph{clause vertex}
$\clGroupName{j}{\ell}\in \clGroup{j}$ representing the $\ell^{th}$ clause of $C_j$. 
For each $\ell\in [(\sqrt{n} - |C_j|)]$, $\clGroup{j}$ contains a dummy vertex $\cldummy{j}{\ell}$. 
Also we introduce {four} sets $\clGroup{-2}, \clGroup{-1}, \clGroup{0}$ 
and $\clGroup{\sqrt{n}+1}$, each with $\sqrt{n}$ vertices.
Let $\clGroup{-2}=\{y_1,\ldots,y_{\sqrt{n}}\}$ and 
$\clGroup{\sqrt{n}+1} = \{z_1,\ldots,z_{\sqrt{n}}\}$. 
Now, for each $j\in [1,\sqrt{n}]$, we add all possible edges between $\clGroup{j}$ and $\clGroup{j-1}$. 
Then, we add a matching between $\clGroup{-2}$ and $\clGroup{-1}$, $\clGroup{-1}$ and $\clGroup{0}$, and $\clGroup{\sqrt{n}}$ and $\clGroup{\sqrt{n}+1}$. There are no other edges in $\mathcal{C}$ other than 
the ones mentioned above. 
Let $\mathcal{C}$ denote the resulting graph, which is called the \emph{clause gadget}. 
See \Cref{fig:schematic} for an illustration. 

Note that for each $j\in [0,\sqrt{n}]$, the vertices of 
$\clGroup{j}$ and $\clGroup{j-1}$ induce a complete bipartite graph. 
Throughout the construction process, the vertices of 
$\clGroup{\sqrt{n}+1}$ will continue to have degree~1. 
We note that the treewidth and diameter of $\mathcal{C}$ are 
at most $2\sqrt{n}$ and $\sqrt{n}+2$, respectively.

\subsubsection{Connecting Gadgets}
\label{sec:IPP:eth-diam-tw:construction:connect}

Now we introduce more edges and vertices to connect the assignment-selector and clause gadgets constructed above. 

\begin{itemize}
    \item For each $i\in [\sqrt{n}]$, we add an edge between $y_i\in \clGroup{-2}$ and all the vertices in $T_i$, i.e., the vertices of degree~3 in $A_i$.
 
    \item Let $A$ be the set of degree~3 vertices in $\mathcal{A}$.
    Recall that $\phi$ is an instance of \textsc{Sparse 3-Sat} and hence any assignment of variables in $V_i$ satisfies at most one clause in $C_j$.
    If $|C_j| = \sqrt{n}$, then for an  
    instance of \textsc{Sparse 3-Sat},
    every such assignment should satisfy exactly one clause in $C_j$.
    If $|C_j| < \sqrt{n}$ and there is an assignment of variables in $V_i$ that does not satisfy any clause in $C_j$, we choose a dummy clause in $C_j$ and call it its corresponding clause.
    For each $j\in [\sqrt{n}]$, we define $\lambda_j\colon A\to C_j$ as follows. 
    Given a vertex $a\in T_i$ for some $i\in [\sqrt{n}]$, we define $\lambda_j(a)=c$ if $c\in C_j$ is satisfied by assigning the variables in $\alpha(a)$ to \true\ and the variables in $V_i\setminus\alpha(a)$ to \false, if such a clause exists, otherwise we map $a$ to a dummy clause.
    Note that $\lambda_j$ is a well-defined function, and it is total thanks to the dummy clauses vertices.
    Let {$B_j$} 
    denote the clause vertices of $\clGroup{j}$.
    Introduce a distance modulator $D_j=\distanceMod{A}{{B_j}}{\lambda_j}{{j+3}}$. 
    Let $\mathcal{P}$ be the set of all connector paths introduced over all $j\in [\sqrt{n}]$.

    \item For each $i\in [\sqrt{n}]$, technically, $D_i$ contains a copy, say $A'$, of the vertex set $A$. 
    In this step, we identify the vertices of $A'$ with $A$. 
    Similarly, $D_i$ contains a copy, say $B'_i$, of the clause vertices in $B_i$. 
    We also identify $B'_i$ with $B_i$. 
    Let $K_i$ denote the central clique in $D_i$ and 
    \[
        \mathcal{K}=\displaystyle\bigcup\limits_{i=1}^{\sqrt{n}} K_i.
    \]
    
    \item For each connector path $P$ with length at least~2 that connects two vertices $w,w'$ with $ w\in A, w'\in \mathcal{K}$, we do the following:
    \begin{itemize}
        \item we introduce a new leaf $t$ 
      adjacent to the neighbor of $w$ in $P$; and 
        \item we introduce a new leaf $t'$ 
      adjacent to the neighbor of $w'$ in $P$. 
     \end{itemize}
    
    \item For each $i\in [\sqrt{n}]$ and each vertex $w\in K_i$, create two new 
    vertices $w_1,w_2$ and make them adjacent to $w$. 
    Throughout the construction process, $w_1,w_2$ will remain adjacent 
    only to $w$, and thus will have degree~1.
    To avoid introducing more notations, from this point onwards,
    we include the pendant vertices attached to the vertices of 
    $K_i$, in the set $K_i$. See \Cref{fig:distance-modulator} for an illustration.
\end{itemize}

\medskip \noindent This concludes the description of the reduction.

\subsubsection{Solution size and intuition behind the reduction}


The reduction sets 
\begin{equation}\label{eq:optimal-size}
    k  = \frac{1}{2}\left[\Bigl(\displaystyle\sum\limits_{i=1}^{\sqrt{n}}2^{|V_i|}\Bigr) 
+ \Bigl(\displaystyle\sum\limits_{i=1}^{\sqrt{n}} 2 \cdot |K_i|\Bigr) 
+ \Bigl(2 \cdot |\mathcal{P}|\Bigr) 
+ \Bigl(\sqrt{n}\Bigr)\right]
\end{equation}
and returns
$(G_{\phi}, k)$ as a reduced instance. 

We informally justify the value of $k$ along with the core idea of the reduction.
The graph is constructed such that any IP-partition has to include a number of paths equal to half of the second and third terms above, to partition vertices in the central cliques and in connector paths, respectively. 
This is done using leaves and cherries.
Then, recall that the assignment encoding gadget $\cycle{i}$ 
is a cycle on $2^{\left|V_i\right|}+1$ vertices with 
a unique leaf attached to each vertex of the cycle.
It is easily seen that any IP-partition uses half of the first term to cover
all but one vertex on each cycle, where these paths have four vertices each, and consist of two consecutive vertices on the cycle plus their respective leaves.
After this, there is a remaining budget of $\sqrt{n}$.
Note that there is one vertex plus its leaf neighbor in each $\cycle{i}$, as well as all the vertices in $C'_{\sqrt{n + 1}}$, that still need to be partitioned.
We will argue that the partition needs $\sqrt{n}$ isometric paths 
such that each of them starts in $\cycle{i}$, ends in a vertex in
$C'_{\sqrt{n}+1}$, and covers one vertex in each of $C'_{-2}, C'_{-1}, C'_{0}, C'_{1}, \dots,
C'_{\sqrt{n}}$ along the way.
The distance modular gadget is constructed to ensure that an isometric path starting from a vertex $a\in V(A_i)$ can contain a vertex $c_j\in C'_i$ if and only if the assignment corresponding to $a$ satisfies $c_j$. 

\begin{figure}[ht]
    \centering
    \includegraphics[page=3,width=0.5\linewidth]{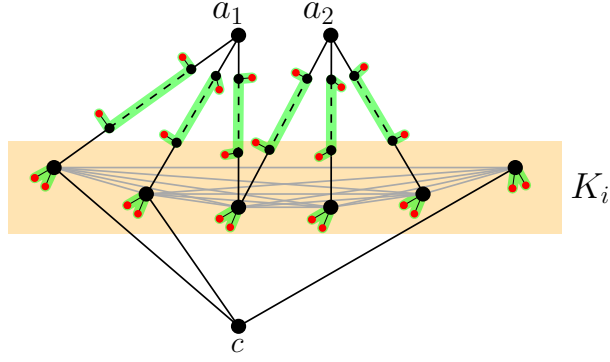}
     \caption{$K_i$  is the central clique, and $c$ is a clause in $C_i$. Let $c=(x_1\lor \overline{x_2} \lor x_3)$. Let $a_2$ and $a_1$ be two vertices of the assignment-selector gadget and $\alpha(a_2)=\{x_2\}$ and $\alpha(a_1)=\{x_1\}$. Since setting $x_1=\true$ satisfies $c$, $a_1$ is not connected to the neighborhood of $c$ in $K_i$. The dashed lines indicate connector paths. The paths from $a_i$'s to $K_i$ are called connector paths. Their length depends on which of the sets $C'_j$ it is attached to.  A number of cherries (red vertices) is added, which forces only one relevant way to partition the gadget (green paths). }
    \label{fig:distance-modulator}
\end{figure}

\subsection{Properties}
\label{sec:IPP:eth-diam-tw:properties}

Note that the vertex set of $G_{\phi}$ consists of the assignment 
gadget $\mathcal{A}$, the clause gadget 
$\mathcal{C}$, the vertices in $\mathcal{K}$, the leaves attached to the vertices in $\mathcal{K}$, the connector paths in $\mathcal{P}$, and the leaves attached to these connector paths. 
The total number of leaves in $G_{\phi}$ is $2k$, where $k$ is the value set in \Cref{eq:optimal-size}.
We prove the following two claims, that facilitate the proof of 
correctness of the reduction.

\begin{lemma}\label{lem:connector-path-cover}
 Let $Q$ be a connector path between two vertices $w,w'$ of $G_{\phi}$. Let $t,t'$ be the two 
 leaves of $G_{\phi}$ such that $t$ is adjacent to the neighbor of $w$ in $Q$, and $t'$ is adjacent 
 to the neighbor of $w'$ in $Q$. Then, the path $Q'$ induced by $\left(V(Q)\setminus \{w,w'\}\right)\cup \{t,t'\}$ is an isometric path. 
\end{lemma}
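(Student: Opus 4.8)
The plan is to reduce the statement to a single distance computation, which is then settled by a small separator argument. Write $Q=(w=p_0,p_1,\dots,p_{m-1},p_m=w')$, where $m\ge 2$ is the length of $Q$; then, by definition of $t,t'$, the vertex $t$ is adjacent to $p_1$, the vertex $t'$ is adjacent to $p_{m-1}$, and $Q'$ is the induced path $(t,p_1,p_2,\dots,p_{m-1},t')$, whose length equals that of $Q$, namely $m$. The first observation is that $t$ and $t'$ are leaves of $G_{\phi}$: the construction never adds to $t$ an edge other than $tp_1$, nor to $t'$ an edge other than $t'p_{m-1}$. Consequently every $t$--$t'$ walk in $G_{\phi}$ passes through $p_1$ and through $p_{m-1}$, so $d_{G_{\phi}}(t,t')=2+d_{G_{\phi}}(p_1,p_{m-1})$. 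It therefore suffices to prove $d_{G_{\phi}}(p_1,p_{m-1})=m-2$, since then $d_{G_{\phi}}(t,t')=m$, which is exactly the length of $Q'$.

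For this equality, the bound $d_{G_{\phi}}(p_1,p_{m-1})\le m-2$ is witnessed by the subpath of $Q$ from $p_1$ to $p_{m-1}$ (and if $m=2$ then $p_1=p_{m-1}$ and there is nothing left to prove, so assume $m\ge 3$). For the reverse inequality, set $I:=\{p_1,\dots,p_{m-1}\}$ and note that $G_{\phi}[I\cup\{t,t'\}]$ is exactly the path $Q'$, and that the only vertices of $G_{\phi}$ lying outside $I\cup\{t,t'\}$ but having a neighbour inside it are $w$ (adjacent to $p_1$) and $w'$ (adjacent to $p_{m-1}$): the internal connector-path vertices $p_2,\dots,p_{m-2}$ have degree~$2$, and no further edges are added to $t$ or $t'$. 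Hence $\{w,w'\}$ separates $I\cup\{t,t'\}$ from the rest of $G_{\phi}$. Let $R$ be a shortest $p_1$--$p_{m-1}$ path. If $R$ stays inside $I\cup\{t,t'\}$ then, since $G_{\phi}[I\cup\{t,t'\}]$ is a tree, $R$ must be the $p_1$--$p_{m-1}$ subpath of $Q$, of length $m-2$. Otherwise $R$ leaves $I\cup\{t,t'\}$; because $t,t'$ are dead ends and $p_{m-1}$ is the last vertex of the simple path $R$, the only possible first departure is along the edge $p_1w$ and the only possible re-entry is along the edge $w'p_{m-1}$, so $R$ contains both $w$ and $w'$ and thus $|R|\ge 1+d_{G_{\phi}}(w,w')+1$.

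The remaining ingredient, and the step I expect to be the main (albeit mild) obstacle, is the bound $d_{G_{\phi}}(w,w')\ge m$ --- in fact $d_{G_{\phi}}(w,w')\ge m-3$ already suffices above. This is the only point where one must look at the global structure of $G_{\phi}$, but it is routine. The connector path $Q$ lies inside a single distance modulator $D_j=\distanceMod{A}{B_j}{\lambda_j}{j+3}$, so $w\in A$, $w'$ lies in its central clique $K_j$, and $m=(j+3)-2=j+1$. The only links between the assignment gadget $\mathcal{A}$, the distance modulators, and the clause gadget $\mathcal{C}$ are the connector paths (attaching $A$ to the central cliques) and the edges joining each $y_i\in\clGroup{-2}$ to the degree-$3$ vertices $T_i$ of $A_i$; moreover $K_j$ is reached from outside $D_j$ only through its clause vertices $B_j\subseteq\clGroup{j}$. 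Hence any $w$--$w'$ path either reaches $K_j$ by fully traversing a connector path of $D_j$ from its $A$-endpoint (the internal connector vertices have degree~$2$), costing at least the connector length $m$; or it reaches $B_j\subseteq\clGroup{j}$ without first touching $K_j$, which forces it to enter $\mathcal{C}$ at $\clGroup{-2}$ via an edge $wy_i$ and then cross from $\clGroup{-2}$ to $\clGroup{j}$, costing at least $1+(j+2)=m+2$ since edges of $\mathcal{C}$ only join consecutive layers, plus one further edge into $K_j$. In either case $d_{G_{\phi}}(w,w')\ge m$, so any $R$ leaving $I\cup\{t,t'\}$ has length at least $m+2>m-2$; therefore every shortest $p_1$--$p_{m-1}$ path stays inside $I\cup\{t,t'\}$, which gives $d_{G_{\phi}}(p_1,p_{m-1})=m-2$ and completes the proof.
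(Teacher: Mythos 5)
Your proof follows essentially the same route as the paper's: both observe that any alternative $t$--$t'$ path must route through $w$ and $w'$ (the only escape points from the interior of the connector path, since all other internal vertices have degree~$2$ and $t,t'$ are leaves), then bound the resulting detour by a distance estimate. You make the separator structure more explicit and cleanly isolate the needed estimate as a lower bound on $d_{G_\phi}(w,w')$, noting correctly that $d_{G_\phi}(w,w')\ge m-3$ already suffices.

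There is, however, a gap in the case analysis for the lower bound on $d_{G_\phi}(w,w')$. You split into (1) reaching $K_j$ by fully traversing a connector path of $D_j$, and (2) reaching $B_j$ without first touching $K_j$, and in case (2) you assert that the path is ``forced to enter $\mathcal{C}$ at $C'_{-2}$ via an edge $wy_i$.'' This is not forced: the only neighbours of $w\in A$ outside the assignment gadget are $y_i$ and the first vertices of its connector paths, and these connector paths go to \emph{all} central cliques $K_{k'}$, not only $K_j$. Thus a $w$--$w'$ path could first fully traverse a connector path of some $D_{k'}$ with $k'\neq j$ (costing $k'+1$), enter $\mathcal{C}$ at $C'_{k'}$ through an edge $K_{k'}$--$B_{k'}$, cross to $C'_j$, and finally step into $K_j$; such a path has length at least $(k'+1)+1+|j-k'|+1\ge j+3=m+2$, so the conclusion $d_{G_\phi}(w,w')\ge m$ still holds, but the missing subcase should be stated. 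The paper's proof covers this explicitly via the subcase ``$Q'_2$ contains vertices from another connector path between $v\in V(A_k)$ and $v'\in K_{k'}$,'' with the bound $k'+|j-k'|\ge j$.
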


\begin{proof}
Let $w\in V(A_i)$ and $w'\in K_j$ for some $i,j\in [\sqrt{n}]$. Then, the length of $Q'$ is $j+1$. 
Consider an induced path $Q'_1$ between $t$ and $t'$ which is distinct from $Q'$. 
Clearly, $Q'_1$ must contain $w,w'$; let $Q'_2$ be the subpath of $Q'_1$ between $w$ and $w'$. Observe that $|E(Q'_1)|=|E(Q'_2)|+4$. 
If $Q'_2$ does not contain any vertex from a connector path (other than $w,w'$), 
then the length of $Q'_2$ is at least $j+1$, and therefore, the length of $Q'_1$ is at least $j+5$. 
Suppose $Q'_2$ contains vertices from another connector path between $v\in V(A_k)$ and $v'\in K_{k'}$ for some $k,k'\in [\sqrt{n}]$. 
In this case, the length of $Q'_2$ is at least $k'+ |j-k'|\geq j$, and therefore the length of $Q'_1$ is at least $j+4$. 
Hence, $Q'$ is an isometric path between its endpoints.
\end{proof}

\begin{lemma}\label{lem:dist-assign}
    For some $i\in [\sqrt{n}]$, let $v_i$ be a vertex of degree~3 of $A_i$. For an integer $j\in [\sqrt{n}]$, 
    let $c$ be a clause vertex in $\clGroup{j}$. If assigning the variables in $\alpha(v_i)$ to \true\ and the 
    variables in $V_i\setminus \alpha(v_i)$ to \false\ does not satisfy $c$, then the distance between $v_i$ and $c$ is at 
    most $j+2$ in $G_{\phi}$. Otherwise, the distance between $v_i$ and $c$ is $j+3$.
\end{lemma}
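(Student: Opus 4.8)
The plan is to reduce the statement to the metric guarantee of the distance modulator, established in the lemma preceding its construction in Section~\ref{sec:IPP:eth-diam-tw:construction:dist-modul}, and then rule out shortcuts in $G_\phi$ by a short case analysis on a shortest path. Note first that $v_i\in T_i\subseteq A$, and that $c$, being a clause vertex of $\clGroup{j}$, belongs to $B_j$; hence both $v_i$ and $c$ are vertices of the modulator $D_j=\distanceMod{A}{B_j}{\lambda_j}{j+3}$. By the sparseness hypothesis on $\phi$, at most one variable of $V_i$ occurs in the clauses of $C_j$, and in at most one such clause, so the partial assignment that sets the variables of $\alpha(v_i)$ to \true\ and those of $V_i\setminus\alpha(v_i)$ to \false\ satisfies at most one clause of $C_j$; by the definition of $\lambda_j$, it therefore satisfies $c$ if and only if $\lambda_j(v_i)=c$. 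The distance modulator lemma then gives that the distance between $v_i$ and $c$ \emph{inside} $D_j$ is $j+3$ if $\lambda_j(v_i)=c$ and $j+2$ otherwise. Since $D_j$ is a subgraph of $G_\phi$, this already yields $d(v_i,c)\le j+2$ in $G_\phi$ when the assignment does not satisfy $c$, which proves the first part, and $d(v_i,c)\le j+3$ when it does.

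It remains, in the satisfying case, to prove the matching lower bound $d(v_i,c)\ge j+3$. Let $P$ be a shortest $v_i$--$c$ path in $G_\phi$. The neighbours of $v_i$ are: its pendant leaf; its two cycle-neighbours inside $A_i$; the vertex $y_i\in\clGroup{-2}$; and the first internal vertices of the connector paths attached to $v_i$ in $D_1,\dots,D_{\sqrt n}$. Leaves are dead ends, and since the only edges leaving $\mathcal{A}$ are the $\clGroup{-2}$--$T$ edges and the connector paths, a path that first enters $A_i$ along a cycle edge and later leaves $\mathcal{A}$ is strictly longer than one starting with a $\clGroup{-2}$--$T$ edge or a connector path; so we may assume $P$ starts with the edge $v_iy_i$ or along a connector path of some $D_{j'}$. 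If $P$ starts with $v_iy_i$, then it proceeds inside the clause gadget $\mathcal{C}$ (leaving $\mathcal{C}$ from a clause vertex of $\clGroup{j'}$ only reaches a central clique $K_{j'}$, from which one can only re-enter $\mathcal{C}$ at $\clGroup{j'}$ or head back towards $A$); since the edges of $\mathcal{C}$ join only consecutive sets among $\clGroup{-2},\clGroup{-1},\clGroup{0},\clGroup{1},\dots,\clGroup{\sqrt n+1}$, at least $j+2$ further edges are required to reach $c\in\clGroup{j}$ from $y_i\in\clGroup{-2}$, so $|P|\ge j+3$. If $P$ starts along a connector path of $D_{j'}$, then --- using that these connector paths have length $q-2=j'+1$, that the distance modulator lemma yields $d(a,b)\ge j'+2$ for every $a\in A$ and $b\in B_{j'}$, and the layered distances within $\mathcal{C}$ --- one checks that $P$ still spends at least $j+3$ edges before reaching $c$. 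In all cases $|P|\ge j+3$, which with the upper bound gives $d(v_i,c)=j+3$.

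The step I expect to be the main obstacle is this last case: excluding ``shortcut'' routes that leave $v_i$ through a connector path of some $D_{j'}$ with $j'\ne j$, enter $\mathcal{C}$ at $\clGroup{j'}$, and then travel inside $\mathcal{C}$ towards $\clGroup{j}$. Bounding the length of such a route from below requires using simultaneously the metric guarantee of $D_{j'}$, the layered structure of $\mathcal{C}$, and a careful description of which clause vertices of $\clGroup{j'}$ can be reached from the endpoint of a connector path attached to $v_i$ (via the injectivity of the set-representation map used to build $D_{j'}$). In particular, one must check that the choice of parameter $q=j+3$ for the modulators, together with the two extra ``matching layers'' $\clGroup{-1}$ and $\clGroup{0}$ inserted before $\clGroup{1}$ in $\mathcal{C}$, make every such alternative route at least as long as the direct route through $D_j$ --- which is precisely why these technical sets were added to the clause gadget.
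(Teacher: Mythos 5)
Your overall strategy mirrors the paper's: the upper bounds come from the connector paths of $D_j$ (equivalently, from the distance-modulator lemma with $q=j+3$), and the lower bound in the satisfying case is argued by a case analysis on whether a shortest $v_i$--$c$ path stays inside the clause gadget (where the layered structure forces length $j+3$) or uses vertices of some connector path. The paper's proof handles that second case with a single sentence: since there is no connector path between $v_i$ and a vertex of $\fnb{c}$, any induced path containing connector-path vertices is asserted to have length at least $j+3$.

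There is, however, a genuine gap in your proposal, and it sits exactly where you suspect it: the case of a shortest path that leaves $v_i$ along a connector path of some $D_{j'}$ with $j'<j$ is left as ``one checks'', and the ingredients you list do not suffice to check it. Counting directly: a connector path of $D_{j'}$ has length $(j'+3)-2=j'+1$ and takes $v_i$ to some $u_{p'}\in K_{j'}$ with $p'\notin\setrep(\lambda_{j'}(v_i))$; whenever $B_{j'}$ contains a vertex $c''\neq\lambda_{j'}(v_i)$, one may choose $p'\in\setrep(c'')\setminus\setrep(\lambda_{j'}(v_i))$, so that $d(v_i,c'')\le j'+2$; and since consecutive layers $\clGroup{0},\dots,\clGroup{\sqrt{n}}$ are joined by complete bipartite graphs, $d(c'',c)\le j-j'$. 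This route totals $j+2$, not $j+3$. Hence the lower bound does not follow from the layer-counting plus the modulator metric alone: either some further structural property of the construction must be invoked to exclude such routes, or the claimed equality is in jeopardy. Your instinct that this is the crux is correct, but your proposal does not close it; note also that the paper's own one-sentence treatment of this case does not address these $j'<j$ routes explicitly, so you cannot simply defer to it.
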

\begin{proof}
When assigning the variables in $\alpha(v_i)$ to \true\ and the variables in $V_i\setminus \alpha(v_i)$ to 
\false\ does not satisfy $c$, observe that there exists a $w'\in \fnb{c}$ such that there is a connector path 
$P$ between $v_i$ and $w'$. By construction, the length of $P$ is $j+1$, and $w'$ is adjacent to $c$. Hence, the distance 
between $v_i$ and $c$ is at most $j+2$. 

Suppose $c$ is satisfied. Let $P$ be an induced path between $v_i$ and $c$.  
Since there is no connector path between $v_i$ and a vertex of $\fnb{c}$, if $P$ contains vertices from some connector path, 
the length of $P$ is at least $j+3$. Otherwise, $P$ contains exactly one vertex from $\clGroup{j}$ for each {$j\in [-2,j]$}. 
Hence, the length of $P$ is $j+3$. 
\end{proof}

\subsection{Proof of \Cref{thm:IPP:eth-diam-tw-hard}}
\label{sec:IPP:eth-diam-tw:proof}

We are now ready to prove Theorem~\ref{thm:IPP:eth-diam-tw-hard}, that we split into three lemmas.
Let us recall that the reductions sets
\[
    k  = \frac{1}{2}\left[\Bigl(\displaystyle\sum\limits_{i=1}^{\sqrt{n}}2^{|V_i|}\Bigr) 
+ \Bigl(\displaystyle\sum\limits_{i=1}^{\sqrt{n}} 2 \cdot |K_i|\Bigr) 
+ \Bigl(2 \cdot |\mathcal{P}|\Bigr) 
+ \Bigl(\sqrt{n}\Bigr)\right]
\]
given an instance of \textsc{Sparse 3-SAT} on $n$ variables.

\begin{lemma}
\label{lem:diam-tw-hard:forward}
    If $(\phi, \{V_1, \dots, V_{\sqrt{n}}\}, \{C_1, \dots, C_{\sqrt{n}}\})$ is a \yes-instance of \textsc{Sparse 3-SAT}, then $(G_{\phi}, k)$ is a \yes-instance of \SoPP.
\end{lemma}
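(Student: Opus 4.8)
The plan is to take a satisfying assignment of $\phi$ and explicitly build an IP-partition of $G_\phi$ of size exactly $k$.

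\medskip

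\textbf{Plan of the proof.} First I would dispose of the ``easy'' parts of the partition that are forced by cherries and leaves. By the cherry/leaf lemmas (Lemmas~\ref{lem:leaf-lemma}--\ref{lem:twin-cherry-lemma}), we may cover: (i) for each central clique $K_i$ and each vertex $w\in K_i$, the path $(w_1,w,w_2)$ on $w$ together with its two pendant leaves, accounting for $|K_i|$ paths per clique, i.e.\ $\tfrac12\sum_i 2|K_i|$ paths in total; (ii) for each connector path $P$ of length at least~$2$ between $w\in A$ and $w'\in\mathcal K$, the isometric path $Q'$ from Lemma~\ref{lem:connector-path-cover} on the internal vertices of $P$ plus the two attached leaves $t,t'$, giving $\tfrac12\cdot 2|\mathcal P|$ paths; (iii) on each cycle $\cycle i$, pair up $2^{|V_i|}$ of its $2^{|V_i|}+1$ vertices into $\tfrac12\cdot 2^{|V_i|}$ four-vertex isometric paths, each consisting of two consecutive cycle vertices plus their two pendant leaves. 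This leaves exactly $\sqrt n$ paths in the budget, and the still-uncovered vertices are: one distinguished vertex $v_i$ of $\cycle i$ (of degree~$3$) together with its pendant leaf, for each $i$; all of $\clGroup{-2}\cup\clGroup{-1}\cup\clGroup0\cup\clGroup1\cup\dots\cup\clGroup{\sqrt n}$; all of $\clGroup{\sqrt n+1}$; and the endpoints $w\in A$, $w'\in\mathcal K$ of each connector path (which, for a satisfied clause, are absorbed into $\mathcal K$-cherries or the cycle paths — I would be careful to set up the pairing in (i) and (iii) so that exactly the ``right'' copy of each $w,w'$ remains free).

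\medskip

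\textbf{The main isometric paths.} Now fix a satisfying assignment; for each $i\in[\sqrt n]$ it restricts to an assignment of $V_i$, hence to a degree-$3$ vertex $a_i\in T_i$ with $\alpha(a_i)$ the true-set. I would choose the distinguished uncovered vertex $v_i$ of $\cycle i$ to be precisely this $a_i$ (together with its pendant leaf). The $i$-th solution path $R_i$ is built as: its pendant leaf, then $a_i$, then the edge to $y_i\in\clGroup{-2}$, then across the matchings to $\clGroup{-1},\clGroup0$, then a vertex of $\clGroup1$, then a vertex of $\clGroup2$, \dots, then a vertex of $\clGroup{\sqrt n}$, then the matched vertex $z_i\in\clGroup{\sqrt n+1}$. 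Concretely, in $\clGroup j$ the path $R_i$ should pass through the clause vertex $\lambda_j(a_i)$ if that is an actual clause satisfied by the assignment (which is well-defined and unique by the \textsc{Sparse 3-SAT} promise), and through an appropriate dummy vertex otherwise. The crucial points to verify are: (a) the $R_i$'s are pairwise disjoint and, together with all the cherry/leaf paths, cover every vertex of $G_\phi$ — this is a counting/bookkeeping check using that the $\clGroup j$'s, $1\le j\le\sqrt n$, each have exactly $\sqrt n$ vertices and each $R_i$ uses exactly one of them, that the clause gadget between consecutive $\clGroup j$'s is complete bipartite, and that we have exactly $\sqrt n$ paths; (b) each $R_i$ is isometric in $G_\phi$. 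For (b), the length of $R_i$ is $1$ (leaf edge) $+\,3$ (to reach $\clGroup1$ through $\clGroup{-2},\clGroup{-1},\clGroup0$) $+\,(\sqrt n-1)$ (stepping through $\clGroup1,\dots,\clGroup{\sqrt n}$) $+\,1$ (matching to $\clGroup{\sqrt n+1}$), and I must show no shorter path connects its endpoints; the key sub-step is Lemma~\ref{lem:dist-assign}, which guarantees that when the assignment of $a_i$ satisfies a clause $c\in\clGroup j$, the distance from $a_i$ to $c$ is exactly $j+3$ (so the distance-modulator shortcut through $K_j$ is unavailable), matching the portion of $R_i$ from $a_i$ to that vertex. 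One also checks, using Lemma~\ref{lem:dist-assign} again and the matchings at the ends, that the distance from the leaf of $R_i$ to $z_i$ is exactly $\sqrt n+4$.

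\medskip

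\textbf{Expected main obstacle.} The genuinely delicate part is not the distance argument (which is packaged by Lemma~\ref{lem:dist-assign}) but the \emph{exact covering bookkeeping}: one must show that the dummy clause vertices, the connector-path endpoints $w,w'$, and the single leftover cycle vertices can all be absorbed consistently — in particular, that after removing the cherry/leaf paths and the $R_i$'s, \emph{nothing} is left over and \emph{nothing} is double-covered, so that the total is exactly $k$ and not $k+O(1)$. I would handle this by a careful accounting, partitioning $V(G_\phi)$ into the blocks $\mathcal A,\mathcal C,\mathcal K,$ (leaves of $\mathcal K$), $\mathcal P,$ (leaves of $\mathcal P$), verifying that the leaf count is $2k$ as claimed (so each of the $k$ paths has exactly two leaves as endpoints), and checking block by block that the prescribed paths partition it. The subtlety with connector-path endpoints $w\in A$ is that each such $w$ is a degree-$3$ cycle vertex that is \emph{also} an endpoint of many connector paths; I need the cycle-pairing in step (iii) to be chosen so that exactly the $a_i$'s are left single, and then $a_i$ together with its incident structure is consumed by $R_i$, while every other cycle vertex $w$ is an interior vertex of a four-vertex cherry-type path and its connector paths are handled entirely by step (ii). Getting these choices mutually compatible is the crux; once it is done, disjointness, coverage, isometry, and the count $k'=k$ all follow.
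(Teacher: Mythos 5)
Your construction is correct and matches the paper's proof essentially step for step: the same cherries on $\mathcal{K}$, the same coverage of connector-path interiors plus attached leaves via Lemma~\ref{lem:connector-path-cover}, the same pairing of each cycle $\cycle{i}$ minus the chosen assignment vertex into four-vertex paths, and the same $\sqrt{n}$ long paths through $y_i,\clGroup{-1},\clGroup{0},\dots,\clGroup{\sqrt{n}},z_i$ whose isometry rests on Lemma~\ref{lem:dist-assign}. The only slip is arithmetic: the long path from the pendant leaf of $a_i$ to $z_i$ has length $\sqrt{n}+5$, not $\sqrt{n}+4$ (reaching $\clGroup{1}$ from $a_i$ takes four edges, not three), but this does not affect the validity of the argument.
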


\begin{proof}
In this section, we show that if $\phi$ has a satisfying assignment, then $G_{\phi}$ has an IP-partition of 
cardinality $k $. Fix a satisfying assignment $f\colon V\rightarrow \{\true,\false\}$. 
For each $i\in [\sqrt{n}]$, let $v_i\in V(A_i)$ be {the} vertex such that $\alpha(v_i)=\{x\in V_i : f(x)=\true\}$. 
In other words, the variables in $\alpha(v_i)$ are exactly the variables in $V_i$ that were assigned to \true. 
We construct an IP-partition $\mathcal{R}$ of $G_{\phi}$ of cardinality $k$ as follows. 

\begin{itemize}
    \item Initialize $\mathcal{R}=\emptyset$.
    For each vertex $w\in \mathcal{K}$ that is not a leaf, let $w_1$ and $w_2$ be the two leaves adjacent to $w$. 
    We add the cherry $(w_1,w,w_2)$ to $\mathcal{R}$.

    \item For each connector path $Q$, we do the following. Let $w$ and $w'$ be the endpoints of $Q$. 
    Let $t,t'$ be the two leaves of $G_{\phi}$ such that $t$ is adjacent to the neighbor of $w$ in $Q$, and $t'$ is adjacent to the neighbor of $w'$ in $Q$. 
    Let $P_{ww'}$ denote the path induced by $(V(Q)\setminus \{w,w'\})\cup \{t,t'\}$. 
    By \Cref{lem:connector-path-cover}, $P_{ww'}$ is an isometric path. 
    We add it to $\mathcal{R}$.

    \item For each $i\in [\sqrt{n}]$, we do the following. 
    Recall that $A_i$ consists of a cycle $Q_i$ of order $2^{|V_i|}+1$ with one leaf attached to each vertex of $Q_i$. 
    Consider the path $P_i$ induced by $V(Q_i)\setminus \{v_i\}$. 
    Observe that $P_i$ consists of $2^{|V_i|}$ vertices, each with a unique leaf attached. 
    The vertices of $P_i$ along with the attached leaves can be partitioned using $2^{|V_i|-1}$ many isometric paths of four vertices each, and we put such an IP-partition in $\mathcal{R}$.

    \item For each $i\in [\sqrt{n}]$, we do the following. 
    Let $v'_i$ be the leaf adjacent to $v_i$. 
    Let $u_0$ be a vertex of $\clGroup{0}$ which is not in any path of $\mathcal{R}$ yet. 
    For $j\in [\sqrt{n}]$, define $u_j$ as follows. 
    
    \begin{itemize}
        \item If a variable of $V_i$ appears in some clause of $C_j$, then there exists exactly one clause vertex $c\in \clGroup{j}$ 
        that contains a variable in $V_i$. Moreover, $c$ is satisfied by $\alpha(v_i)$ by the choice of $f$ and $\alpha$. 
        If $c$ is not covered by any path in $\mathcal{R}$, 
        then define $u_j=c$.
    
        \item Otherwise, define $u_j$ to be a dummy vertex of $\clGroup{j}$ which does not appear in any path of $\mathcal{R}$.
    \end{itemize}
    Finally, let $u'_i$ be the vertex in $\clGroup{\sqrt{n}+1}$ which is adjacent to $u_{\sqrt{n}}$. Recall that $y_i\in \clGroup{-2}$ 
    is adjacent to $v_i$. Now, construct a path  $Q$ induced by $v'_i,v_i,y_i,u_0,\ldots,u_j,\dots,z_i$. We put $Q$ in $\mathcal{R}$. 
\end{itemize}

    We now prove that the set $\mathcal{R}$ constructed above 
    is an IP-partition of $G_{\phi}$ of cardinality $k$.
    Clearly, $\mathcal{R}$ is a partition of the vertex set, and the endpoints of the paths in $\mathcal{R}$ are leaves 
    of $G_{\phi}$. Hence, $|\mathcal{R}|=k$. 
    The paths added in the first three bullets are clearly isometric paths. 
    Let $P\in \mathcal{R}$ be a path as described in the fourth (and last) bullet. 
    By construction, one endpoint of $P$ is $v'_i$, which is the leaf adjacent to $v_i\in V(A_i)$ for some $i\in \sqrt{n}$. 
    Also, $v_i\in V(P)$. 
    We argue that for each vertex $c\in V(P)$, the subpath of $P$ between $v_i$ and $c$ is isometric. 
    
    Clearly, if $c\in \clGroup{-2} \cup \clGroup{-1} \cup \clGroup{0}$, the statement is true. 
    Let $c\in \clGroup{j}$ for some $j\in [\sqrt{n}]$. 
    The distance between $v_i$ and $c$ in the path $P$ is {$j+3$}.
    If $c$ is a clause vertex, then by definition, assigning $\alpha(v_i)$ to \true{} and $V_i\setminus \alpha(v_i)$ to \false{} satisfies $c$. 
    \Cref{lem:dist-assign} implies that the subpath between $v_i$ and $c$ is an isometric path. 
        
    Using a similar argument as in \Cref{lem:dist-assign},
    we can prove the following:
    For some $i\in [\sqrt{n}]$, let $w'$ be a vertex of degree~3 of $A_i$. 
    For an integer $j\in [\sqrt{n}]$, let $d$ be a dummy vertex in $\clGroup{j}$. 
    Then, the distance between $w'$ and $d$ is {$j+3$}.
    As $w$ is a dummy vertex, this implies the claim. 
    Finally, if $c\in \clGroup{\sqrt{n}+1}$, then consider the vertex $c'\in V(P)$ adjacent to $c$. 
    Since $c$ is a leaf, and $c'\in \clGroup{\sqrt{n}}$, the above arguments complete the proof of the lemma.
\end{proof}

\begin{lemma}
\label{lem:diam-tw-hard:backward}
    If $(G_{\phi}, k)$ is a \yes-instance of \SoPP,
    then $(\phi, \{V_1,\allowbreak \dots, V_{\sqrt{n}}\}, \{C_1, \dots, C_{\sqrt{n}}\})$
    is a \yes-instance of \textsc{Sparse 3-SAT}.
\end{lemma}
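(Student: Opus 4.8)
The plan is to start from a hypothetical IP-partition $\mathcal{R}$ of $G_{\phi}$ of size $k$ and reverse-engineer a satisfying assignment of $\phi$. The backbone is a counting argument on leaves: $G_{\phi}$ has exactly $2k$ vertices of degree $1$, and such a vertex can only be an endpoint of the unique path of $\mathcal{R}$ containing it. Since $\mathcal{R}$ has $k$ paths and each has at most two endpoints, every path of $\mathcal{R}$ has \emph{exactly} two distinct degree-$1$ endpoints; in particular $\mathcal{R}$ contains no one-vertex path and no path ending at a non-leaf. I would then sort the leaves into four families: the pendants $w_1,w_2$ of each central-clique vertex $w$ (those in $\mathcal{K}$), the two leaves $t,t'$ attached at the extremities of each connector path of $\mathcal{P}$, the leaves attached to the cycle vertices of the assignment gadget $\mathcal{A}$, and the degree-$1$ vertices of $\clGroup{\sqrt{n}+1}$.

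The next step is to pin down the ``rigid'' part of $\mathcal{R}$ by short local arguments. For a clique vertex $w$, the leaves $w_1,w_2$ have $w$ as their only neighbour, so (as neither is a one-vertex path and neither path may end elsewhere) both must be endpoints of the same path, forcing the cherry $(w_1,w,w_2)\in\mathcal{R}$; summing over $\mathcal{K}$ this covers all central cliques. Next, for a connector path with extremities $w\in A$ and $w'\in\mathcal{K}$, the leaves $t,t'$ must likewise be the two endpoints of one path, and using that $w'$ is already spent in a cherry and that no path can end inside the interior of the connector path, this path is exactly the isometric path $Q'$ of \Cref{lem:connector-path-cover}; hence all connector interiors are covered, one path per connector. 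Finally, the cycle leaves of each $\cycle{i}$ must be paired along the cycle, and a ``trapped-leaf'' argument (a path covering an arc of three or more consecutive cycle vertices, or using the common neighbour $y_i$ to jump over a cycle vertex, would isolate an uncoverable leaf) forces these to be the length-$3$ paths on two consecutive cycle vertices plus their leaves, leaving exactly one cycle vertex $v_i$ of each $\cycle{i}$ uncovered. After all this, only $\sqrt{n}$ paths of $\mathcal{R}$ remain (the assignment-selection paths), and what is still to be covered is the set of leaves $v_1',\dots,v_{\sqrt{n}}'$, the set $\clGroup{\sqrt{n}+1}$, and the whole clause gadget $\mathcal{C}$.

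I would then analyse these $\sqrt{n}$ remaining paths. A clause or dummy vertex of $\clGroup{j}$ has all its neighbours outside $\mathcal{C}$ inside $\mathcal{K}$, which is entirely consumed by cherries; hence any remaining path meeting $\mathcal{C}$ stays inside $\mathcal{A}\cup\mathcal{C}$. Since $\clGroup{\sqrt{n}+1}$ consists of degree-$1$ vertices matched to distinct vertices of $\clGroup{\sqrt{n}}$, and since the layers $\clGroup{-2},\dots,\clGroup{\sqrt{n}+1}$ form a ``linear'' gadget (matchings at the two ends, complete bipartite graphs in between) with no shortcuts left once $\mathcal{K}$ is removed, a counting argument on one intermediate layer $\clGroup{j}$ (which has $\sqrt{n}$ vertices and which only a monotone path through the gadget can reach) forces each of the $\sqrt{n}$ remaining paths to run from some $v_i'$ through $v_i$ and $y_i\in\clGroup{-2}$, then monotonically through $\mathcal{C}$ to some $z\in\clGroup{\sqrt{n}+1}$, covering exactly one vertex of each layer. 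As there are $\sqrt{n}$ such paths and each layer has $\sqrt{n}$ vertices, all layer vertices, in particular all clause vertices, are covered. Now, if such a path covers the clause vertex $c\in\clGroup{j}$, then its subpath from $v_i$ to $c$ has length $j+3$ and is isometric, so by \Cref{lem:dist-assign} the partial assignment setting the variables of $\alpha(v_i)$ to \true{} and those of $V_i\setminus\alpha(v_i)$ to \false{} satisfies $c$ (in particular $c$ contains a variable of $V_i$). Defining $f\colon V\to\{\true,\false\}$ so that its restriction to each $V_i$ is this partial assignment (taking $\alpha(v_i)=\emptyset$ when $v_i$ is the special vertex of $\cycle{i}$ carrying no assignment), every clause of $\phi$ is then satisfied by $f$, so $(\phi,\{V_1,\dots,V_{\sqrt{n}}\},\{C_1,\dots,C_{\sqrt{n}}\})$ is a \yes-instance.

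I expect the main obstacle to be this last analysis: ruling out ``cheating'' remaining paths — paths joining two leaves of the same family, paths that zigzag inside $\mathcal{C}$ or dip out of it and back in, and paths entering a cycle at a vertex other than their eventual endpoint. The leverage is that every such path is isometric with its two endpoints at the extreme layers of the linear gadget (so detours strictly lengthen it), that the central cliques are completely consumed by cherries (which destroys every shortcut between layers and every route into $\mathcal{C}$ except through some $y_i$), and the counting on one intermediate layer; together these enforce the clean monotone picture used above.
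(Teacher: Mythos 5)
Your plan follows the paper's proof essentially step for step: the $2k$ leaf-count forcing each of the $k$ paths to join exactly two leaves, the forced cherries on $\mathcal{K}$ and forced covering of connector interiors, the argument that exactly $\sqrt{n}$ paths remain and must each run from an assignment gadget monotonically through $\mathcal{C}$ to $\clGroup{\sqrt{n}+1}$, and finally Lemma~\ref{lem:dist-assign} to recover the satisfying assignment. The only divergences are cosmetic bookkeeping: the paper pins down the $\sqrt{n}$ exit paths via a parity argument on the odd number of leaves in $A_i$ plus uniqueness of $y_i$, and then a total vertex count of $\mathcal{C}$ combined with its diameter, whereas you argue via an explicit cycle-leaf pairing lemma and a count on a single intermediate layer $\clGroup{j}$; also, your ad hoc convention $\alpha(p_i)=\emptyset$ is unnecessary (an exit path through $p_i$ can only meet dummy vertices, so the choice on $V_i$ is irrelevant), but harmless.
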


\begin{proof}
We show that if $G_{\phi}$ has an IP-partition of cardinality 
$k$ then $\phi$ is satisfiable. Let $\mathcal{R}$ be an IP-partition of $G_{\phi}$ of cardinality $k$. \textcolor{black}{Since the number of leaves is $2k$, and a path contains at most two leaves, each path in $\mathcal{R}$ must go from one leaf to another. We now explain why these endpoints must be connected as in the previous lemma.} 

\begin{enumerate}[label=(\alph*)]
    \item\label{it:a} First, note that vertices $w\in \mathcal{K}$ have two attached pendant leaves which together with these leaves, induce a cherry. Thus, by \Cref{lem:cherry-lemma} we may assume that these cherries belong to~$\mathcal{R}$.

    \item\label{it:b} Let $Q$ be a connector path between two vertices $w,w'$ of $G_{\phi}$, with $w'$ being in $\mathcal{K}$. 
    Let $t,t'$ be the two leaves of $G_{\phi}$ such that $t$ is adjacent to the neighbor of $w$ in $Q$, and $t'$ 
    is adjacent to the neighbor of $w'$ in $Q$. 
    The path $P\in \mathcal{R}$ that contains $t'$ must connect to another leaf, and, in particular, it must pass through $w$'s neighbor in the connector path. If $t\notin P$, then $t$ must be covered by a path that contains only one leaf,  which is a contradiction. Thus, there is a path in $\mathcal{R}$ that connects $t$ to $t'$ and contains all the vertices of $Q$ between $w$ and $w'$. By \Cref{lem:connector-path-cover}, this is indeed an isometric path. 
    
    \item\label{it:c} 
    For each $i\in [\sqrt{n}]$, let us argue that there is exactly one isometric path in $\mathcal{R}$ having one endpoint in $A_i$ and the other endpoint not in $A_i$. First, there is at least one such path since the number of leaves in $A_i$ is odd. Then, there is at most one such path, since two such paths 
    would contain the same vertex $y_i\in \clGroup{-2}$, which would be a contradiction with disjointness. 
    
    \item\label{it:d} Let $A$ be a connected component of the assignment gadget $\mathcal{A}$ and $P\in \mathcal{R}$ be a path with exactly one endpoint in $A$, that we call $u$.  
    We argue that its other endpoint $v$ must lie in $\clGroup{\sqrt{n}+1}$. 
    Due to \ref{it:a} and \ref{it:b}, we have that $v$ must either lie in the assignment gadget, or in $\clGroup{\sqrt{n}+1}$. 
    Assume that there exists a connected component $A'\not= A$ that contains $v$. Since the distance between these vertices is $10$, the length of $P$ is at most $10$. 
    On one hand, $\mathcal{C}$ has exactly $n+4\sqrt{n}$ vertices and by \ref{it:c} these vertices are covered by at most $\sqrt{n}$ many isometric paths of $\mathcal{R}$. 
    On the other hand, the diameter of $\mathcal{C}$ is $3+\sqrt{n}$ and thus an isometric path covers at most $4+\sqrt{n}$ vertices of $\mathcal{C}$. 
    This implies that each path must cover exactly $\sqrt{n}+4$ of $\mathcal{C}$. For sufficiently large values of $n$, we have $\sqrt{n}+4> 10$, which shows that $v$ cannot be in $\mathcal{A}$. 
\end{enumerate}

The sought assignment of the variables is encoded by the paths from $\mathcal{A}$ for $C'_{\sqrt{n}+1}$. For each $i\in [\sqrt{n}]$, let $w_i$ be the vertex of degree~3 covered by one of these paths. 
Then, assign all variables of $\alpha(w_i)$ to \true{} and all variables of $V_i\setminus \alpha(w_i)$ to \false. 
We show that the above assignment satisfies $\phi$. 

Consider a clause vertex $c\in \clGroup{j}$ for some $j\in [\sqrt{n}]$. 
Let $P\in \mathcal{R}$ be an isometric path that contains $c$. 
One endpoint of $P$ lies in a connected component $A_i$ for some $i\in [\sqrt{n}]$, and 
let $w_i\in V(A_i)$ be the vertex of $P$ which has degree three in $A_i$. Observe that the distance between $w_i$ and $c$ is {$j+3$}.
Thus, by \Cref{lem:dist-assign}, $c$ is satisfied by the assignment. 
\end{proof}

\begin{lemma}\label{lem:diam-tw-hard:parameters}
    The graph $G_\phi$ has pathwidth $O\left(\sqrt{n}\log n\right)$ and diameter $O\left(\sqrt{n}\right)$.
\end{lemma}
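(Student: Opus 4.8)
The plan is to bound the treewidth and the diameter of $G_\phi$ separately, in each case by a standard "modulator / hub" argument tailored to the three ingredients of the construction: the assignment gadget $\mathcal{A}$, the clause gadget $\mathcal{C}$, and the distance modulators together with their connector paths and the many leaves attached to both the central cliques and the connector paths.

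For the diameter, I would fix a single vertex $h$ of the clause gadget, say $h\in \clGroup{0}$, and show that every vertex of $G_\phi$ lies within distance $\sqrt n + O(1)$ of $h$; the bound $\diam(G_\phi)=O(\sqrt n)$ then follows from the triangle inequality. The sets $\clGroup{-2},\dots,\clGroup{\sqrt n+1}$ form a chain of length $\sqrt n+3$ in which consecutive sets are joined by a matching or a complete bipartite graph, so $\diam(\mathcal{C})=O(\sqrt n)$ and every vertex of $\mathcal{C}$ is within $\sqrt n+O(1)$ of $h$. Every degree-$3$ vertex $a$ of some $\cycle{i}$ is adjacent to $y_i\in \clGroup{-2}$, hence within distance $3$ of $h$, and the leaves of $\cycle{i}$ within distance $4$. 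For a vertex of a central clique $K_j$: at least one vertex of $K_j$ is an endpoint of a connector path of length $j+1\le \sqrt n+1$ whose other endpoint is a degree-$3$ vertex of $\mathcal{A}$, so, using that $K_j$ is a clique, every vertex of $K_j$ (and each of its pendant leaves) is within $\sqrt n+O(1)$ of $h$. Finally, every internal vertex of a connector path joining some $a\in\mathcal{A}$ to a clique vertex is within $j+1\le \sqrt n+1$ of $a$, hence within $\sqrt n+O(1)$ of $h$, and the same holds for the extra leaves attached to these paths.

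For the treewidth, I would take as modulator the set $X$ consisting of the central cliques $K_1,\dots,K_{\sqrt n}$ with their pendant leaves removed. Each such clique has $2p_j$ vertices with $p_j=O(\log|B_j|)=O(\log\sqrt n)=O(\log n)$, so $|X|=O(\sqrt n\log n)$, and it suffices to prove $\tw(G_\phi-X)=O(\sqrt n)$, since adding $X$ to every bag of an optimal tree-decomposition of $G_\phi-X$ gives $\tw(G_\phi)\le |X|+\tw(G_\phi-X)=O(\sqrt n\log n)$. Deleting $X$ removes all edges incident to the central cliques and, in particular, detaches every connector path from its clique endpoint; the only remaining links between $\mathcal{A}$ and $\mathcal{C}$ are the edges between each $y_i$ and the degree-$3$ vertices of $\cycle{i}$. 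Hence $G_\phi-X$ is assembled, by identifications at single cut vertices only, from: (i) the clause gadget $\mathcal{C}$, which has a path-decomposition with bags $\clGroup{j-1}\cup\clGroup{j}$ of size $2\sqrt n$, hence treewidth at most $2\sqrt n-1$; (ii) for each $i$, the graph $\cycle{i}\cup\{y_i\}$, which is a wheel with a pendant leaf on each rim vertex and therefore has treewidth $3$, glued to $\mathcal{C}$ along $\{y_i\}$; (iii) the remnants of the connector paths, each an induced path (possibly carrying two extra leaves) glued to $\mathcal{A}$ along one degree-$3$ endpoint, of treewidth $1$; and (iv) the isolated former leaves of the $K_j$'s. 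Since these pieces pairwise intersect in at most one vertex and their intersection pattern is acyclic, $\tw(G_\phi-X)$ equals the maximum of their treewidths, i.e.\ $O(\sqrt n)$, as desired.

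The routine parts are the path-decomposition of the clause gadget and the wheel/tree treewidth bounds. The step I expect to be the main obstacle is verifying that deleting $X$ genuinely leaves a graph of treewidth $O(\sqrt n)$, i.e.\ that no wide structure survives once the connector paths lose their clique endpoints and the distance modulators collapse to disjoint stars: one must check carefully that each connector-path remnant becomes a mere tree appendage hanging off a single degree-$3$ vertex of $\mathcal{A}$, and that the (many) such appendages, together with the wheels $\cycle{i}\cup\{y_i\}$, glue onto the clause-gadget decomposition without ever forcing a bag larger than $2\sqrt n$. The analogous care is needed on the diameter side for the central-clique vertices, which reach the rest of the graph only through long connector paths or through the clause vertices, so one has to make sure that this detour still costs only $O(\sqrt n)$.
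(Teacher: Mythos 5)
Your proof is correct, and for the treewidth bound it takes a genuinely different and more explicit route than the paper's. The paper argues that two edge-covering subgraphs (roughly, the assignment side with the modulators, and the clause side with the modulators) each have treewidth $O(\sqrt n\log n)$, and then concludes directly that $G_\phi$ does as well; as stated, that deduction relies on an unarticulated property of how the two subgraphs overlap, since edge-covering subgraphs of small treewidth do not in general give a small treewidth for the union. You instead pick the explicit modulator $X=\bigcup_j K_j$ of size $O(\sqrt n\log n)$, observe that $G_\phi-X$ is a tree-like one-sum of the clause gadget (pathwidth $O(\sqrt n)$ via the bags $\clGroup{j-1}\cup\clGroup{j}$), the wheels $\cycle{i}\cup\{y_i\}$, the connector-path remnants (trees hanging off degree-3 vertices of $\mathcal{A}$), and isolated leaves, and conclude $\tw(G_\phi)\le|X|+\tw(G_\phi-X)=O(\sqrt n\log n)$. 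This is a cleaner argument and avoids the gap in the paper's ``edges are covered'' step. One small point worth making explicit when writing it up: after deleting the $K_j$'s, the clause vertices in $B_j$ lose all their edges into the modulators, so $\mathcal{C}$ is indeed only attached to the rest of $G_\phi-X$ through the vertices $y_i$, which is what makes the gluing pattern a tree.

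For the diameter, the paper proceeds by bounding the diameter of a couple of induced subgraphs and then combining, while you fix a single hub $h\in\clGroup{0}$ and bound every vertex's distance to $h$ by $\sqrt n+O(1)$. Both are fine; your hub version is a bit tighter as a single unified argument and makes the constant-additive slack (for leaves on $K_j$ and on connector paths, and for $\mathcal{A}$) transparent. No gaps there either.
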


\begin{proof}
    First note that the graph induced by the assignment gadget, together with $\clGroup{-2}, \clGroup{-1}$, $\clGroup{0}$, and the distance modulators, has pathwidth $O(\sqrt{n}\log n)$. 
    Similarly, the graph induced by the clause gadget and the distance modulators has pathwidth $O\left(\sqrt{n}\log n\right)$. 
    Since all edges of $G_{\phi}$ are covered by the above subgraphs, the treewidth of $G_{\phi}$ is $O\left(\sqrt{n}\log n\right)$. 

    To show that the diameter of $G_{\phi}$ is $O\left(\sqrt{n}\right)$, first observe that the graph induced by the 
    clause gadget and the distance modulators has diameter $O(\sqrt{n})$. Then, observe that the graph induced by the assignment gadget, together with $\clGroup{-2}, \clGroup{-1}$ and $\clGroup{0}$, has diameter at most 10.
    Finally, observe that the distance between any vertex of the assignment gadget and any vertex of a distance modulator is at most $O(\sqrt{n})$. 
    Combining the above arguments, we have that the diameter of $G_{\phi}$ is $O\left(\sqrt{n}\right)$.    
\end{proof}

Note that given an instance $\phi$ of \textsc{Sparse 3-SAT}, the graph $G_{\phi}$ can be constructed in $2^{O(\sqrt{n})}$ time. 
By Lemmas~\ref{lem:diam-tw-hard:forward} and \ref{lem:diam-tw-hard:backward}, and \ref{lem:diam-tw-hard:parameters}, if there is an algorithm for \SoPP running in time 
$\diam^{o(\pw^2/(\log^3\pw))} \cdot |V(G_{\phi})|^{O(1)}$, then \textsc{Sparse 3-SAT} will admit a $2^{o(n)}$ algorithm, 
which contradicts \Cref{prop:sparse-hard}.
We conclude with Theorem~\ref{thm:IPP:eth-diam-tw-hard}, that we restate here.

\restateIPPdiamtwhard*

\subsection{A Note about Kernelization}\label{sec:kernel}

We conclude this section by another negative result on the parameterization by diameter and pathwidth.

\begin{proposition}
    \SoPP{} does not admit a polynomial kernel when parameterized by $\diam + \pw$, unless
$\NP \subseteq \coNP/\poly$.
\end{proposition}

\begin{proof}
	This follows from a simple \textsc{And}-cross-composition.
	See \cite[Chapter 15.1.3]{cygan2015parameterized} for formal
	definitions and precise statements.
	Consider the following equivalence relation defined on
	instances of \SoPP{}.
	Two instances $(G_i, k_i)$ and $(G_j, k_j)$ are in the same equivalence class
	if and only if $k_i = k_j$ and the number of vertices in $G_i$ is the same as the 
	number of vertices in $G_j$.
	It is easy to verify that this is a polynomial equivalence relation 
	(See \cite[Definition 15.7]{cygan2015parameterized}).
	Consider the following \textsc{And}-composition that takes $t$ many instances
	$(G_1, k), (G_2, k), \cdots, (G_t, k)$ of \SoPP{}
	that are in the same equivalence class, 
	and returns another instance $(G', k')$ of \SoPP:
	To construct $G'$, the reduction starts with a disjoint union of $G_1, G_2, \dots, G_t$.
	It then adds a path $(u, v, w)$ and makes $v$ adjacent 
	with an arbitrary vertex in every connected component in $G_i$ for every $i \in [t]$.
	It sets $k' = k \cdot t + 1$ and returns the instance $(G', k')$.
	The forward direction of the reduction follows from the fact that
	combining solutions for individual instances along with
	$(u, v, w)$ constructs a solution for $(G', k')$.
	In the reverse direction, by Lemma~\ref{lem:cherry-lemma} we can assume without loss of generality that an optimal path partition of $G'$ contains the cherry $(u, v, w)$.
	This implies that $(G', k')$ is a \yes-instance of \SoPP{}
	if and only if $(G_i, k)$ is a \yes-instance of \SoPP{}
	for every $i \in [k]$.
	It is easy to verify that $\diam(G') + \pw(G')$ are upper-bounded 
	by four times the maximum number of vertices in $G_i$ for any $i \in [t]$.
	Hence, \cite[Theorem 15.12]{cygan2015parameterized} 
	implies that \SoPP{} does not admit 
	a polynomial kernel unless $\NP \subseteq \coNP/\poly$.
\end{proof}

\section{Conclusion}\label{sec:conclusion}

In this article, we studied the parameterized complexity of \SoPP. 
We proved that the problem admits an \XP-algorithm but is \W[1]-hard when
parameterized by the treewidth (and pathwidth) of the input graph.
This improves the existing results
from~\cite{dumas2024graphs} and \cite{PPP}, and
answers open questions mentioned in these articles. In addition, we obtained an \FPT\ algorithm (with running time $\diam^{O(\tw^2)}\cdot n^{O(1)}$) parameterized by diameter and
treewidth, and proved a conditional randomized \ETH-based lower bound
that differs from the algorithm running time only by a poly-logarithmic factor. As noted in Section~\ref{sec:intro}, this type of running time is relatively rare in the literature. Our result shows that \SoPP behaves more like other metric-based problems with respect to treewidth and diameter, as opposed to its non-metric counterpart (\textsc{Path Partition}, see~\cite{PCcaldam}).

As a future research question, we wonder 
whether the problem is \FPT\ or \W[1]-hard when parameterized by solution size $k$. 
Recall that there is an algorithm running in time
$f(k)\cdot n^{O(k)}$~\cite{dumas2024graphs}, and that the treewidth is upper-bounded by a function of $k$ in any \textsc{Yes}-instance~\cite{dumas2024graphs}. Note that the analogous problem is known to be \W[1]-hard for $k$ on DAGs~\cite{PPP}, but the authors reported being unable to prove the analogous result for undirected graphs.

Another interesting question is whether \SoPP becomes \FPT\ for treewidth on planar graphs. 
We note that this question has been raised for other metric-based problems such as \textsc{Metric Dimension}~\cite{DBLP:journals/algorithmica/BonnetP21}.

\paragraph*{Acknowledgments.}
The authors would like to thank the organizers and participants of the 2024 Workshop of the AGC team of LIMOS at Saint-Jacques-d'Ambur for the relaxed yet inspiring atmosphere, where this research was initiated.

Florent Foucaud was supported by the ANR project GRALMECO (ANR-21-CE48-0004), the French government IDEX-ISITE initiative 16-IDEX-0001 (CAP 20-25), the International Research Center ``Innovation Transportation and Production Systems'' of the I-SITE CAP 20-25, and the CNRS IRL ReLaX.

\bibliographystyle{alpha}
\bibliography{references} 

\end{document}